\newcommand{\tool}{\textsc{Remedy}}
\newcommand{\distfunc}{{\it D}}
\newcommand{\average}[1]{\ensuremath{\langle#1\rangle} }
\newcommand{\queue}{{\it Q}}
\newcommand{\idt}{\hspace{0.5cm}}
\newcommand{\hole}{\Box}
\newcommand{\sstate}{{\it t}}
\newcommand{\any}{\cdot}
\newcommand{\encode}{\dashrightarrow}
\newcommand{\trace}{{\it tr}}
\newcommand{\run}{{\sc Run}}
\newcommand{\computePath}{{\sc Bps}}
\newcommand{\follow}{\textit{Fst}}
\newcommand{\Bracketing}{\Psi}
\newcommand{\toENFA}{{\it eNFAtr}}
\newcommand{\runtime}{\mathit{Time}}
\newcommand{\ltp}{RWS1U}
\newcommand{\removela}{\textit{rmla}}
\newcommand{\tapprox}{\alpha}
\newcommand{\supplementary}{Appendix} 
\newtheorem{definition}{Definition}[section]
\newtheorem{theorem}{Theorem}[section]
\newtheorem{lemma}{Lemma}[section]
\newtheorem{example}{Example}[section]
\newcommand{\stkout}[1]{\ifmmode\text{\sout{\ensuremath{#1}}}\else\sout{#1}\fi}
\newcommand{\tchanged}[2]{#2}
\def\BibTeX{{\rm B\kern-.05em{\sc i\kern-.025em b}\kern-.08em
    T\kern-.1667em\lower.7ex\hbox{E}\kern-.125emX}}
\newcommand{\xmark}{\ding{55}}
\newcommand{\togform}{{\mathit AtoG}}
\newcommand{\leaf}{{\sf leaf}}
\newcommand{\groot}{{\sf root}}
\newcommand{\indig}{d_{in}}
\newcommand{\outdig}{d_{out}}
\newcommand{\tail}{{\sf tail}}
\newcommand{\head}{{\sf head}}
\newcommand{\freshi}{\mathfrak{i}}
\newcommand{\fresh}{\mathit{id}}
\newcommand{\texcomment}[1]{}
\begin{document}

\title{Repairing DoS Vulnerability of Real-World Regexes}


\author{\IEEEauthorblockN{Nariyoshi Chida}
\IEEEauthorblockA{
\textit{NTT Corporation / Waseda University}\\
nariyoshichidamm@gmail.com}
\and
\IEEEauthorblockN{Tachio Terauchi}
\IEEEauthorblockA{
\textit{Waseda University}\\
terauchi@waseda.jp}
}

\maketitle

\pagestyle{plain}

\begin{abstract}
There has been much work on synthesizing and repairing regular expressions ({\em regexes} for short) from examples.  These {\em programming-by-example} (PBE) methods help the users write regexes by letting them reflect their intention by examples.  However, the existing methods may generate regexes whose matching may take super-linear time and are vulnerable to regex denial of service (ReDoS) attacks.  This paper presents the {\em first} PBE repair method that is guaranteed to generate only invulnerable regexes.  Importantly, our method can handle {\em real-world regexes} containing {\em lookarounds} and {\em backreferences}. 
Due to the extensions, the existing formal definitions of ReDoS vulnerabilities that only consider pure regexes are insufficient.  Therefore, we first give a novel {\em formal semantics and complexity of backtracking matching algorithms for real-world regexes}, and with them, give the {\em first formal definition of ReDoS vulnerability for real-world regexes}.
Next, we present a novel condition called {\em real-world strong 1-unambiguity} that is sufficient for guaranteeing the invulnerability of real-world regexes, and formalize the corresponding PBE repair problem.  
Finally, we present an algorithm that solves the repair problem. The algorithm builds on and extends the previous PBE methods to handle the real-world extensions and with constraints to enforce the real-world strong 1-unambiguity condition.
\end{abstract}

\begin{IEEEkeywords}
Real-world regexes, ReDoS, synthesis, repair
\end{IEEEkeywords}

\section{Introduction}
\label{sec:intro}
Regular expressions (regexes for short) have become an integral part of modern programming languages and software development, e.g., they are used as general purpose libraries~\cite{10.1145/2931037.2931073, RussCox}, for sanitizing user inputs~\cite{10.5555/2028067.2028068, 10.1145/2931037.2931050}, and extracting data from unstructured text~\cite{7374717, li-etal-2008-regular}.
Despite the widespread use of regexes in practice, it is an unfortunate fact that developers often 
write regexes which are vulnerable to {\em regex denial-of-service} (ReDoS) attacks in which attackers craft inputs that cause the regex matching algorithm to take super linear time~\cite{Davis:2018:IRE:3236024.3236027,ReDoS}.  ReDoS is a significant threat to our society due to the widespread use of regexes~\cite{Davis:2018:IRE:3236024.3236027,ReDoSIncident02, ReDoSIncident01, 217517}.
\tchanged{}{
While some regex engines offer mechanisms to limit their run time (directly by timeout or indirectly by limiting the number of backtrackings), determining a proper limit is often difficult, not to mention that vulnerable regexes may not even have any reasonable limits that can be assigned as they may struggle even on legitimate inputs.  Furthermore, such options are not available in many popular regex engines including those in the standard libraries of Python, Java, and Node.js.}

\tchanged{While}{To address the issue,} there has been much research on the topic of overcoming ReDoS vulnerability~\cite{10.1007/978-3-662-54580-5_1, Shen:2018:RCR:3238147.3238159,10.1007/978-3-319-40946-7_27, 10.1007/978-3-642-38631-2_11, SatoshiSugiyama2014,revealear}\tchanged{,}{.  However,} the previous works have focused mainly on the problem of detecting vulnerable regexes, and the problem of {\em repairing} them remains largely open.  As reported by Davis et al.~\cite{DavisReport, Davis:2018:IRE:3236024.3236027}, writing invulnerable regexes is a formidable task that developers often fail to achieve in practice.

Meanwhile, recent years have seen remarkable progress on {\em programming-by-example} (PBE) methods for synthesizing and repairing regexes~\cite{Alquezar94incrementalgrammatical,10.1145/3093335.2993244, 6994453, 7374717, 10.1145/3360565, DBLP:journals/corr/abs-1908-03316, FlashRegex}.
In these methods, a set of positive examples (strings to be accepted) and negative examples (strings to be rejected) are provided with the goal to synthesize a regex that correctly classify the examples, often with additional constraints to bias the synthesis toward ones syntactically close to the pre-repair regex~\cite{10.1145/3360565, FlashRegex}.
PBE methods have the salient advantage of easing the burden of writing correct regexes by letting the users reflect their intention by examples~\cite{10.1145/3093335.2993244, 6994453, 10.1145/3360565, FlashRegex}.
However, the existing PBE methods are not designed with resilience to ReDoS in mind and may generate vulnerable regexes.\footnote{\label{foot:flashregex} The only exception is the recent work by Li et al.~\cite{FlashRegex}, but they only handle pure regexes and also lack the guarantee to generate only invulnerable regexes (cf.~Section~\ref{sec:related}).}

In this paper, we rectify the situation by proposing the {\em first} PBE repair method that is {\em guaranteed to generate only invulnerable regexes}.
Importantly, our method can handle the so-called {\em real-world} regexes that have extensions such as {\em lookarounds}, {\em capturing groups}, and {\em backreferences}~\cite{masteringregex}.  

While previous works have investigated formal definitions of ReDoS vulnerability~\cite{10.1007/978-3-319-40946-7_27,SatoshiSugiyama2014,10.1007/978-3-662-54580-5_1}, they only address the pure regex fragment.  The overarching challenge in ReDoS vulnerabilities is to define the complexity of backtracking matching algorithm.  The previous works for pure regexes have used nondeterministic finite automata (NFA) to formalize the behavior of backtracking matching algorithms and its complexity.  Unfortunately, such a NFA-based definition is difficult for real-world regexes because the expressive power of real-world regexes is not regular~\cite{backreferenceisundeci}.  



Our first contribution is the first formal definition of ReDoS vulnerability for real-world regexes.  For this, we introduce a novel formal semantics of backtracking matching algorithm for real-world regexes and, by building on it, formally define the time complexity of backtracking matching algorithms for real-world regexes.  Also, we have discovered a subtle bug in a previous formal definition of ReDoS vulnerability for pure regexes~\cite{10.1007/978-3-662-54580-5_1} which can misclassify some vulnerable regexes as invulnerable (even for pure regexes).  Although the bug is fixable, this shows the subtlety of formalizing ReDoS vulnerability.

Our repair method ensures invulnerability by enforcing the novel {\em real-world strong 1-unambiguity} (\ltp{}) introduced in this paper.  \ltp{} is inspired by a notion for pure regexes called {\em strong 1-unambiguity}~\cite{10.1007/s00778-005-0169-1}, and can be considered as an extension of it to real-world regexes.  We show that \ltp{} is a sufficient condition for invulnerability, and formalize a PBE repair problem, {\em \ltp{} repair problem}, whose goal includes ensuring \ltp{}.  We prove that the \ltp{} repair problem is NP-hard.  We also show that a related notion for pure regexes called {\em 1-unambiguity} (also called {\em deterministic regex})~\cite{BRUGGEMANNKLEIN1998182,FlashRegex,10.1007/3-540-57273-2_45} is insufficient for guaranteeing invulnerability (even for pure regexes). 

Our third contribution is an algorithm for solving the \ltp{} repair problem.  Our algorithm builds on the previous PBE regex repair methods.  However, significant extensions are needed because {\em the previous methods neither support real-world regexes nor concern ReDoS vulnerability} (with the exception of \cite{FlashRegex} mentioned above).  A key step of the algorithm is generating SMT constraints that enforce both the \ltp{} condition and consistency with examples.  The latter is enforced by following our novel formal semantics of real-world regexes, and the former is enforced by using our novel {\em extended NFA translation} that is used to define \ltp{}.
We also adapt and extend the key techniques proposed for PBE regex synthesis and repair, such as the state space pruning technique by under- and over-approximations \cite{10.1145/3093335.2993244, 10.1145/3360565}, with the support for the real-world extensions and concerns for ReDoS vulnerability.

We have implemented a prototype of our algorithm in a tool called \tool{} (Regular Expression Modifier for Ensuring Deterministic propertY), and have experimented with the tool on a set of benchmarks of real-world regexes taken from~\cite{Davis:2018:IRE:3236024.3236027}.  The experimental results show that \tool{} was able to successfully repair non-trivial vulnerable regexes from a real-world data set. 

The contributions of the paper are summarized below.
\begin{itemize}
\item We initiate a study of ReDoS vulnerabilities for real-world regexes. To this end, we give a novel formal semantics and the time complexity of backtracking matching algorithms for real-world regexes, and with it, give the first formal definition of their ReDoS vulnerability.  We also show a subtle bug in a previous proposal for pure regexes~\cite{10.1007/978-3-662-54580-5_1}. (Section \ref{sec:regex})

\item We present the novel {\em real-world strong 1-unambiguity} (\ltp{}), and prove that the condition is sufficient for guaranteeing invulnerability for real-world regexes.  We define the \ltp{} repair problem and prove that the problem is NP-hard.  We also show that a related condition, 1-unambiguity (i.e., deterministic regex) for pure regexes~\cite{BRUGGEMANNKLEIN1998182,FlashRegex,10.1007/3-540-57273-2_45} is insufficient for ensuring invulnerability (even for pure regexes). (Section~\ref{sec:problem})

\item We give an algorithm for solving the \ltp{} repair problem that builds on and extends the previous PBE synthesis and repair methods.  Our algorithm extends the previous methods in two important ways: support for the real-world extensions and the incorporation of \ltp{} to enforce invulnerability. (Section~\ref{sec:algo})

\item We present an implementation of the algorithm in a tool called \tool{}, and present an evaluation of the tool on a set of real-world benchmarks. (Section~\ref{sec:eval})

\end{itemize}


\section{Overview}
\label{sec:example}
\if0
<(.*)>.*</¥1>を例にした方が良い attack string は < + ></ * x + >
修正は <([^>]*)>[^<]*</\1>
間の.*は本質じゃないことを明記すること( <(.*)></\1>だけでもvulnerable なので<(.*)>[^<]*</¥1>としてもvulnerable
XMLをparseしたい
<a>
<b></b>
<b></b>
</a>
見たいなのを<x></x>ごとにregexでparse
これ同じようなのだけど、こういう個人ブログ見たいなのって引っ張ってきていいのかな？ <- 良いらしい
http://blog.gtiwari333.com/2011/12/htmlxml-tag-parsing-using-regex-in-java.html
オートマトんの絵を入れたい

読んでいる側の気持ちになるとここを読み終わって手法の概要が理解できるともう何だか全てわかった気になってしまう ( のでここでこの手法独自のパートが説明されてないとこの手法なにが新しいの？と感じてしまう ( ASE論文を読んでいてそう思った
Davisさんの論文とか下のgithubの修正とか見せてsemantic equivalenceは実用上そこまで重要じゃないとかく？
http://cve.mitre.org/cgi-bin/cvekey.cgi?keyword=regex
\fi

We give an informal overview of our repair algorithm by an example.  To illustrate, we use the regex  \texttt{<($\any^*$)$_1$>$\any^*$</$\backslash$1>} which is inspired by the one posted in~\cite{GTsBlog}.
The regex is intended to accept a {\em non-nested} XML tag, i.e., a tag that appears as a leaf in an XML document. 
%
For example, it should accept \texttt{<li></li>} and \texttt{<body>text</body>}, but it should reject \texttt{<li></body>}, \texttt{<body><li></li></body>}, and \texttt{<body><li></body>}.
Unfortunately, the regex is both incorrect and vulnerable.  It is incorrect because it accepts tags such as \texttt{<body><li></body>}.  It is vulnerable because it takes quadratic time to match strings such as
$\texttt{<}\texttt{><}\texttt{><}\:\cdots\:\texttt{></>}$ where $\cdots$ repeats \texttt{><}.  Indeed, running a regex engine such as Python's {\em re} on the regex will get stuck on suitably long strings of the above form.  



\tool{} can help the user automatically repair a regex like this into a correct invulnerable one.
To this end, the user provides the regex to be repaired along with sets of positive and negative examples.  Positive examples are strings that should be accepted, and negative examples are those that should be rejected.
%

{\flushleft\bf Sampling examples.}
As usual in a PBE scenario~\cite{10.1145/3360565, DBLP:journals/corr/abs-1908-03316, FlashRegex, 10.1145/3093335.2993244}, the user prepares test inputs that consists of positive and negative examples to validate the correctness of the regex.  Such examples may be prepared afresh by the user~\cite{8952499} or obtained from an existing collection such as RegExLib~\cite{regexlib}.
Generally, the result of PBE depends on the example selection.  Therefore, if the user cannot obtain an intended repair, she adds or removes examples and re-runs the tool to improve the result.
We note that, for usability, PBE should only use a relatively small number of examples.
For the running example, suppose that the user prepared positive examples \texttt{<ab></ab>} and \texttt{<a>ab</a>} and negative examples \texttt{<a></b>},  \texttt{<a><b></b></a>}, and \texttt{<a><ab></a>}.

\tool{} explores a regex that is consistent with the examples and has {\it real-world strong 1-unambiguity} (\ltp{}).  Also, \tool{} looks for regexes that are syntactically close to the given one to bias toward synthesizing regexes that are close to the user's intention.  The assumption is that the given regex may not be correct but is close to the one user intended.

%
\ltp{} ensures the invulnerability of the synthesized regex.  Roughly, it makes the behavior of the matching algorithm {\em backtrack-free} thus ensuring linear running time.
The regex \texttt{<($\any^*$)$_1$>$\any^*$</$\backslash$1>} violates the \ltp{} condition because there are two ways to match \texttt{>} in the input string after the first \texttt{<} is matched, that is, it can match the first $\any^*$ or the first \texttt{>}.
Likewise, after \texttt{</} is matched, there are again two ways to match \texttt{>}: $\backslash1$ if it refers to a string that starts with \texttt{>} or the second \texttt{>}.
There are also multiple ways to match \texttt{<} in the input string.
%
Next, we describe the steps of the repair process.

{\flushleft\bf Generating templates.}
\tool{} generates {\it templates}, which are regexes containing {\em holes}.  Informally, a hole $\hole{}$ is a placeholder that is to be replaced with some concrete regex.  \tool{} starts with the initial template set to be the input regex \texttt{<($\any^*$)$_1$>$\any^*$</$\backslash$1>}.  Since the regex is vulnerable and does not satisfy the \ltp{} condition, \tool{} replaces the subexpressions with holes and expands the holes by replacing them with templates such as $\hole\hole$, $\hole | \hole$, $\hole^*$, $(\text{?=}\hole)$, and $\backslash i$.
After some iterations, we get the template \texttt{<($\hole{}_1^*$)$_1$>$\hole{}_2^*$</$\backslash$1>}.


{\flushleft\bf Searching assignments.}
Next, \tool{} checks if the template can be instantiated to a regex that satisfies the required conditions by replacing its holes with some sets of characters.
For this, \tool{} generates two types of constraints: {\em consistency-with-examples constraint} that ensures that the regex is consistent with the examples, and {\em linear-time constraint} that asserts \ltp{}.  
\tool{} looks for a regex that satisfies the constraints by using an SMT solver.  If the constraints are unsatisfiable, then \tool{} backtracks to explore more templates.  
We give the details of the constraint generation in Section \ref{subsec:overview}. 
\tool{} also performs {\em template pruning} to filter out templates that can be efficiently detected impossible to be instantiated to a regex that is consistent with the examples. 
The details are presented in Section \ref{subsec:overview}.

Using an SMT solver, \tool{} finds that the constraints are satisfiable, and replaces $\hole_1$ and $\hole_2$ with $[\verb|^|\texttt{>}]$ and $[\verb|^|\texttt{<}]$, respectively.  Here, $[\verb|^|a]$ is a regex that matches any character besides $a$.  Finally, \tool{} returns  \texttt{<([}\textasciicircum{}\texttt{>]}$^*$\texttt{)}$_1$\texttt{>[}\textasciicircum{}\texttt{<]}$^*$\texttt{</}$\backslash$\texttt{1>} as the repaired regex which is invulnerable and matches the user's intention.



\if0
We now explain the steps of the repair at a high level.
To begin with, \tool{} generates a {\it template}, which consists of a regular expression and {\it holes}, from the vulnerable regular expression.
Here, a hole $\hole{}$ is a placeholder that is replaced with a regular expression.
\tool{} starts with the exploring from the vulnerable regular expression $\any^*\any^*=\any^*$.
Since the vulnerable regular expression does not satisfy the linear time property, \tool{} replaces the subexpressions with holes and expand the holes by replacing it with templates such as $\hole\hole$, $\hole | \hole$, $\hole^*$, $(\text{?=}\hole)$, and $\backslash i$, iteratively.
After some iterations, we get a template $\hole_0^*\hole_1^*=\any^*$.


Next, \tool{} checks whether or not the template is a candidate of a solution.
More concretely, \tool{} tries to replace a hole in a template with a set of characters such that the regular expression obtained by replacing them is consistent with all positive and negative examples.
To find the assignment, we employ the idea of Pan et al.~\cite{10.1145/3360565}.
The idea determines that what characters are assigned to the sets of characters that are replaced with the holes in the template.
To determine that, the idea generates a constraint that  by simulating the matching

The holes in a template behaves like an any character $\any$ during the matching.
For example, let us consider the matching of $\hole^*\hole^*=\any^*$ on the example $ab=c$.
In this case, both holes can consume the characters $a$ and $b$.
Thus, \tool{} generate a constraint $v_1^a $, where $v_i^z$ means that the alphabet $z$ can be assigned to the $i$-th hole in a template.

treats a hole as a boolean variable, and construct a constraint using the positive and negative examples.
To understand how this works, we consider the construction of the constraints for the positive example $ab=c$ on the template $\hole{}_0^{*}\hole{}_1^{*}=\any^*$.
First, we replace the holes in the template with any character $\any$ that matches to an arbitrary character.
In this case, we obtain a regular expression $\any_0^*\any_1^*=\any^*$ from the template.
Here, we use a notation $\any_i$, where $i \in \{0,1\}$, to denote that the any character is obtained by replacing the hole $\hole_i$ .
Then, we tries to match using the regular expression on the positive example $ab=c$.
In this case, $\any_0$ and $\any_1$ can match to the characters $a$ and $b$.
XXX
treat the hole as an any character $\any$, and execute the matching.
$\hole_0^{*}\hole_1^{*}=\any^*$ 
$\any^{*}\any^{*}=\any^*$ 
captures characters that matches to the any characters replaced from the holes
if any character replaced from $\hole_0$ captures a character $a$, then we add the variable $v_0^a$ to the constraint.

The resulting constraint is $( v_1^a \land v_1^b ) \lor ( v_0^a \land v_1^b ) \lor ( v_0^a \land v_0^b )$.
In the same way, we construct the constraints for the other examples.
Finally, we check whether or not the constraint is satisfiable using Satisfiability Modulo Theories (SMT) solver.
If the constraint is satisfiable, then we replace the holes with sets of characters.

However, the original idea of Pan et al. does not work in our setting because it does not support modern extensions of real-world regular expressions and the linear time property.
Thus, we extend the construction of the constraints.
To this end, we define an encode function that encodes a template into a constraint.
In addition, we introduce an algorithm for constructing the constraints for the linear time property.
In this case, to ensure the linear time property, we add the constraints that for every character $a$, the boolean variables $v_i^a$, which are assigned to {\it true}, are at most one.
As a result, we get the constraint, which satisfies the above two conditions, using our algorithm and replace $\hole_0^*$ and $\hole_1^*$ with $\emptyset^*$, i.e., an expression that accepts an empty, and $[\verb|^|=]^*$, i.e., an expression that accepts a string that does not contain equals signs, respectively.
Finally, we get a repair of the vulnerable regular expression $[\verb|^|=]^*[=]\any^*$.
The language of the regular expression is the same as that of the original one.
Our tool synthesize the regular expression in 3.05 seconds.
\fi


\section{Real-World Regular Expressions}
\label{sec:regex}
\if0
リアルワールドで使われているregexには様々な方言があるが、
ここで対象とするのはjavascriptのregex, なぜなら
http://people.cs.vt.edu/~davisjam/downloads/publications/DavisCoghlanServantLee-EcosystemREDOS-ESECFSE18.pdf
で言われているようにJSにおけるReDoSの被害が特に深刻だから. 
また
GIGUS-likeにACExploitとautomatic repairで正規表現をどんどん修正するというツールってあり？
\fi
In this section, we give the definition of real-world regexes.  We also present the novel formal model of the backtracking matching algorithm for real-world regexes, and with it, we formally define their ReDoS vulnerability.


{\flushleft\bf Notations.}
Throughout this paper, we use the following notations.
We write $\Sigma$ for a finite alphabet; $a, b, c, \in \Sigma$ for a character; $w, x, y, z \in \Sigma^*$ for a sequence of characters; $\epsilon$ for the empty sequence; $r$ for a real-world regex; $\mathbb{N}$ for the set of natural numbers.
For the string $x = x[0]...x[n-1]$, its length is $|x| = n$.
For $0 \leq i \leq j < |x|$, the string $x[i]...x[j]$ is called a substring of $x$.
We write $x[i..j]$ for the substring.
In addition, we write $x[i..j)$ for the substring $x[i]...x[j-1]$.
We assume that $x[i..j) = \$$, where $\$ \notin \Sigma$, when $i < 0$ or $|x| < j$.
For $f$ a (partial) function,  $f[\alpha\mapsto\beta]$ denotes the (partial) function that maps $\alpha$ to $\beta$ and behaves as $f$ for all other arguments.  We write $f(\alpha) = \bot$ if $f$ is undefined at $\alpha$.
%
We define $\mathit{ite}(\mathit{true},A,B) = A$ and $\mathit{ite}(\mathit{false},A,B) = B$.


\subsection{Syntax and Informal Semantics}
\texcomment{
\begin{figure}[h]
\setlength\arraycolsep{2pt}
\[
\begin{array}{rcl}
r & ::= & [C] \mid \epsilon \mid rr \mid r|r \mid r^* \\
& \mid & (r)_i \mid \backslash i \mid \mbox{(?=$r$)} \mid \mbox{(?!$r$)} \mid \mbox{(?\textless=$x$)} \mid \mbox{(?\textless!$x$)} \\
\end{array}
\]
\caption{The syntax of real-world regular expressions}
\label{fig:syntax_of_regex}
\end{figure}
The syntax of {\em real-world regexes} (simply {\em regexes} or {\em expressions} henceforth) is given in Figure \ref{fig:syntax_of_regex}.
}

The syntax of {\em real-world regexes} (simply {\em regexes} or {\em expressions} henceforth) is given below:
\[
\begin{array}{rcl}
r & ::= & [C] \mid \epsilon \mid rr \mid r|r \mid r^* \\
& \mid & (r)_i \mid \backslash i \mid \mbox{(?=$r$)} \mid \mbox{(?!$r$)} \mid \mbox{(?\textless=$x$)} \mid \mbox{(?\textless!$x$)} \\
\end{array}
\]
Here, $C \subseteq \Sigma$ and $i \in \mathbb{N}$.
A set of characters $[C]$ exactly matches a character in $C$.
We sometimes write $a$ for $[\{a\}]$, and write $\any$ for $[\Sigma]$.  The semantics
of empty string $\epsilon$, concatenation $r_1 r_2$, union $r_1 | r_2$ and repetition $r^*$ are standard.  
Many convenient notations used in practice such as options, one-or-more repetitions, and interval quantifiers can be treated as syntactic sugars: $r? = r | \epsilon$, $r^+ = rr^*$, and $r\{i,j\} = r_1 ... r_i r_{i+1}? ... r_{j}?$ where $r_k = r$ for each $k \in \{1,\dots,j\}$.

The remaining constructs, that is, capturing groups, backreferences, (positive and negative) lookaheads and lookbehinds, comprise the real-world extensions.  In what follows, we will explain the semantics of the extended features informally in terms of the standard backtracking matching algorithm which attempts to match the given regex with the given (sub)string and backtracks when the attempt fails.  The formal definition is given later in the section.

A {\em capturing group} $(r)_i$ attempts to match $r$, and if successful, stores the matched substring in the storage identified by the index $i$.  Otherwise, the match fails and the algorithm backtracks.
A {\em backreference} $\backslash i$ refers to the substring matched to the corresponding capturing group $(r)_i$, and attempts to match the same substring if the capture had succeeded.  If the capture had not succeeded or the matching against the captured substring fails, then the algorithm backtracks.
For example, let us consider the regex \texttt{([0-9])$_1$([A-Z])$_2\backslash$1$\backslash$2}.
Here, \texttt{$\backslash$1} and \texttt{$\backslash$2} refer to the substring matched by \texttt{[0-9]} and \texttt{[A-Z]}, respectively.  The language represented by the regex is $\{ abab \mid a \in \texttt{[0-9]} \wedge b \in \texttt{[A-Z]} \}$.
Capturing groups in practice often do not have explicit indexes, but we write them here for clarity.
We assume without loss of generality that each capturing group always has a corresponding backreference and vice versa.
We assume that capturing group indexes are always distinct in a regex.

A {\em positive (resp. negative) lookahead} (?=$r$) (resp. (?!$r$)) attempts to match $r$ without any character consumption, and proceeds if the match succeeds (resp. fails) and backtracks otherwise.
A {\em fixed-string positive (resp. negative) lookbehind} (?\textless=$x$) (resp. (?\textless!$x$)) looks back (i.e., toward the left), attempts to match $x$ without any character consumption,
and proceeds if the match succeeds (resp. fails) or otherwise backtracks.
Fixed-string lookbehinds are supported by major regex engines such as those in Perl and Python~\cite{regexcookbook}.
Note that most regex engines do not support general lookbehinds~\cite{masteringregex}.

\subsection{Formal Semantics and Vulnerability}
\label{subsec:formalsemantics}

We now formally define the semantics of regexes.
Traditionally, the language of pure regexes is defined by
induction on the structure of the expressions.   However, such a definition would be difficult for real-world regexes because of the extended features and also unsuitable for formalizing vulnerability because the notion concerns the
complexity of backtracking matching algorithms.  To this end, we define the semantics by the matching relation ${\leadsto}$ that
models the behavior of backtracking matching algorithms.

A matching relation is of the form $(r, w, p, \Gamma) \leadsto{} \mathcal{N}$ where $p$ is a position on the string $w$ such that $0 \leq p \leq |w|$, $\Gamma$ is a function that maps each capturing group index to a string captured by the corresponding capturing group, and $\mathcal{N}$ is a set of matching results.  A {\em matching result} is a pair of a position and a capturing group function.
Roughly, $(r, w, p, \Gamma)$ is read:
a regex $r$ tries to match the string $w$ from the position $p$, with the information about capturing groups $\Gamma$.
For example, for the regex $a$ on the strings $a$ and $b$, the matching relations are $(a, a, 0, \emptyset{}) \leadsto{} \{ (1,\emptyset) \}$ and $(a, b, 0, \emptyset{}) \leadsto{} \emptyset{}$, respectively.
From these, the matching relation of the regex $(a|b)$ on the string $a$ is $((a|b), a, 0, \emptyset{}) \leadsto{} \{ (1,\emptyset) \}$.


\begin{figure}[t]\footnotesize
\infrule[Capturing group]
{(r, w, p, \Gamma) \leadsto{} \mathcal{N}}
{((r)_j, w, p, \Gamma) \leadsto{} \{ (p_i, \Gamma_i[ j \mapsto w[p..p_i) ]) \mid (p_i,\Gamma_i) \in \mathcal{N} \} }

\infrule[Backreference]
{\Gamma(i) \neq \bot \andalso (\Gamma(i),w,p,\Gamma) \leadsto{} \mathcal{N} }
{(\backslash i, w, p, \Gamma) \leadsto{} \mathcal{N} }

\infrule[Backreference Failure]
{ \Gamma(i) = \bot }
{(\backslash i, w, p, \Gamma) \leadsto{} \emptyset}


\infrule[Positive lookahead]
{(r, w, p, \Gamma) \leadsto{} \mathcal{N}  }
{(\text{(?=}r\text{)}, w, p, \Gamma) \leadsto{} \{ (p,\Gamma') \mid (\_,\Gamma') \in \mathcal{N} \} }

\infrule[Negative lookahead]
{(r, w, p, \Gamma) \leadsto{} \mathcal{N} \andalso{} \mathcal{N}' = \mathit{ite}(\mathcal{N} \neq \emptyset, \emptyset, \{(p,\Gamma)\})}
{(\text{(?!}r\text{)}, w, p, \Gamma) \leadsto{} \mathcal{N}' }

\infrule[Positive lookbehind]
{(x, w[p-|x|..p), 0, \Gamma) \leadsto{} \mathcal{N} \andalso \mathcal{N}' = \mathit{ite}(\mathcal{N} \neq \emptyset, \{(p,\Gamma)\}, \emptyset)}
{(\text{(?\textless=}x\text{)}, w, p, \Gamma) \leadsto{} \mathcal{N}' }

\infrule[Negative lookbehind]
{(x, w[p-|x|..p), 0, \Gamma) \leadsto{} \mathcal{N} \andalso \mathcal{N}' = \mathit{ite}(\mathcal{N} \neq \emptyset, \emptyset, \{(p,\Gamma)\})}
{(\text{(?\textless!}x\text{)}, w, p, \Gamma) \leadsto{} \mathcal{N}' }

\caption{Selected rules of the matching relation $\leadsto$}
\label{fig:semanticsv} 
\end{figure}

Figure~\ref{fig:semanticsv} shows some rules for deducing the matching relation.  For space, we show only the rules for handling
the extended features, deferring the full rules to the \supplementary{}.  The rules are inspired by \cite{DBLP:journals/scp/MedeirosMI14} who have given natural-semantics-style rules for pure regexes and parsing expression grammars.  However, to our knowledge, we are the first to give the formal semantics of real-world regexes in this style and use it to formalize vulnerability.

%
In the rule (\textsc{Capturing group}), we first get the matching result $\mathcal{N}$ from matching $w$ against $r$ at the current position $p$.  And for each matching result $(p_i,\Gamma_i) \in \mathcal{N}$ (if any), we record the matched substring $w[p..p_i)$ in the corresponding capturing group map $\Gamma_i$ at the index $i$.  The rule (\textsc{Backreference}) looks up the captured substring and tries to match it with the input at the current position.  The match fails if the corresponding capture has failed as stipulated by the rule (\textsc{Backreference Failure}).

In the rule (\textsc{Positive lookahead}), the expression $r$ is matched against the given string $w$ at the current position $p$ to obtain the matching results $\mathcal{N}$.  Then, for every match result $(p',\Gamma') \in \mathcal{N}$ (if any), we reset the position from $p'$ to $p$.  This models the behavior of lookaheads which does not consume the string.  The rule (\textsc{Negative lookahead}) is similar, except that we reset and proceed when there is no match.  Note that captures made inside of a negative lookahead cannot be referred outside of the lookahead, which agrees with the behavior of regex engines in practice.  The rules (\textsc{Positive lookbehind}) and (\textsc{Negative lookbehind}) for handling fixed-string lookbehinds are self-explanatory.

\begin{definition}[Language]
\normalfont
The {\em language} of a regex $r$ is defined as
$L(r) = \{ w \mid (r, w, 0, \emptyset) \leadsto \mathcal{N} \wedge \exists\Gamma.(|w|,\Gamma) \in \mathcal{N} \}$.
\end{definition}

We show some examples of matchings.  For brevity, we omit capturing group information from Examples \ref{ex:pr1} and \ref{ex:pr2} because it is not used there, i.e., it is always $\emptyset$.
\begin{example}
\label{ex:pr1}
\normalfont
The matching of the regex $(a^*)^*$ on the string $ab$ is as follows:
\vspace{-1em}
\begin{prooftree}
\small
\insertBetweenHyps{\hspace{-2pt}}
\AxiomC{$0 < |ab|$}
\AxiomC{$a \in \{a\}$}
\BinaryInfC{$(a, ab, 0) \leadsto{} \{1\}$}
\insertBetweenHyps{\hspace{-2pt}}
\AxiomC{$1 < |ab|$}
\AxiomC{$b \notin \{a\}$}
\BinaryInfC{$(a, ab, 1) \leadsto{} \emptyset$}
\UnaryInfC{$(a^*,ab,1) \leadsto{} \{1\}$}
\BinaryInfC{$(a^*,ab,0) \leadsto{} \{0,1\}$}
\insertBetweenHyps{\hspace{-2pt}}
\AxiomC{$1 < |ab|$}
\AxiomC{$b \notin \{a\}$}
\BinaryInfC{$(a,ab,1) \leadsto{} \emptyset$}
\UnaryInfC{$((a^*)^*, ab, 1) \leadsto{} \{1\}$}
\BinaryInfC{$((a^*)^*, ab, 0) \leadsto{} \{0, 1\}$}
\end{prooftree}
The regex rejects the string because $|ab| = 2 \notin \{0,1\}$.
\end{example}
\begin{example}
\label{ex:pr2}
\normalfont
The matching of $((\mbox{?=}a)^*)^*$ on $ab$ is:
\begin{prooftree}
\small
\insertBetweenHyps{\hspace{-2pt}}
\AxiomC{$0 < |ab|$}
\AxiomC{$a \in \{a\}$}
\BinaryInfC{$(a, ab, 0) \leadsto{} \{1\}$}
\UnaryInfC{$((\mbox{?=}a), ab, 0) \leadsto{} \{0\}$}
\UnaryInfC{$((\mbox{?=}a)^*, ab, 0) \leadsto{} \{0\}$}
\UnaryInfC{$((((\mbox{?=}a)^*)^*), ab, 0) \leadsto{} \{0\}$}
\end{prooftree}
The regex rejects the string because $|ab| = 2 \notin \{0\}$.
\end{example}

\begin{example}
\label{ex:pr3}
\normalfont
The matching of $(a^*)_1\backslash1$ on $aa$ is:
\[
\small
\infer{(a^*)_1\backslash1 \leadsto{} \{(0,\Gamma_0), (2,\Gamma_1)\} }
    { A & B_0 & B_1 & B_2}
\]
where $\Gamma_0 = \{(1,\epsilon)\}$, $\Gamma_1 = \{(1,a)\}$, $\Gamma_2 = \{(1,aa)\}$ and the subderivation
$A$ is:
\begin{prooftree}
\small
\insertBetweenHyps{\hspace{-2pt}}
\AxiomC{$0 < |aa|$}
\AxiomC{$a \in \{a\}$}
\BinaryInfC{$(a,aa,0,\emptyset)\leadsto{}\{(1,\emptyset)\}$}
\insertBetweenHyps{\hspace{-2pt}}
\insertBetweenHyps{\hspace{-2pt}}
\AxiomC{$C_0$}
\AxiomC{$C_1$}
\UnaryInfC{$(a^*,aa,2,\emptyset{}) \leadsto{} \emptyset{}$}
\BinaryInfC{$(a^*,aa,1,\emptyset) \leadsto{} \{(1,\emptyset),(2,\emptyset)\}$}
\BinaryInfC{$(a^*,aa,0,\emptyset) \leadsto{} \{(0,\emptyset), (1,\emptyset), (2,\emptyset)\}$}
\UnaryInfC{$((a^*)_1,aa,0,\emptyset{}) \leadsto{} \{(0,\Gamma_0),(1,\Gamma_1),(2,\Gamma_2)\}$}
\end{prooftree}
and the roots of the subderivations $B_0$, $B_1$, $B_2$, $C_0$, $C_1$ are, respectively, $(\backslash 1,aa,0,\Gamma_0) \leadsto \{(0,\Gamma_0)\}$, $(\backslash1,aa,1,\Gamma_1) \leadsto \{(2,\Gamma_1)\}$, $(\backslash1,aa,2,\Gamma_2) \leadsto \emptyset$,
$(a,aa,1,\emptyset{}) \leadsto{} \{(2,\emptyset{})\}$, $(a,aa,2,\emptyset{}) \leadsto{} \emptyset{}$.
The regex accepts the string as $(|aa|,\Gamma_1) \in \{(0,\Gamma_0), (2,\Gamma_1)\}$.




%
\end{example}


We define the {\em size} of the derivation a matching relation to be the number of nodes in the derivation tree.  Note that the size is well defined because our rules are deterministic.
\begin{definition}[Running time]
\normalfont
For a regex $r$ and a string $w$, we define the {\em running time} of the backtracking matching algorithm on $r$ and $w$, $\runtime(r,w)$, to be the size of the derivation of $(r, w, 0, \emptyset) \leadsto \mathcal{N}$.
\end{definition}
%
\begin{definition}[Vulnerable Regular Expressions]
\label{def:vulnerable}
\normalfont
We say that an expression $r$ is {\em vulnerable} if $\runtime(r,w) \notin O(|w|)$.
\end{definition}
Note that a regex $r$ is vulnerable iff there exist infinitely many strings $w_0$, $w_1$,\dots such that $\runtime(r,w_i)$ (for $i \in \mathbb{N}$) grows super-linearly in $|w_i|$. Such strings are often called {\em attack strings}.
For example, $(a^*)^*$ in Example~\ref{ex:pr1} 
and $(a^*)_1\backslash1$ in Example~\ref{ex:pr3} are vulnerable because there exist attack strings $\{ a^nb \mid n \in \mathbb{N} \}$ on which $(a^*)^*$ and $(a^*)_1\backslash1$ respectively take $\Omega(n!)$ and $\Omega(n^2)$ time.  Indeed, running an actual regex engine such as Python's \textit{re} on these regexes with these attack strings exhibits a super-linear behavior.  By contrast, $((\mbox{?=}a)^*)^*$ in Example~\ref{ex:pr2} takes $O(n)$ time on these strings and is in fact invulnerable.  

Our matching semantics captures the behavior of common backtracking matching algorithms used in most real regex engines, e.g., ones based on path traversal of some non-deterministic automaton~\cite{10.1007/978-3-662-54580-5_1, Shen:2018:RCR:3238147.3238159, revealear}. 
We remark that our formal semantics may be less efficient than an actual regex engine because it computes all possible runs without any optimization.  However, it is sound for defining invulnerability, and our repair algorithm synthesizes regexes that are invulnerable even with respect to the inefficient formal semantics.
This implies that if a pure regex is considered vulnerable according to the definition of vulnerability in \cite{10.1007/978-3-662-54580-5_1} then it is also considered vulnerable according to our definition.

It is worth noting that $(a^*)^*$ is incorrectly classified as invulnerable by~\cite{10.1007/978-3-662-54580-5_1}, both according to their formal definition of vulnerability and by their vulnerability detection tool.  Although the bug is fixable by adding $\epsilon$ transitions to their NFA-based definition in a certain way, this shows the subtlety of formalizing vulnerability.

\section{\ltp{} and Its Repair Problem}
\label{sec:problem}
\if0
リアルワールドで使われているregexには様々な方言があるが、
ここで対象とするのはjavascriptのregex, なぜなら
http://people.cs.vt.edu/~davisjam/downloads/publications/DavisCoghlanServantLee-EcosystemREDOS-ESECFSE18.pdf
で言われているようにJSにおけるReDoSの被害が特に深刻だから. 
また
GIGUS-likeにACExploitとautomatic repairで正規表現をどんどん修正するというツールってあり？
\fi

This section presents our PBE repair algorithm.  First, we define the novel notion of {\em real-world strong 1-unambiguity} (\ltp{}) and prove it to be sound for ensuring invulnerability (Section~\ref{subsec:lineartime}).  Then, we define {\em \ltp{} repair problem} to be the problem of synthesizing a regex that correctly classifies the given positive and negative examples, satisfies \ltp{}, and is syntactically close to the pre-repair regex (Section~\ref{subsec:repairproblem}).  We prove that the \ltp{} repair problem is NP-hard.  Section~\ref{sec:algo} presents an algorithm for solving the \ltp{} repair problem.

\subsection{Real-World Strong 1-Unambiguity}
\label{subsec:lineartime}
We begin by introducing some preliminary notions.
\begin{definition}[Bracketing]
\normalfont
The {\it bracketing} of $r$, $r^{[]}$, is obtained by inductively mapping each subexpression $s$ of $r$ to $[_i s ]_i$ where $i$ is a unique index.  Here, $[_i$ and $]_i$ are called {\em brackets} and are disjoint from the alphabet $\Sigma$ of $r$.
\end{definition}
Note that $r^{[]}$ is a regex over the alphabet $\Sigma \cup \Bracketing$, where $\Bracketing = \{ [_i, ]_i | i \in \mathbb{N}\}$.  We call $\Bracketing$ the {\em bracketing alphabet} of $r^{[]}$.
For example, for $r = ((a)^*)^*b$, the bracketing is 
\[
r^{[]} = [_1 [_2 ( [_3 ( [_4 a ]_4 )^* ]_3 )^* ]_2 [_5 b ]_5 ]_1
\]
with the bracketing alphabet $\{ [_i, ]_i \mid i \in \{ 1,2,3,4,5 \} \}$. 
\begin{definition}[Lookaround removal]
\label{def:removela}
\normalfont
The regex $r$ with its {\em lookarounds removed}, $\removela(r)$, is $r$ but with each of its lookaround replaced by $\epsilon$. 
\end{definition}

\begin{figure*}
\begin{center}
\small
\begin{eqnarray*}
\toENFA([C]) & = & (\{q_0,q_1\}, \{(q_0, a, q_1) \mid \forall a \in C \}, q_0, q_1 )\\
\toENFA(r_1r_2) & = & (Q_1 \cup Q_2, \delta_1 \cup \delta_2 \cup \{ (q_{n_1}, \epsilon, q_{0_2}) \}, q_{0_1}, q_{n_2} )
	\ \ \textit{where} \ \ (Q_1, \delta_1, q_{0_1}, q_{n_1}) = \toENFA(r_1) \textit{ and } (Q_2, \delta_2, q_{0_2}, q_{n_2}) = \toENFA(r_2)\\	
\toENFA(r_1 | r_2) & = & (Q_1 \cup Q_2 \cup \{ q_0, q_n \}, \delta_1 \cup \delta_2 \cup \{ (q_0, \epsilon, q_{0_1}), (q_0, \epsilon, q_{0_2}), (q_{n_1}, \epsilon, q_n), (q_{n_2}, \epsilon, q_n) \}, q_0, q_n)  \\
	&& \textit{where} \ \ (Q_1, \delta_1, q_{0_1}, q_{n_1}) = \toENFA(r_1) \textit{ and } (Q_2, \delta_2, q_{0_2}, q_{n_2}) = \toENFA(r_2)\\	
\toENFA(r^*) & = & (Q \cup \{ q_0, q_n \}, \delta \cup \{ (q_0, \epsilon,q_{0_1}), (q_0, \epsilon, q_n), (q_{n_1}, \epsilon, q_n), (q_{n_1}, \epsilon, q_{0_1}) \}, q_0, q_n) \ \ \textit{where} \ \ (Q, \delta, q_{0_1}, q_{n_1}) = \toENFA(r)\\
\toENFA((r)_i) & = & \toENFA(r) \textit{ and } \mathcal{I} = \mathcal{I}[i \mapsto q_0] \ \ \textit{where} \ \ \toENFA(r) = (\_,\_,q_0,\_) \\
\toENFA(\backslash i) & = & (\{q_0, q_1\}, \{ (q_0, a, q_1) ~|~ a \in \text{\follow($\mathcal{I}(i)$)$^{\natural}$} \} \cup \{ (q_0, \epsilon, q_1) ~|~ (r)_i \mbox{ and } \epsilon \in L(r) \}, q_0, q_1 )
\end{eqnarray*}
\end{center}
\caption{The extended NFA translation.}
\label{fig:toenfa}
\end{figure*}

A {\em non-deterministic automaton} (NFA) over an alphabet $\Sigma$ is a tuple $(Q, \delta, q_0, q_n)$ where $Q$ is a finite set of states, $\delta \subseteq Q \times (\Sigma\cup\{ \epsilon \}) \times Q$ is the transition relation, $q_0$ is the initial state, and $q_n$ is the accepting state. 
\begin{definition}[$\toENFA$]
\label{def:toenfa}
\normalfont
For a lookaround-free regex $r$ over $\Sigma$, its {\em extended NFA translation}, $\toENFA(r^{[]})$, is a NFA over $\Sigma \cup \Bracketing$ defined by the rules shown in Figure~\ref{fig:toenfa} where $\Bracketing$ is the bracketing alphabet of $r^{[]}$.
\end{definition}
In the translation shown in Figure~\ref{fig:toenfa}, we maintain a global map $\mathcal{I}$ from capturing group indexes to states.  $\mathcal{I}$ is initially empty and is updated whenever a capturing group $(r)_i$ is encountered so that $\mathcal{I}(i)$ is set to be the initial state of the NFA constructed from $r$.  
$\follow(q)$ is defined as follows: $\rho a \in \follow(q)$ iff $\rho \in \Bracketing^*$, $a \in \Sigma$, and there is a $\rho a$-labeled path from $q$.  We define $\rho a^{\natural} = a$, and \follow$(q)^{\natural}$ = $\{ a \mid \rho a \in \text{\follow}(q) \}$.
Roughly, \follow{}$(q)^{\natural}$ is the set of characters that $r$ can reach without any character consumption where $q$ is the initial state of $\toENFA{}(r)$. For example, for $r = ab|ac|d^*ef$, \follow$(q)^{\natural} = \{ a,d,e \}$ where $q$ is the initial state of $\toENFA{}(r)$.

Our extended NFA translation may be seen as the standard Thompson's translation for pure regexes~\cite{10.1145/363347.363387,DBLP:books/daglib/0086373} extended to real-world regexes.  However, unlike the Thompson's translation, it does not preserve the semantics (necessarily not so because real-world regexes are not regular even without lookarounds).  Instead, we use the translation only for the purpose of defining \ltp{}.  
%
For a pair of states $q$ and $q'$ of a NFA, we write
$\textit{paths}(q,q')$ for the set of strings that take the NFA from $q$ to $q'$.
\begin{definition}[$\computePath$]
\normalfont
For $r$ a regex over $\Sigma$,  $\Bracketing$ the bracketing alphabet of $r^{[]}$, $[_i\: \in \Bracketing$, $a \in \Sigma$, and $(\_,\delta,\_,\_) = \toENFA(r^{[]})$, we define $\computePath(r,[_i, a)$ to be the set below:
\[
\{ \rho \in \Bracketing^* \mid \exists (q_j, [_i, \_), (q_l, a, \_) \in \delta. \rho \in paths(q_j,q_l) \}.
\]

\end{definition}
Roughly, $\computePath(r,[_i, a)$ are the sequences of brackets appearing in paths from
the unique edge labeled $[_i$ to an edge labeled $a$ in the extended NFA translation of $r$.

\begin{example}
\label{ex:astarstarenfa}
\normalfont
\begin{figure}[t]
  \centering
    \includegraphics[clip,width=7.0cm]{./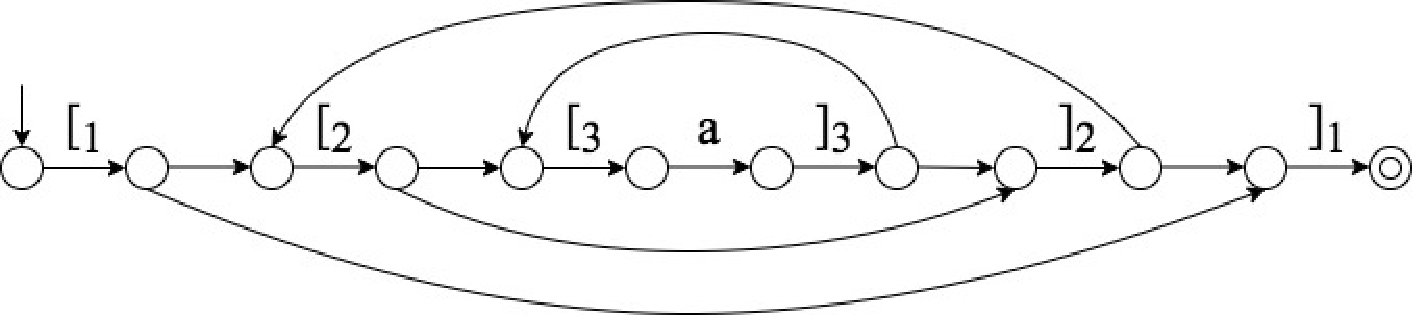}
     \caption{The extended NFA translation of $(a^*)^*$.}
    \label{fig:enfaexample} 
\end{figure} 
Figure~\ref{fig:enfaexample} shows the extended NFA translation of $(a^*)^*$ where unlabeled edges denote $\epsilon$ transitions.  Note that $\computePath((a^*)^*, [_1, a) = \{ [_1([_2 ]_2)^n [_2 [_3 \;\mid n \in \mathbb{N} \}$.
\end{example}

\begin{definition}[\ltp{}]
\label{def:ltp}
\normalfont
We say that a regex $r$ satisfies {\em real-world strong 1-unambiguity} (\ltp{}) if (1) $|\computePath(\removela(r), [_i, a)| \leq 1$ for all $a \in \Sigma$ and $[_i\: \in \Bracketing$ where $\Bracketing$ is the bracketing alphabet of $\removela(r)^{[]}$ and (2) lookarounds in $r$ do not contain repetitions and backreferences.
\end{definition}
Roughly, condition (1) ensures that the matching algorithm can determine which subexpression to match next by looking at the next character in the input string.  It therefore rules out the need for backtracking.  The condition is inspired by a notion called {\em strong 1-unambiguity} for pure regexes~\cite{10.1007/s00778-005-0169-1} and can be seen as an extension of it to regexes containing backreferences.
We do not impose the condition in lookarounds, because the condition prohibits some important use patterns of them.  For instance, it will preclude any meaningful use of a positive lookahead because if the lookahead succeeds then the subexpression immediately following the lookahead must match the same string.  Therefore, for lookarounds, we impose the condition stipulated by (2).  The condition prohibits repetitions and backreferences to appear in a lookaround and ensures that the matching within a lookaround finishes in constant time.  Therefore, (1) and (2) combined guarantee that the overall matching finishes in linear time.
\begin{example}
\normalfont
Recall $r_1 = (a^*)^*$, $r_2 = ((\mbox{?=}a)^*)^*$, $r_3 = (a^*)_1 \backslash1$ from Examples~\ref{ex:pr1}, \ref{ex:pr2}, \ref{ex:pr3}.
The regex $r_1$ does not satisfy the \ltp{} condition because as shown in Example~\ref{ex:astarstarenfa}, $|\computePath(r_1, [_1, a)| = \aleph_0 > 1$.  Also, $r_3$ does not satisfy the \ltp{} condition because
$\computePath(r_3, [_1, a) = \{ [_1[_2[_3[_4, [_1[_2[_3]_3]_2[_5 \}$ where 
\[
r_3^{[]} = [_1 [_2 ( [_3 ([_4 a ]_4)^* ]_3 )_1 ]_2\: [_5 \backslash 1 ]_5 ]_1,
\]
and so $|\computePath(r_3, [_1, a)| = 2 > 1$.  By contrast, $r_2$ (trivially) satisfies the \ltp{} condition because $\removela(r_2) = (\epsilon^*)^*$ which contains no characters. 
\end{example}

\begin{example}
\normalfont
The regex $r_4 = a^*b^*$ satisfies the \ltp{} condition because $|\computePath(r_4, [_i, a)| = |\computePath(r_4, [_i, b)| = 1$ for $i \in \{1,2,3\}$, and $|\computePath(r_4, [_i, a)| = 0$ and $|\computePath(r_4, [_i, b)| = 1$ for $i \in \{4,5\}$, where 
$r_4^{[]} = [_1[_2([_3 a ]_3)^*]_2\: [_4([_5 b ]_4)^*]_5 ]_1$.  The regex $r_5 = ((\mbox{?=}\any^*)\any)^*$  does not satisfy the \ltp{} condition because the positive lookahead contains a repetition, violating condition (2).
\end{example}

We show that \ltp{} is a sufficient condition for invulnerability.
\begin{theorem}
\label{theo:numstlin}
A regex that satisfies \ltp{} is invulnerable.
\end{theorem}
The proof appears in the \supplementary{}.
%
We remark that while \ltp{} is a sufficient condition, it is not a necessary condition for invulnerability.  For example, $a|aa$ is invulnerable but does not satisfy \ltp{}.  

Finally, we note that a related notion called {\em 1-unambiguity} for pure regexes (also called {\em deterministic regexes})~\cite{BRUGGEMANNKLEIN1998182,FlashRegex,10.1007/3-540-57273-2_45} is insufficient for guaranteeing invulnerability (even for pure regexes).  For example, $(a^*)^*$ is 1-unambiguous, because any character occurs at most once, but it is vulnerable as shown in Section~\ref{subsec:formalsemantics}\footnote{Further details are in Appendix~\ref{appendix:insufficiency_of_flashregex}.}.


\subsection{Repair Problem}
\label{subsec:repairproblem}
In this section, we define the \ltp{} repair problem.
First, we adapt the notion of {\em distance} between regexes from a recent work on
PBE regex repair~\cite{10.1145/3360565}.  In what follows, a regex is identified with its {\em abstract syntax tree} (AST) representation.  For an AST $r$, we define its {\em size}, $|r|$, to be the number of nodes of $r$.
\begin{definition}[Distance]
\label{def:dist}
\normalfont
For non-overlapping subtrees $r_1$, \dots, $r_n$ of a regex $r$, an {\it edit} $r[r_1'/r_1, \cdots, r_n'/r_n]$ replaces each $r_i$ with $r_i'$.
The {\em cost} of the edit is $\sum_{i \in \{1,\dots,n\}} |r_i| + |r_i'|$. The {\em distance} between  $r_1$ and $r_2$, $\distfunc{}(r_1,r_2)$, is the minimum cost of an edit that transforms $r_1$ to $r_2$.
\end{definition}
For example, $\distfunc{}(a|b|c,d|c) = 4$, which is realized by the edit that replaces $a|b$ by $d$.
We now define the repair problem.
\begin{definition}[\ltp{} Repair Problem]
\label{def:ltprepair}
\normalfont
Given a regex $r_1$, a finite set of {\em positive examples} $P \subseteq \Sigma^{*}$, and a finite set of {\em negative examples} $N \subseteq \Sigma^{*}$ where $P \cap N = \emptyset$, the {\em real-world strong 1-unambiguity repair problem} ({\em \ltp{} repair problem}) is the problem of synthesizing $r_2$ such that (1) $r_2$ satisfies \ltp{}, (2) $P \subseteq L(r_2)$, (3) $N \cap L(r_2) = \emptyset$, and (4) $\distfunc{}(r_1,r_2) \leq \distfunc{}(r_1,r_3)$ for any regex $r_3$ satisfying (1)-(3).
\end{definition}
Condition (1) guarantees that the repaired regex $r_2$ is invulnerable.  Conditions (2) and (3) assert that $r_2$ correctly classifies the examples.  Condition (4) says that $r_2$ is syntactically close to the original regex $r_1$.   

We note that the repair problem is easy without the closeness condition (4): one can construct an invulnerable regex that accepts just $P$ (or $\Sigma^*\setminus N$) in time linear in $\sum_{w \in P} |w|$ (or $\sum_{w \in N} |w|$).  However, such a regex is unlikely to be one intended by the user, that is, it suffers from {\em overfitting}.  Condition (4) is an important ingredient of a PBE synthesis and repair that biases the solution toward the intended one.  The assumption is that the given regex may not be quite correct but is close to the one user intended.


\if0
To facilitate the presentation of our proof, we consider the following generalization of $\textsc{ExactCover}$.
\begin{definition}[Exact Cover with $k$ sets]
Given a finite set $\mathcal{U}$ and a set of sets $\mathcal{S} \subset \mathcal{P}(\mathcal{U})$,
is there an exact cover $\mathcal{S'} \subseteq \mathcal{S}$ such that the size of $\mathcal{S'}$ is $k$, i.e., $|\mathcal{S'}| = k$, and for every $i \in U$, there is a unique $S \in \mathcal{S'}$ such that $i \in S$.
\end{definition}

Note that exact cover with $k$ sets is also NP-hard (in fact, NP-complete) because there is a trivial reduction from \textsc{ExactCover} that calls exact cover with $k$ sets for each $k \in \{1,\dots, |\mathcal{S}|\}$.
\fi


We show that the \ltp{} repair problem is NP-hard by a reduction from \textsc{ExactCover} which is NP-complete \cite{Karp1972}.  More formally, we consider the decision problem version of the \ltp{} repair problem in which we are asked if there is a repair $r_2$ of $r_1$ satisfying conditions (1)-(3) and $\distfunc{}(r_1,r_2) \leq k$ for some given $k \in \mathbb{N}$.  Note that the decision problem is no harder than the original repair problem because the solution to the repair problem can be used to solve the decision problem.
\begin{theorem}
\label{theo:nphardness}
The \ltp{} repair problem is NP-hard.
\end{theorem}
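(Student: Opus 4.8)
The plan is to reduce \textsc{ExactCover} --- NP-complete by~\cite{Karp1972} --- to the decision version of the \ltp{} repair problem, i.e.\ the problem of deciding, given $r_1$, $P$, $N$, and $k \in \mathbb{N}$, whether there is a repair $r_2$ of $r_1$ satisfying conditions (1)--(3) of Definition~\ref{def:ltprepair} with $\distfunc{}(r_1,r_2) \leq k$. Given an instance $(\mathcal{U},\mathcal{S})$ with $\mathcal{U} = \{1,\dots,n\}$ and $\mathcal{S} = \{S_1,\dots,S_m\}$, I would first dispose of the trivial cases ($\bigcup_j S_j \neq \mathcal{U}$, or $\mathcal{U} \in \mathcal{S}$, or $\mathcal{U} = \emptyset$), and then build an instance over an alphabet $\Sigma$ containing one symbol $a_i$ per element $i \in \mathcal{U}$ (plus a few auxiliary symbols used to separate languages, see below). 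The source expression is a disjunction of one \emph{gadget} per set, $r_1 = g_1 \mid g_2 \mid \cdots \mid g_m$, where $g_j$ is built so that its set of possible first characters is exactly $\{a_i \mid i \in S_j\}$; concretely $g_j = [\,\{a_i \mid i \in S_j\}\,]^{*}$, possibly with an auxiliary separator symbol interleaved. The positive examples are $P = \{a_i \mid i \in \mathcal{U}\}$, one short string per element --- so $P \subseteq L(r_1)$ holds precisely because $\mathcal{S}$ covers $\mathcal{U}$ --- and the negative examples $N$ and the budget $k$ are chosen so that a cheap \ltp{} repair is forced to correspond to an exact cover, as described below.

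The combinatorial engine of the reduction is the determinism clause~(1) of the \ltp{}. For $r_1$ above, and more importantly for any $r_2$ obtained from $r_1$ by \emph{disabling} some gadgets --- i.e.\ replacing the character class inside $g_j$ by the empty class --- one reads off from Fig.~\ref{fig:toenfa} that $|\computePath{}(r_2^{[]}, [_1, a_i)|$ equals the number of \emph{retained} gadgets $g_j$ with $i \in S_j$, where $[_1$ is the outermost bracket of $r_2^{[]}$. Hence $r_2$ satisfies the \ltp{} iff every element lies in at most one retained $S_j$, while $P \subseteq L(r_2)$ forces every element to lie in at least one retained $S_j$; so the retained sets form an exact cover of $\mathcal{U}$. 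Conversely, given an exact cover $\mathcal{S}' \subseteq \mathcal{S}$, disabling exactly the gadgets outside $\mathcal{S}'$ is a single edit of cost $2\,|\mathcal{S} \setminus \mathcal{S}'|$ that yields an \ltp{}-satisfying $r_2$ with $L(r_2) \subseteq L(r_1)$ (so $N \cap L(r_2) = \emptyset$ for free), $P \subseteq L(r_2)$, and $\distfunc{}(r_1,r_2) = 2\,|\mathcal{S} \setminus \mathcal{S}'| \leq 2(m-1)$. Setting $k = 2(m-1)$ then makes a valid cheap repair exist whenever an exact cover exists. The construction is clearly polynomial-time, and $P \subseteq L(r_1)$, $N \cap L(r_1) = \emptyset$ hold by design, meeting the preconditions of Definition~\ref{def:ltprepair}.

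What remains --- and what I expect to be the main obstacle --- is the \emph{soundness} direction: if \emph{some} \ltp{} repair $r_2$ of $r_1$ has $\distfunc{}(r_1,r_2) \leq k$, then $(\mathcal{U},\mathcal{S})$ has an exact cover. The threat is a ``clever'' repair that restructures $r_1$ instead of merely disabling gadgets --- most dangerously, replacing a sub-disjunction $g_j \mid g_{j'}$ by a single $[\,S_j \cup S_{j'}\,]^{*}$, which restores determinism while still ``covering'' $S_j$ and $S_{j'}$, and would correspond to covering $\mathcal{U}$ by \emph{unions} of members of $\mathcal{S}$ rather than by members of $\mathcal{S}$. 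Blocking this is the role of $N$: for each relevant pair it must contain a short string separating $L(g_j) \cup L(g_{j'})$ from $L([\,S_j \cup S_{j'}\,]^{*})$ that nonetheless lies outside $L(r_1)$. Guaranteeing that such separators always exist (e.g.\ by interleaving an auxiliary separator symbol into every gadget, and/or by first reducing \textsc{ExactCover} to a suitably restricted variant so that a separator is never spuriously accepted by a third gadget), together with a case analysis over the edits that fit in the budget $k$ showing that they can only ever realize a disabling repair, is the technical crux. Once this soundness lemma is established the reduction is complete, and NP-hardness of the decision problem --- hence of the \ltp{} repair problem --- follows.
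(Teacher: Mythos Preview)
Your overall plan---reduce from \textsc{ExactCover}, let the determinism clause of the \ltp{} force the first-character sets of the surviving gadgets to be pairwise disjoint, and read off a cover---is the paper's as well. The gap is in soundness, and it is not the threat you flag. You worry about \emph{merging} gadgets and propose to block that with negative examples; the more basic threat is \emph{shrinking} a character class $[S_j]$ to a proper nonempty subset $[S_j']$ at cost~$2$, which only shrinks the language and therefore cannot be excluded by any admissible $N\subseteq\Sigma^{*}\setminus L(r_1)$. Concretely, take $\mathcal{U}=\{1,2,3\}$ and $\mathcal{S}=\{\{1,2\},\{2,3\},\{1,3\}\}$ (no exact cover). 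With your $r_1=[12]^{*}\mid[23]^{*}\mid[13]^{*}$, $P=\{1,2,3\}$, and budget $2(m-1)=4$, the repair $r_2=[1]^{*}\mid[23]^{*}\mid[\emptyset]^{*}$ costs exactly~$4$, satisfies the \ltp{}, accepts all of $P$, and has $L(r_2)\subseteq L(r_1)$---valid for \emph{every} choice of $N$. Interleaved separator symbols do not help, since shrinking still only shrinks the language. Structurally, your gadgets encode ``partition $\mathcal{U}$ by \emph{subsets} of the $S_j$'' (always possible once $\bigcup_j S_j=\mathcal{U}$), not ``partition by \emph{members} of $\mathcal{S}$.''

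The paper avoids this not via $N$ (it takes $N=\emptyset$) but by rigidifying each branch so that only the intended disable edits fit the budget. Branch~$i$ is $\epsilon\,(\mbox{?=}[S_i])^{2k}\,(\epsilon)_{2i-1}\,[S_i]\,(\epsilon)_{2i}$, and after the union comes a concatenation $r_{12}$ of $2k$-fold repeated backreference checks $((\mbox{?!}\backslash 2i-1)\mid(\mbox{?=}\backslash 2i-1\,\backslash 2i))^{2k}$. The $2k$-fold padding makes the lookaheads and the checks too expensive to touch within budget $2k$; the lookaheads pin $[S_i]$ from above and the capture/backreference checks are meant to pin it from below, leaving the head $\epsilon$ of each branch as the only affordable edit target. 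Flipping it to $[\emptyset]$ disables that branch, and the retained branches then yield an exact cover. The ingredient missing from your plan is thus a cost-padding device that rules out all non-disable edits within the budget; the paper realizes it using the real-world extensions (lookarounds, capturing groups, backreferences), which is also why the hardness argument genuinely exercises those features rather than staying within pure regular expressions.
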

The proof appears in the \supplementary{}.



\section{Repair Algorithm}
\label{sec:algo}

\if0
\begin{algorithm}[t]
\caption{The repair algorithm}
\label{alg:repair}
\begin{algorithmic}[1]
\REQUIRE regex $r$, positive examples $P$, negative examples $N$
\ENSURE a \ltp{} regex that is consistent with $P$ and $N$
\STATE {\bf function} {\sf Repair}($r$, $P$, $N$)
\STATE \idt{} \queue{} $ \leftarrow $ \{ $r$ \}
{\STATE \idt{} {\bf while}  \queue{} is not empty {\bf do}}
\STATE \idt{}\idt{} \sstate{} $\leftarrow$ \queue{}{\sf .pop()}
\STATE \idt{}\idt{} {\bf if} $P \subseteq L(\sstate{}_{\top})$ {\bf and} $N \cap L(\sstate{}_{\bot}) = \emptyset$ {\bf then}
\STATE \idt{}\idt{}\idt{} $\Phi$ $\leftarrow$ {\sf getInvulnerableConstraint}(\sstate{}}, $P$, $N$)
\STATE \idt{}\idt{}\idt{} {\bf if} $\Phi$ is satisfiable {\bf then}
\STATE \idt{}\idt{}\idt{}\idt{} {\bf return} {\sf solution(\sstate{}}, $\Phi${\sf)}
\STATE \idt{}\idt{}\idt{} \queue{}{\sf.push(expandHoles(\sstate{}))}
\STATE \idt{}\idt{} \queue{\sf .push(addHoles(\sstate{}))}
\end{algorithmic}
\end{algorithm}
\fi
\begin{algorithm}[t]
\caption{The repair algorithm}
\label{alg:repair}
\begin{algorithmic}[1]
\REQUIRE regex $r$, positive examples $P$, negative examples $N$
\ENSURE a \ltp{} regex that is consistent with $P$ and $N$
\STATE \queue{} $ \leftarrow $ \{ $r$ \}
{\STATE {\bf while}  \queue{} is not empty {\bf do}}
\STATE \idt{} \sstate{} $\leftarrow$ \queue{}{\sf .pop()}
\STATE \idt{} {\bf if} $P \subseteq L(\sstate{}_{\top})$ {\bf and} $N \cap L(\sstate{}_{\bot}) = \emptyset$ {\bf then}
\STATE \idt{}\idt{} $\Phi$ $\leftarrow$ {\sf getInvulnerableConstraint}(\sstate{}, $P$, $N$)
\STATE \idt{}\idt{} {\bf if} $\Phi$ is satisfiable {\bf then}
\STATE \idt{}\idt{}\idt{} {\bf return} {\sf solution(\sstate{}}, $\Phi${\sf)}
\STATE \idt{}\idt{} \queue{}{\sf.push(expandHoles(\sstate{}))}
\STATE \idt{} \queue{\sf .push(addHoles(\sstate{}))}
\end{algorithmic}
\end{algorithm}
In this section, we describe the details of our PBE repair algorithm.
As discussed in Section~\ref{sec:example}, our algorithm builds on the previous approaches that use template-based search with search pruning~\cite{10.1145/3093335.2993244, 10.1145/3360565} and the SMT-based constraint solving to find a solution within the given candidate template~\cite{10.1145/3360565}.  
Our algorithm extends the constraint generations and the pruning techniques of the previous approaches with the support for real-world extensions and the assertion of \ltp{} to ensure invulnerability.
We give the overview of the repair algorithm in Section~\ref{subsec:overview}.  The details of the constraint generation is given in Section~\ref{subsec:genconst}.


\subsection{Algorithm Overview}
\label{subsec:overview}
Algorithm \ref{alg:repair} shows the high-level structure of the repair algorithm.
The algorithm takes a regex $r$, a set of positive examples $P$, and a set of negative examples $N$ as input.
Its output is a regex that satisfies the \ltp{} condition and is consistent with $P$ and $N$.
At a high level, our algorithm consists of the following four key components.

{\flushleft\bf Generate the initial template.}
The priority queue \queue{} maintains regex templates.  A {\em regex template} \sstate{} is a regex that may contain a {\em hole} $\hole{}$ denoting a placeholder that is to be replaced by a concrete regex.  Its syntax is formally the extension of that of regexes (cf.~Section~\ref{sec:regex}) and is defined by: $r ::= \cdots \mid \hole{}$.  To distinguish, we will use $\sstate{}$ to range over regex templates and reserve $r$ for concrete regexes.


The queue \queue{} is initialized by pushing the input regex (line 1).
\queue{} ranks its elements by the distance defined in Section \ref{sec:problem} so that templates closer to the input regex are placed before. 
Due to this, \tool{} outputs a regex that satisfies condition (4) of \ltp{} repair problem, i.e., the regex is minimal. 


{\flushleft\bf Pruning by approximations.}
The algorithm next retrieves and removes a template $\sstate$ from the head of \queue{} (line 3), and applies the {\it feasibility check} to the template (line 4).
The feasibility check is introduced by~\cite{10.1145/3093335.2993244} for pure regexes.  It is known to substantially reduce the search space and is also used in subsequent works on PBE regex synthesis and repair~\cite{10.1145/3360565, DBLP:journals/corr/abs-1908-03316}.  We extend the idea with the support for the real-world features.

The over- and under-approximation $\sstate{}_\top$ and $\sstate{}_\bot$ are built to satisfy the properties $L(r') \subseteq L(\sstate{}_\top)$ and $L(\sstate{}_\bot) \subseteq L(r')$ for any regex $r'$ obtainable by filling the holes of $\sstate$.  If $P \nsubseteq L(\sstate{}_\top)$ or $N \cap L(\sstate{}_\bot) \not= \emptyset$, then there is no way to get a regex consistent with $P$ and $N$ from the template, and thus we safely discard the template from the search.

The approximations are built by filling each hole in $\sstate{}$ with either 
$\any^*$ or $[\emptyset]$ based on whether an under- or over- approximation is to be made and whether the hole appears in even or odd number of negative lookarounds.  Let $\overline{\any^*} = [\emptyset]$ and $\overline{[\emptyset]} = \any^*$.  Then, $\sstate{}_\top = \tapprox(\sstate{},\any^*)$ and $\sstate{}_\bot = \tapprox(\sstate{},[\emptyset])$ where $\tapprox(t,r)$ is inductively defined as follows:
\[
\setlength\arraycolsep{2pt}
\begin{array}{rclcrcl}
\tapprox(\hole,r) & = & r & &\tapprox([C],r) & = & [C] \\
\tapprox(\sstate{}_1\sstate{}_2,r) & = & \tapprox(\sstate{}_1,r)\tapprox(\sstate{}_2,r) & & \tapprox(\epsilon,r) & = & \epsilon \\
\tapprox(\sstate{}_1|\sstate{}_2,r) & = & \tapprox(\sstate{}_1,r)|\tapprox(\sstate{}_2,r) & & \tapprox(\sstate{}^*,r) & = & \tapprox(\sstate{},r)^* \\
\tapprox((\sstate{})_i,r) & = & (\tapprox(\sstate{},r))_i & & \tapprox(\backslash i,r) & = & \backslash i \\
\tapprox(\mbox{(?=$\sstate{}$)},r) & = & \mbox{(?=$\tapprox(\sstate{},r)$)} & &\tapprox(\mbox{(?!$\sstate{}$)},r) & = & \mbox{(?!$\tapprox(\sstate{},\overline{r})$)} \\
\tapprox(\mbox{(?\textless=$x$)},r) & = & \mbox{(?\textless=$x$)} & & \tapprox(\mbox{(?\textless!$x$)},r) & = & \mbox{(?\textless!$x$)} 
\end{array}
\]

{\flushleft\bf Searching assignments by constraints solving.}
If the feasibility check passes, the algorithm decides if the template can be instantiated into a regex that is consistent with the examples and satisfies the \ltp{} condition by filling each hole with a set of characters (i.e., some $[C]$).  This is done by encoding the search problem as a constraint satisfaction problem which is then solved by an SMT solver (lines 5-6).
%
We defer the details of this phase to Section~\ref{subsec:genconst}.

{\flushleft\bf Expanding and adding holes to a template.}
The failure of the SMT solver to find a solution implies that there exists no instantiation of the
template obtainable by filling the holes by sets of characters that is consistent with the examples and satisfies the \ltp{} condition.  In such a case, our algorithm expands the holes in the template to generate unexplored templates and add them to the queue (line 8).  For example, the template $(\hole{})_1\backslash 1$ is expanded to $(\hole{}\hole{})_1 \backslash 1$, $(\hole{}|\hole{})_1 \backslash 1$, $(\hole{}^*)_1 \backslash 1$, and so on. 
Here, to ensure the \ltp{} condition, we do not replace the holes in lookarounds with templates containing repetitions.  

Finally, if the current template fails the feasibility check and no more templates are in the queue, we generate new templates by adding holes to the current template and add the new templates to the queue, because it would be fruitless to expand the current template any further (line 9).  The addition of a new hole is done by replacing a set of characters by a hole or replacing an expression with a hole when an immediate subexpression of the expression is a hole.
Note that changing an operator is possible because {\sf addHoles} can replace an operator with a hole when an immediate subexpression is a hole, and then {\sf expandHoles} can replace the hole with a different operator. For example, by this, $(a|b)c$ may be repaired to $d^*c$.


\subsection{Generating Constraints}
\label{subsec:genconst}
We show the construction of the SMT constraint.  The constraint is a conjunction of the following two constraints: the {\em consistency-with-examples constraint} which asserts that regex obtained by replacing the holes in the template with the sets of characters is consistent with the given positive and negative examples, and the {\em linear-time constraint} which further constrains such a regex to satisfy the \ltp{} condition. 
We describe the constructions of each constraint in Section \ref{subsec:consthole} and \ref{subsec:oneprop}, respectively.
%



\subsubsection{Consistency with Examples}
\label{subsec:consthole}
\if0
制約上に現れないホールが存在する可能性がある
state が (?!ab$\hole{}$) で P = {acc} N = {} だとすると
否定先読みの2文字目で失敗するので$\hole{}$に関する制約はない
このような場合は\hole{}を \any{} に置換することとする
encode関数を実装
lookaroundをネストすると状態すうがネストの数だけexpoになる
キャプチャの状態数だけ探索が必要
\fi

\begin{figure}[t]\footnotesize

\infrule[Capturing group]
{(\sstate, w, p, \Gamma, \phi) \encode{} (\mathcal{S}, \mathcal{F})}
{((\sstate)_i, w, p, \Gamma, \phi) \encode{} (\bigcup_{ (p_i, \Gamma_i, \phi_{ci}) \in \mathcal{S} } (p_i, \Gamma_i[ i \mapsto w[p..p_i) ], \phi_{ci}), \mathcal{F}) }

\infrule[Backreference]
{ \text{Let $x$ = }\Gamma(i) \andalso x = w[p..p+|x|) }
{(\backslash i, w, p, \Gamma, \phi) \encode{} ( \{ (p+|x|, \Gamma, \phi) \} , \emptyset) }


\infrule[Positive lookahead]
{(\sstate, w, p, \Gamma, \phi) \encode{} (\mathcal{S}, \mathcal{F}) }
{(\text{(?=}\sstate\text{)}, w, p, \Gamma, \phi) \encode{} ( \{(p,\Gamma',\phi') \mid (\_,\Gamma', \phi') \in \mathcal{S}\} , \mathcal{F}) }

\infrule[Negative lookahead]
{(\sstate, w, p, \Gamma, \phi) \encode{} (\mathcal{S}, \mathcal{F})}
{(\text{(?!}\sstate\text{)}, w, p, \Gamma, \phi) \encode{}\\ ( \{ (p,\Gamma,\phi') \mid (\bot,\bot,\phi') \in \mathcal{F} \}, \{ (\bot,\bot,\phi') \mid (\_,\_,\phi') \in \mathcal{S} \}  ) } 

\infrule[Positive lookbehind]
{(x, w[p-|x|,p), 0, \Gamma, \phi) \encode{} (\mathcal{S}, \mathcal{F})}
{(\text{(?\textless=}x\text{)}, w, p, \Gamma, \phi) \encode{} (\{  (p,\Gamma, \phi') \mid (p', \Gamma', \phi')  \in \mathcal{S} \} , \mathcal{F}) }

\infrule[Negative lookbehind]
{(x, w[p-|x|,p), 0, \Gamma, \phi) \encode (\mathcal{S}, \mathcal{F}) }
{(\text{(?\textless!}x\text{)}, w, p, \Gamma, \phi) \encode{}\\ (\{ (p, \Gamma, \phi') \mid (\bot, \bot, \phi') \in \mathcal{F} \}, \{ (\bot,\bot,\phi') \mid (\_,\_,\phi') \in \mathcal{S} \}  ) }

\infrule[Hole]
{\text{$\hole$ is the $i$-th hole}}
{(\hole, w, p, \Gamma, \phi) \encode (\{ (p+1, \Gamma, \phi \land v_i^{w[p]}) \}, \{ (\bot, \bot, \phi \land \lnot v_i^{w[p]}) \}) }

\caption{Selected rules for generating consistency-with-examples constraints.}
\label{tab:encode} 
\end{figure}

To construct the constraint for ensuring the consistency with examples, we adapt and extend the approach proposed by \cite{10.1145/3360565} for constructing a similar constraint for pure regexes to real-world regexes.  
The main idea of \cite{10.1145/3360565} is to have a propositional variable $v_i^a$ for each $a \in \Sigma$ and $i$ that ranges over the number of holes in the given template $\sstate{}$ so that $v_i^a$ is true iff the set of characters $[C]$ to fill the $i$-th hole satisfies $a \in C$.
Then, the constraint is formulated to find an instantiation of $\sstate{}$ that satisfies (1) for each positive example, there is a run of the matching algorithm that accepts it, and (2) no run accepts a negative example.  

%
%
To this end, we define the function {\tt encode} which takes a template $\sstate$ and a string $w$.  It outputs the constraint $\phi_w$ that is satisfiable iff there exists an instantiation $r$ of $\sstate$ obtained by filling its holes with sets of characters such that $w \in L(r)$.  
%
The function {\tt encode} is defined by rules deriving judgements of the form $(\sstate, w, p, \Gamma, \phi) \encode (\mathcal{S}, \mathcal{F})$.  Here, $\phi$ accumulates the constraints asserted thus far, and $\mathcal{S}$ and $\mathcal{F}$ are sets of {\em constrained matching results} for successes and failures, respectively. A constrained matching result is a tuple $(p, \Gamma, \phi)$, where $p$ is a position, $\Gamma$ is a function that stores information about capturing groups, and $\phi$ is a constraint asserting the condition that must be satisfied for the corresponding matching to succeed or fail.  Matching results of the form $(\bot,\bot,\_)$ indicate matching failures.  Then, 
we define ${\tt encode}(\sstate, w) = \bigvee_{(|w|, \_, \phi) \in \mathcal{S} } \phi$ where
$(\sstate, w, 0, \emptyset, \mathit{true}) \encode (\mathcal{S}, \_)$.

Figure~\ref{tab:encode} shows the selected rules of $\encode$.  Here, the notation $\mathcal{M}[(p, \Gamma, \phi) \mapsto (p', \Gamma', \phi')]$, where $\mathcal{M}$ is either $\mathcal{S}$ or $\mathcal{F}$, denotes $\mathcal{M}$ but with $(p,\Gamma, \phi)$ replaced by $(p', \Gamma', \phi')$. For space, we only show the rules for handling the extended features and defer the full rules to the \supplementary{}.

Thanks to our rigorous formalization of the matching relation (cf.~Section~\ref{subsec:formalsemantics}), the constraint generation rules follow the corresponding rules of the matching relation and are almost straightforward.
The main difference is the rule (\textsc{Hole}) for processing holes.  The rule adds constraints to assert that the character $w[p]$ has to be included or not included in the set of characters that replaces the hole by conjoining $v_i^{w[p]}$ to the accumulated constraint $\phi$ for the success case, and conjoining $\neg v_i^{w[p]}$ to $\phi$ for the failure case.

Finally, the consistency-with-examples constraint for $\sstate$ is:
$\phi_c \triangleq \bigwedge_{w \in P} {\tt encode}(\sstate, w) \land \bigwedge_{w \in N}  \lnot~ {\tt encode}(\sstate, w)$.
\begin{example}
\normalfont
Consider the template $\sstate{} = (?!\hole)\hole bc$, the positive examples $P = \{ abc, cbc\}$, and the negative examples $N = \{bbc\}$.
For the positive examples, we have
\[
{\tt encode}(\sstate, abc) = \lnot v_0^a \land v_1^a \ \text{ and } \  {\tt encode}(\sstate, cbc) = \lnot v_0^c \land v_1^c.
\]
For the negative example, we have ${\tt encode}(\sstate, bbc) = \lnot v_0^b \land v_1^b$.
Therefore, $\phi_c = ( ( \lnot v_0^a \land v_1^a ) \land ( \lnot v_0^c \land v_1^c ) ) \land \lnot ( \lnot v_0^b \land v_1^b )$.

\end{example}

\if0
The steps to construct an SMT formula for the hole replacement is as follows:
(1) we construct an automata that corresponds to a real-world regular expression.
(2) for each example, we enumerate the runs that are accepted by the automaton, and extract the characters that ran through the hole as an SMT formula from the runs.
(3) we combine the SMT formulas constructed in the previous step and make the whole SMT formula.

First, to enumerate the runs, we define an automaton that corresponds to a real-world regular expression.
In real-world regular expression engines, a regular expression is compiled to an {\it extended nondeterministic finite automaton} (e-NFA)~\cite{Shen:2018:RCR:3238147.3238159}.
An e-NFA is represented as a directed graph $G = (V, E)$, where $V$ is a set of vertices and $E$ is a set of edges.
A vertex $v \in V$ denotes a state of the e-NFA, and there is an engine-specific function $f_v$ for the vertex $v$.
An edge $e \in E$ denotes a transition $(v_s, v_t)$ of the e-NFA, and we write $E[v_s]$ to denote the vertex $v_t$.
In each step of the matching, $v$ refers $f_v$ to move the state by using transitions.
Figure \ref{} shows the example of an e-NFA for a template $XX$.

A regular expression engine maintains a data structure $S$, which is usually a stack, to store the matching state $\average{v_i, p_i, t_i}$, where $v_i \in V$, $0 \leq p_i \leq |w|$ is a position on a string $w$, $t_i$ is an engine-dependent state.
Initially, $S = \{ \average{v_0, 0, \bot} \}$, where $v_0 \in V$ is an initial state on the e-NFA.
At each step, the regular expression engine pops a matching state $\average{v, p, t}$, and performs $f_v(w, p, t)$.
During the execution of $f_v$, it pushes the next matching states onto the stack $S$ or does not push anything, which means that the matching fails and backtracking occurs.
If the state $v \in F$, where $F \subseteq V$ is a set of accepting states, then the matching succeeds.

In our setting, we need to extend an e-NFA to handle holes in a template.
To this end, we define an engine-specific function $f_{v_\hole}$ for a hole, and it is given in Algorithm \ref{alg:hole}.
\begin{algorithm}[t]
\caption{$f_{v_\hole}$ for a hole}
\label{alg:hole}
\begin{algorithmic}[1]
\REQUIRE a vertex $v$, a position $p$, a engine-dependent state $t$
\STATE {\bf function} {\tt $f_{v_\hole}(v, p, t)$}
\STATE \idt{} {\bf if} $p < |w|$ {\bf then}
\STATE \idt{} \idt{} $S.push(\average{E[v], p+1, t})$
\end{algorithmic}
\end{algorithm}
Algorithm \ref{alg:hole} states that the hole consumes a character $w[p]$ and moves the position $p$ to the next, i.e., $p+1$, but if the position $p$ is greater than or equal to the size of the input string $|w|$, then the matching fails.
\fi

\if0
\begin{definition}[Extended nondeterministic finite automaton with holes (e-NFA$_\hole$)]
An extended nondeterministic finite automaton with holes (e-NFA$_\hole$) is a tuple $A$ = $(Q, q_0, \delta, F)$, where $Q$ is a set of states, $q_0 \in Q$ is an initial state, $\delta : Q \times L \rightarrow Q$, where $L$ is a set of labels, is a set of transitions, and $F \subseteq Q$ is a set of accepting states.
A label $l \in L$ is labelの種類数はXX個数あるので、説明で使うcaseのみを与える
The full definition can be found in the Appendix \ref{appendix:enfa}.
\begin{description}
\item[]
\item[]
\end{description}
\end{definition}
\begin{example}
Figure \ref{} shows an example of e-NFA$_\hole$.
\end{example}
\fi

\if0
A regular expression engine produces a {\it matching trace} during the running.
When a regular expression engine runs on an e-NFA to try matching to a string $w$, a matching trace $\trace(w)$ is a sequence of state/position pairs $\average{v,p}$, i.e., $\trace(w) = \{ \average{v_0, p_0}, \average{v_1, p_1}, ..., \average{v_n, p_n} \}$.
The pair $\average{v_i, p_i}$ states that the regular expression engine tries to match the string $w$ at position $p_i$ on the state $v_i$.
In particular, we say that a matching trace $\trace(w) = \{ \average{v_0,p_0}, ..., \average{v_n, p_n} \}$ is an {\it accepting trace} if $v_n \in F$.
\fi

\if0
The language of e-NFA$_\hole$ $A$ is defined as $L(A) = \{ w | \text{ The run of $w$ on $A$ is an accepting run.} \}$.
\begin{lemma}
Given a state $\sstate{}$, suppose that $A_\sstate = \encode(\sstate)$.
Then, $L(A_\sstate) = L(\sstate)$.
\end{lemma}
\begin{proof}
\end{proof}
\fi

\if0
To enumerate constraints for each holes, we extend a matching state $\average{v_i, p_i, t_i}$ to $\average{v_i, p_i, t_i, \phi_i}$.

Finally, we construct the SMT formula for the hole replacement as follows:
\begin{eqnarray*}
\Phi_c &=& \bigwedge_{w_p \in P} \run{}(w_p) \land \lnot \bigwedge_{w_n \in N} \run(w_n)
\end{eqnarray*}
\fi

\if0
\begin{theorem}
Given a state $\sstate{}$, a set of positive examples $P$, a set of negative examples $N$, and suppose $Run() = \Phi$
If $X$ is a solution to $\Phi$, then $\sstate_{X}$ is consistent with $P$ and $N$.
\end{theorem}
\begin{proof}
X
\end{proof}
\fi

\subsubsection{Linear Time}
\label{subsec:oneprop}

\begin{algorithm}[t]
\caption{Generation of linear-time constraint}
\label{alg:linear}
\begin{algorithmic}[1]
\REQUIRE a template \sstate{}
\ENSURE a constraint $\phi_l$
\STATE $\sstate{} \leftarrow \removela(\sstate{})$
\STATE $\mathcal{A}$ $\leftarrow$ $\toENFA(\sstate{}^{[]})$ // $\mathcal{A} = (Q, \delta, q_0, q_n)$
\STATE $\phi_l \leftarrow {\it true}$
\STATE {\bf for each} $(q, [_i, q') \in \delta$ {\bf do}
\STATE \idt{} $L$ $\leftarrow$ {\follow}$(q)$
\STATE \idt{} {\bf if} $\rho_ia$, $\rho_ja$ $\in L$, where $\rho_i \neq \rho_j$ {\bf then}
\STATE \idt{} \idt{} {\bf return} {\it false}
\STATE \idt{} {\bf for each} $a \in L^{\natural}$ and $\hole_i \in L^{\natural}$ {\bf do}
\STATE \idt{} \idt{} $\phi_l \leftarrow \phi_l \land \lnot v_i^a$
\STATE \idt{} {\bf for each} $a \in \Sigma$ and $\hole_i, \hole_j \in L^{\natural}$ where $i \neq j$ {\bf do}
\STATE \idt{} \idt{} $\phi_l \leftarrow \phi_l \land (\neg v_i^a \vee \neg v_j^a)$
\STATE {\bf return} $\phi_l$
\end{algorithmic}
\end{algorithm}

Algorithm \ref{alg:linear} shows the construction of the linear-time constraint for enforcing \ltp{}.  It takes as input a template $\sstate$ and returns the linear-time constraint $\phi_l$.

%
The algorithm first removes lookarounds from the template by using $\removela$ defined in Definition~\ref{def:removela} (line 1).  Here, we extend $\removela$ to templates by treating each hole $\hole_i$ as a set of characters.  Condition (2) of \ltp{} which asserts repetition-freedom in lookarounds (cf. Definition~\ref{def:ltp}) is ensured by not placing repetitions in lookarounds of a template (cf.~\textbf{Expanding and adding holes to a template} in Section~\ref{subsec:overview}).

Next, the algorithm constructs a NFA for $\sstate{}^{[]}$ via the extended NFA translation defined in Definition~\ref{def:toenfa} (line 2). Then, for each open bracket $[_i$ in the NFA, the algorithm computes the set of paths $\follow(q)$ where $q$ is the source state of the (unique) $[_i$-labeled edge.  Here, we extend $\follow$ so that a hole $\hole_i$ is treated as the set of characters $[\hole_i]$ (cf.~Section~\ref{subsec:lineartime}).  

We then check if there are multiple brackets-only routes from $[_i$ that reach a same character (line 6).  If the check passes, then |$\computePath(\removela(r), [_i, a)| \geq 2$ for any regex $r$ obtainable from $\sstate$ violating condition (1) of \ltp{}, and we safely reject $\sstate$ by returning the unsatisfiable formula {\it false}.  

Otherwise, we proceed to add two types of constraints in lines 8-11.  The constraints of the first type added in lines 8-9 assert that, if some character $a \in \Sigma$ and a hole $\hole_i$ are both reachable from $[_i$ by bracketing-only paths, then the hole must not be filled with a set of characters that contains $a$.  Here, $L^{\natural} = \{\alpha \mid \rho\alpha \in L\}$  (cf.~Section~\ref{subsec:lineartime}).  The constraints of the second type added in lines 10-11 assert that, if there are two different holes $\hole_i$ and $\hole_j$ reachable from $[_i$ by bracketing-only paths, then for any character $a \in \Sigma$, at most one of the hole can be filled with a set of characters that contains $a$.  It is easy to see that condition (1) of \ltp{} is satisfied iff these constraints are satisfied for all $[_i$.
Finally, the algorithm returns the resulting constraint $\phi_l$ (line 12).




%

\begin{figure}[t]
\begin{center}
\includegraphics[width=85mm]{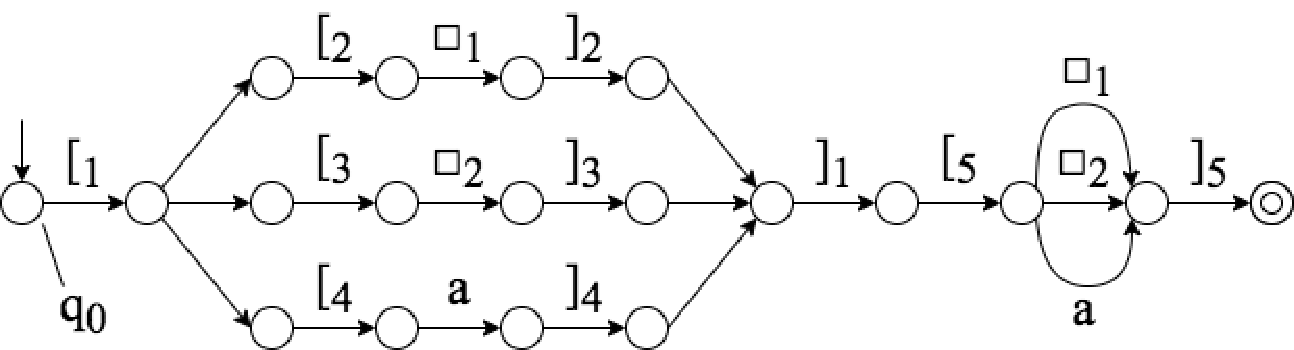}
\caption{Simplified version of the extended NFA translation of $[_1([_2 \hole_1 ]_2 | [_3 \hole_2 ]_3 | [_4 \texttt{a} ]_4 )_1 ]_1 [_5 \backslash 1 ]_5$.}
\label{fig:constexa}
\end{center}
\end{figure}
\begin{example}
\normalfont
Let us consider running the algorithm on the template $\sstate{} = (\hole_1 | \hole_2 | \texttt{a})_1 \backslash 1 (\mbox{?!}\texttt{a})$.
The algorithm first removes lookarounds in the template (line 1), and thus the template becomes $(\hole_1 | \hole_2 | \texttt{a})_1 \backslash 1$.
Next, the algorithm applies the extended NFA translation to $\sstate{}^{[]}$ (line 2).
Here, $\sstate{}^{[]}$ is $[_1([_2 \hole_1 ]_2 | [_3 \hole_2 ]_3 | [_4 \texttt{a} ]_4 )_1\:]_1\: [_5 \backslash 1 ]_5$.
For brevity, we omit the some redundant brackets for sequences and unions.
Figure~\ref{fig:constexa} shows the obtained NFA.
Then, the algorithm constructs the constraints (lines 4-11).
Let us consider the case of $q_{0}$.
In this case, $\follow(q_{0}) = \{ [_1 [_2 \hole_1, [_1 [_3 \hole_2, [_1 [_4 \texttt{a}\}$.
Line 9 adds the constraint $\lnot v_1^\texttt{a} \land \lnot v_2^\texttt{a}$ and line 11 adds the constraint $\bigwedge_{a \in \Sigma} (\neg v_1^a \vee \neg v_2^a)$ to $\phi_l$.
\end{example}

\if0
先読みの中にはrepetitionを含まない正規表現をかく。これは1文字先読み制約を満たさないが、
コンスタントで動く。
\begin{theorem}
\label{theo:starfree}
A regular expression that does not contain repetitions runs in constant time.
\end{theorem}
\begin{proof}
\end{proof}
下の方が良いかも？
\begin{theorem}
\label{theo:starfree}
Let $r$ be a regular expression that does not contain repetitions in the subexpressions.
The matching for the regular expression $r$ runs in constant time even if we employ the inefficient algorithm shown in Table \ref{fig:semanticsv}.
\end{theorem}
\begin{proof}
rのマッチングは高々定数しか状態数がなく、またそれぞれの状態数で定数時間しか必要としないことを示す。
1. (r,w,p,gamma) -> N なら |N|は定数 
2. rの処理時間はどれも定数
どちらもrの構造に関する帰納法で示せる
これより O(1) x O(1) = O(1).
\end{proof}
\fi


\if0
\subsubsection{Constructing an SMT Formulation}
\label{subsubsec:constraintsal}
In this section, we describe the construction of a whole constraint by combining these constraints.
The whole constraints for a template $\phi$ is summarized as follows.
$\phi_c$ and $\phi_l$ are constraints described in Section \ref{subsec:consthole} and \ref{subsec:oneprop}, respectively.
\begin{eqnarray*}
\phi &=& \phi_c \land \phi_l 
\end{eqnarray*}
\fi

\if0
{\flushleft\bf Soundness.} 
\begin{theorem}
\end{theorem}
\begin{proof}
\end{proof}

{\flushleft\bf Completeness.} 
\fi
\if0
{\flushleft\bf Correctness of Our Algorithm.} 
We point out that our algorithm is correct.
\begin{theorem}
Our algorithm always has a solution, i.e.,  there is a strongly one-unambiguous regular expression that is consistent with all examples.
\end{theorem}
\begin{proof}
We show that we can always generate a {\it trivial repair}: 
prefix-treeから自明な答えを作成可能
This implies the existence of a solution.
\end{proof}
Clearly, trivial repair has a high cost.
However, it is very rarely generated in practice, as witnessed by our experimental evaluation in Section \ref{sec:eval}.
\fi




\subsection{Optimization}
\label{subsec:optim}
We show an optimization to the algorithm.  When adding new holes to a template at line 9 of Algorithm~\ref{alg:repair}, we select the sets of characters to be replaced by holes as follows.  We analyze the result of the extended NFA translation (which is done anyway for the linear-time constraint) to identify the sets of characters that violate the \ltp{} condition, and replace only those with holes.  This has the effect of reducing the search space by focusing the synthesis to the parts that contribute to vulnerability.



For example, from the template \texttt{<s$\hole{}$an$\any^*$>}, without the optimization, we may generate up to $2^6$ templates by replacing the sets of characters by holes.
But with the optimization, we only generate one template \texttt{<s$\hole{}$an$\hole{}^*\hole{}$} because $\any$ and \texttt{>} are the only sets of characters that violate the \ltp{} condition.


\section{Implementation and Evaluation}
\label{sec:eval}

In this section, we present the results of our evaluation.
We evaluate the performance of \tool{} by answering the following questions.
\begin{description}
\item[RQ1] Can \tool{} repair vulnerable regexes efficiently?
\item[RQ2] Can \tool{} find high-quality regexes?
\item[RQ3] What is the effect of the optimization?
\end{description}

For the first question, we measure the time taken to repair vulnerable regexes on a real-world data set.
For the second question, we measure the quality of repaired regexes using the metrics also used in~\cite{10.1145/3360565}.
For the last question, we compare the running times of \tool{} and \tool{} with the optimization described in Section~\ref{subsec:optim}.
Henceforth, we refer to \tool{} with the optimization as \tool{}-o, and use \tool{}-h to denote the hybrid of \tool{} and \tool{}-o that returns the regex returned by the faster of the two.

Finally, we present a comparison of our tool~\tool{} with the other state-of-the-art tools in Section~\ref{subsec:compare}. We compared \tool{} with three state-of-the-art tools AlphaRegex~\cite{10.1145/3093335.2993244}, RFixer~\cite{10.1145/3360565}, and FlashRegex~\cite{FlashRegex}. AlphaRegex only supports synthesizing a regex, while RFixer and FlashRegex support both synthesizing and repairing a regex.

 
\subsection{Experimental Setup}
\label{subsec:expsetup}
We have implemented \tool{} in Java.  We use Z3~\cite{10.5555/1792734.1792766} as the SMT solver.
All experiments were performed on a machine with Intel(R) Xeon(R) Gold 6254 CPU @ 3.10GHz.

{\flushleft\bf Benchmark.}
We used {\it Ecosystem ReDoS data set} collected by Davis et al.~\cite{Davis:2018:IRE:3236024.3236027}, which contains real-world regexes in Node.js (JavaScript) and Python core libraries.
The data set contains 13,670 regexes that contain real-world extensions (i.e., lookarounds or backreferences).
Initially, the regexes are not classified whether they are vulnerable or not.
Thus, we contacted the authors of \cite{Davis:2018:IRE:3236024.3236027} to obtain the subset that they classified as vulnerable.
As a \tchanged{consequent}{consequence}, the data set contains 13,591 regexes that contain real-world extensions and are unknown whether they are vulnerable or not, and 79 regexes that contain real-world extensions and are vulnerable.
Due to the size, for the former, we selected 100 of them randomly. For the latter, we selected all of them.
The average and maximum sizes of the regexes (measured as number of AST nodes) are 32.1 and 383, respectively.

We note that there are no known sound-and-complete ReDoS vulnerability detection methods for real-world regexes (in fact, even whether such a detection is possible is an open question).
The 79 regexes that are classified as vulnerable are manually classified as so by Davis et al.
%
We have chosen this data set because its regexes represent real use cases and are also considered to be vulnerable. 


{\flushleft\bf Sampling Examples.}
Since the data set of \cite{Davis:2018:IRE:3236024.3236027} do not come with examples, we prepared the examples by ourselves.  
Many of them were made manually, but some were generated automatically, due to the large sizes of the regexes, by the following input generation technique that is inspired by that of \cite{10.1145/3236024.3236072} for pure regexes.

We first convert the given regex to a backreference-free regex by replacing each capturing group $(r)_i$ and backreference $\backslash i$ by fresh symbols $\alpha_i$ and $\beta_i$, respectively.  The resulting pure regex (lookarounds can be eliminated for backreference-free regexes~\cite{DBLP:journals/jip/MiyazakiM19}) is converted to a DFA.  We enumerate the accepting paths of the DFA so that each edge appears in at least one path, with the requirement that an edge $\beta_i$ can only be taken if the corresponding edge $\alpha_i$ was taken before in the path.  Each path is turned into a set of positive examples by replacing each $\alpha_i$ and $\beta_i$ by a positive example of the regex $r$ where $(r)_i$ is the capturing group (positive examples of $r$ are generated by recursively applying this process).  Negative examples are generated similarly by considering the rejecting paths of the DFA. 

Finally, we used at most 5 positive and negative examples each. We note that, for usability, PBE should only use relatively small numbers of examples.

{\flushleft\bf Consistency with Examples.}
By construction, \tool{} is guaranteed to only generate regexes that are consistent with the given examples. 
We have also validated that all regexes that \tool{} generated in the experiment were indeed consistent with the given examples by running the Java's regex library {\tt util.regex}.

{\flushleft\bf ReDoS Invulnerability.}
By construction, \tool{} is guaranteed to only generate regexes that satisfy \ltp{} and hence ReDoS invulnerable.
We have also validated that all regexes that \tool{} generated in the experiment indeed satisfied \ltp{}.  Note that whether
a regex satisfies \ltp{} can be easily checked by analyzing the extended NFA translation of the regex (cf.~Section~\ref{subsec:lineartime}).


\subsection{RQ1: Efficiency}
\begin{figure}[t]
 \begin{minipage}[b]{0.49\linewidth}
\includegraphics[width=\hsize]{./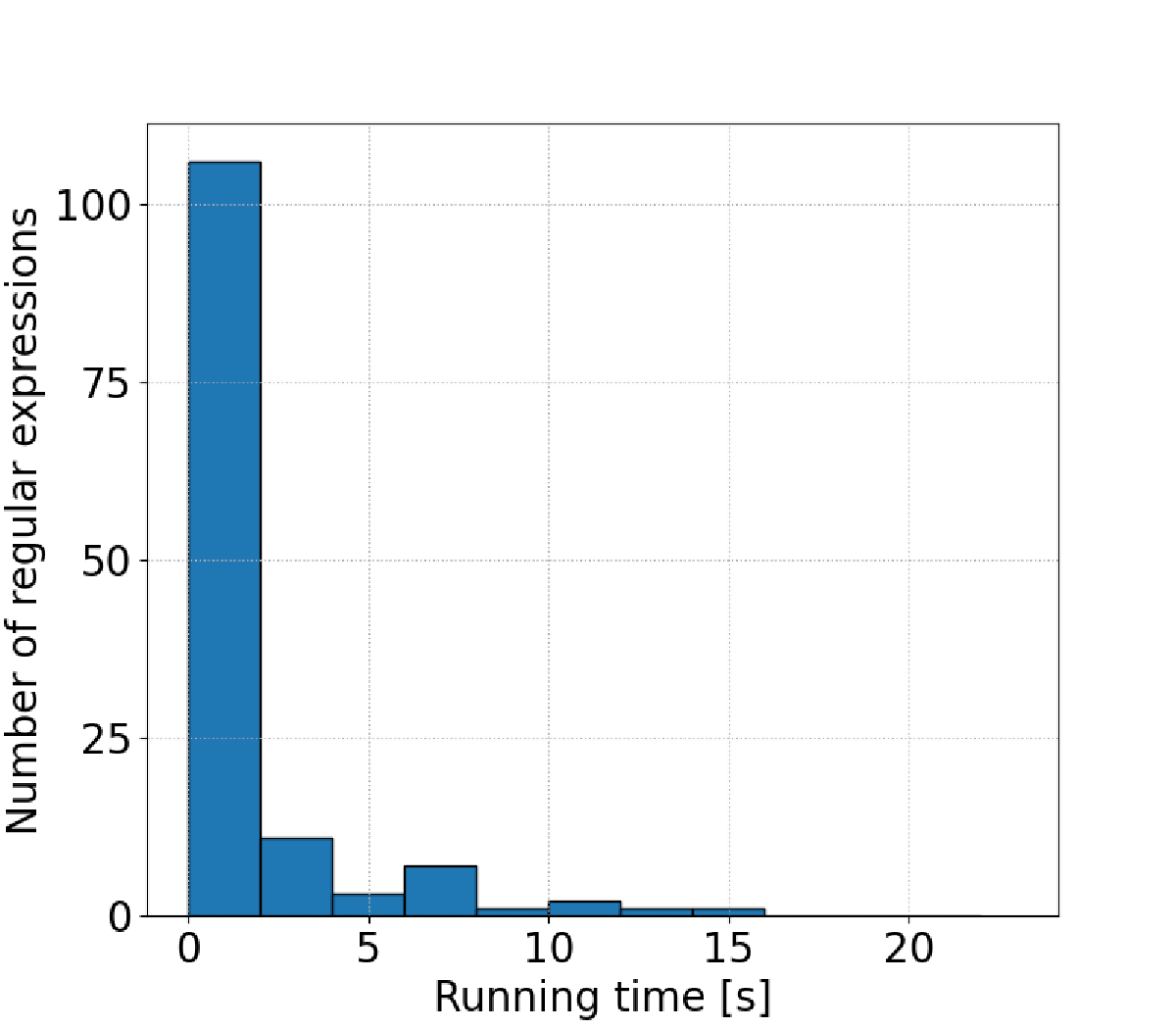} 
  \subcaption{Running times of \tool{}}\label{fig:rq1_normal}
 \end{minipage}
 \begin{minipage}[b]{0.49\linewidth}
  \includegraphics[width=\hsize]{./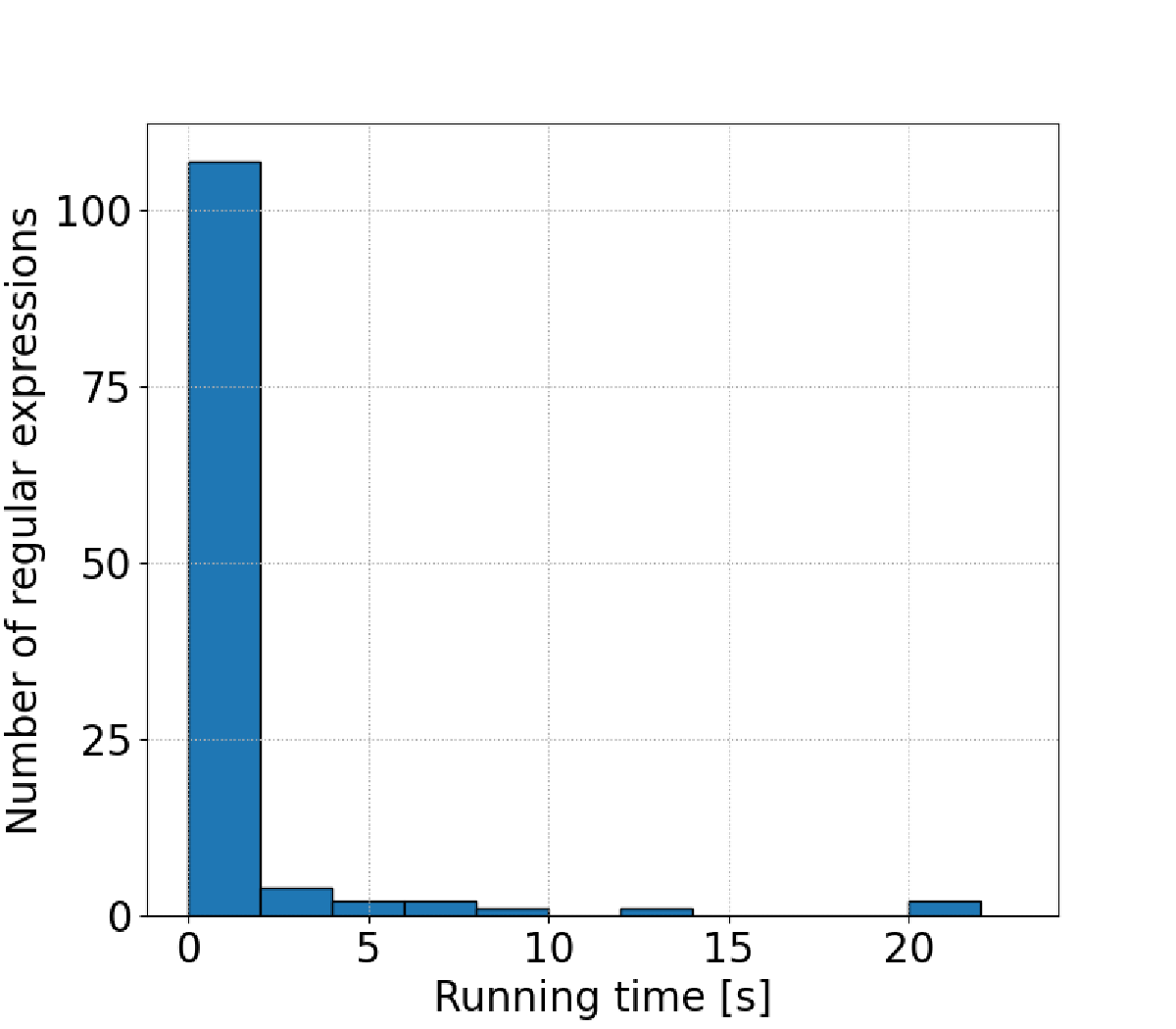} 
  \subcaption{Running times of \tool{}-o}\label{fig:rq1_optim}
 \end{minipage} 
 \caption{Results of the repairs.}\label{fig:result_summary}
\end{figure}


\subsubsection{Performance}
To evaluate the performance, we ran \tool{} and the variants with a timeout of 30 seconds.
We chose 30 seconds because the improvement by setting the timeout to more than 30 seconds was little.
The table below summarizes the results.
The columns {\bf Solved} and {\bf Average} show that the number of test cases which were repaired within the timeout range and the average running time, respectively.
Additionally, Figure \ref{fig:rq1_normal} and \ref{fig:rq1_optim} summarize the running times.

\begin{wraptable}{r}{55mm}
\setlength\tabcolsep{1pt}
\begin{tabular}{lr}
  \begin{tabular}{lcc}
          & {\bf Solved}(179) & {\bf Average}(s) \\ \hline
    \tool{}   & 132 & 1.54 \\
    \tool{}-o & 119 & 1.08 \\ 
    \tool{}-h & 147 & 0.97 \\
      \end{tabular}
\end{tabular}
\vspace{-0.3cm}
\end{wraptable}
In total, \tool{}, \tool{}-o, and \tool{}-h repaired 73.7\%, 66.5\%, and 82.1\% of regexes, respectively.
More than 82.3\% of regexes were repaired within 1 second.
On the other hand, we observed that the tools could not repair 17.9\% of regexes within the time limit.  Our inspection showed that the tools struggled on repairs that require large changes from the original.\tchanged{}{ Such repairs may need to explore a large space of possible regexes.
An example of such failure cases is a regex that contains a concatenation of many vulnerable sub-regexes, e.g., \texttt{($\any^*$[,])$^*$[,]$^+$[ ]$^+$(['"]$?$)[ a]$^*\backslash$2$\cdots$}, where $\cdots$ is a further concatenations of vulnerable sub-regexes.}
The finding agrees with that of \cite{10.1145/3360565} who have reported \tchanged{the same issue for their repair method}{that their method also struggled on repairs with large changes,} and whose techniques are adapted to our method (cf.~Section~\ref{sec:algo}).
In summary, {\em \tool{} can repair vulnerable regexes that contain real-world extensions efficiently}.

\subsubsection{Scalability}
\label{sec:scalability}
\begin{figure}[t]
 \begin{minipage}[b]{0.49\linewidth}
  \centering
  \includegraphics[width=\hsize]{./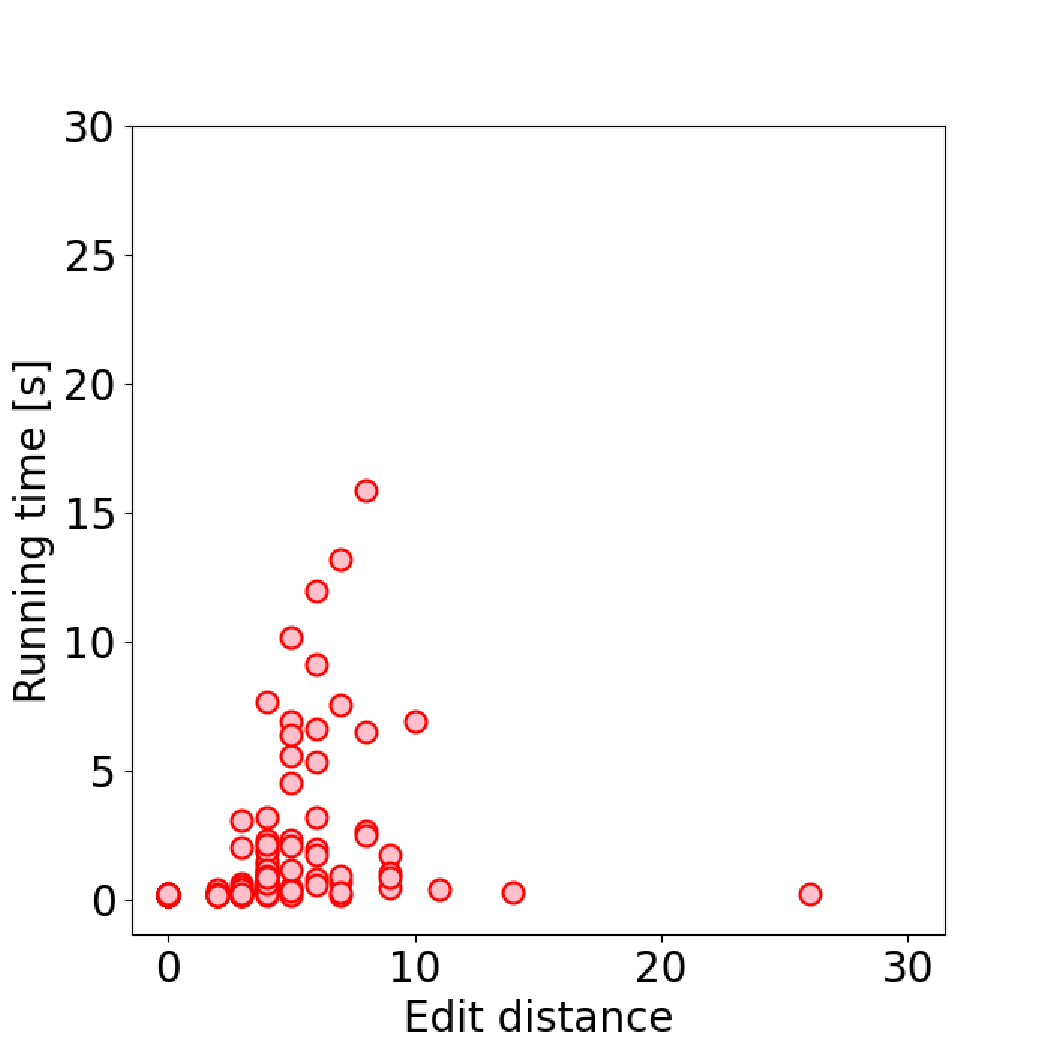}  
  \subcaption{\tool{}}\label{fig:scale_edit_REMEDY}
 \end{minipage}
 \begin{minipage}[b]{0.49\linewidth}
  \centering
  \includegraphics[width=\hsize]{./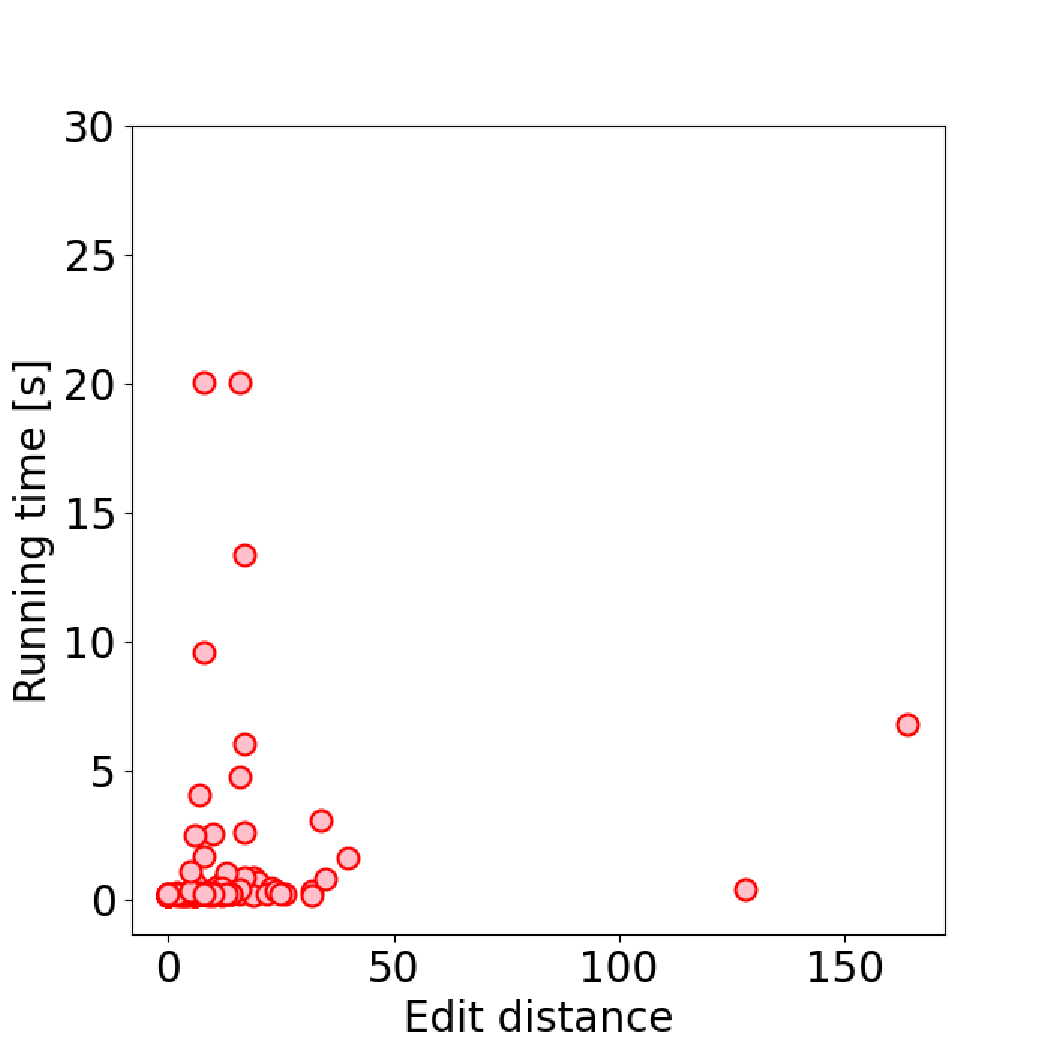} 
  \subcaption{\tool{}-o}\label{fig:scable_edit_REMEDY_o}
 \end{minipage} 
 \caption{Scalability with respect to edit distances.}\label{fig:scale_edit}
\end{figure}
Based on the finding, we plot the running times of \tool{} over the edit distances from the input regexes to their repair results.  
As Figure \ref{fig:scale_edit_REMEDY} shows, in the case of \tool{}, we observed a general correlation between the running times and the edit distances: large edit distances require long running times. This observation affirms our initial findings that the size of edit distance affects the running time.

We also observed that, in \tchanged{one case}{some cases}, \tool{} finished repairing with a large edit distance and a short running time.  To explain the behavior, we use as example the regex below that is derived from the actual case.
\[
\texttt{(?=$(\any^*)_1$[ ]$([0$-$9][$:$]\backslash1[$:$][$:$])^*$)$\any^*$}
\]
The regex contains a lookahead with repetitions and backreferences thus violating the condition (2) of \ltp{}.  \tool{} immediately detects the violation and replaces the repetitions and backreferences with holes.  As result, \tool{} reaches the template \texttt{(?=$(\hole{})_1$[ ]$\hole{}$)$\any^*$} in one step.  Note that the template replaced quite large sub-expressions with holes and is of a large edit distance.  This substantially reduced the search space, and thus the short running time was achieved even with the large edit distance.

On the other hand, as shown in Figure~\ref{fig:scable_edit_REMEDY_o}, we observed less correlation between the running times and the edit distances for \tool{}-o.  Note that the scale of the edit distance axis is significantly wider than that of Figure~\ref{fig:scale_edit_REMEDY}.  The observation also coincides with our initial findings because \tool{}-o uses the optimization described in Section~\ref{subsec:optim} that can increase the edit distance in a small number of steps\tchanged{.}{, namely all sets of characters violating condition (1) of \ltp{} are immediately replaced by holes.}
\texcomment{
\tchanged{}{ For example, the points at edit distance $\sim$130 in Figure \ref{fig:scable_edit_REMEDY_o} are caused by the optimization regarding the condition (1) of \ltp{}. }
}

\begin{wrapfigure}{r}{50mm}
  \centering
    \includegraphics[width=\hsize]{./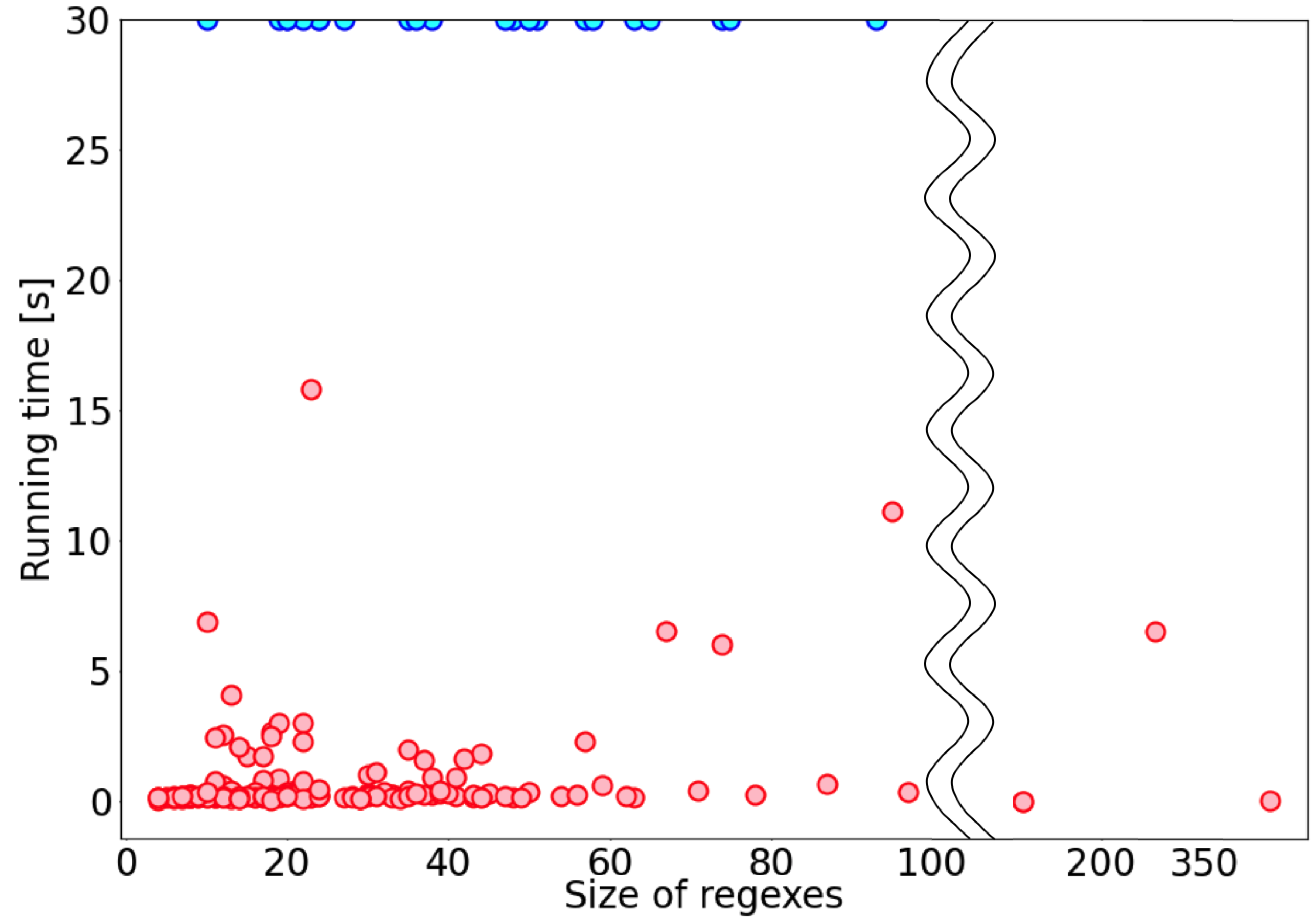}
     \caption{Scalability wrt. regex sizes.}
    \label{fig:scale} 
\end{wrapfigure} 
Additionally, to understand how our tool scales as the size of a regex increases, we plot the running time of \tool{}-h over the size of the regex (measured as number of AST nodes).
Figure \ref{fig:scale} shows the result.
The points on the border (colored in blue) indicate that \tool{}-h could not repair the regex within the time limit.
Note that the figure is truncated to omit redundant space where no points appear.

One can observe that, except for the regexes which led to timeout, \tool{} could repair almost all regexes within 1 second regardless of their size.
Additionally, we inspected some of the regexes which require more than 1 seconds to repair and confirmed that they require large changes from the original.
That is, the impact of the size of regexes on the implementation is little, while the size of edit distance affects the implementation.

In summary, {\em the performance of \tool{} scales with the size of regexes}. Additionally, {\em there is a correlation between the running times and the edit distances of \tool{}. }

\texcomment{
\begin{wrapfigure}{r}{50mm}
  \centering
    \includegraphics[width=\hsize]{./pict/scalability-AST-REMEDY-h-big4.eps}
     \caption{Scalability wrt. regex sizes.}
    \label{fig:scale} 
\end{wrapfigure} 
\begin{figure}[t]
 \begin{minipage}[b]{0.49\linewidth}
  \centering
  \includegraphics[width=\hsize]{./pict/scalability-edit-REMEDY-big.eps}  
  \subcaption{\tool{}}\label{fig:scale_edit_REMEDY}
 \end{minipage}
 \begin{minipage}[b]{0.49\linewidth}
  \centering
  \includegraphics[width=\hsize]{./pict/scalability-edit-REMEDY-o-big.eps} 
  \subcaption{\tool{}-o}\label{fig:scable_edit_REMEDY_o}
 \end{minipage} 
 \caption{Scalability with respect to edit distances.}\label{fig:scale_edit}
\end{figure}
To understand how our tool scales as the size of a regex increases, we plot the running time of \tool{}-h over the size of the regex (measured as number of AST nodes).
Figure \ref{fig:scale} shows the result.
The points on the border (colored in blue) indicate that \tool{}-h could not repair the regex within the time limit.
Note that the figure is truncated to omit redundant space where no points appear.

One can observe that, except for the regexes which led to timeout, \tool{} could repair almost all regexes within 1 second regardless of their size.
Additionally, we inspected some of the regexes which require more than 1 seconds to repair and confirmed that they require large changes from the original.
That is, the impact of the size of regexes on the implementation is little, while the size of edit distance affects the implementation.

Based on these findings, we plot the running times of \tool{} over the edit distances from the input regexes to their repair results.  
As Figure \ref{fig:scale_edit_REMEDY} shows, in the case of \tool{}, we observed a general correlation between the running times and the edit distances: large edit distances require long running times. This observation affirms our initial findings that the size of edit distance affects the running time.

We also observed that, in \tchanged{one case}{some cases}, \tool{} finished repairing with a large edit distance and a short running time.  To explain the behavior, we use as example the regex below that is derived from the actual case.
\[
\texttt{(?=$(\any^*)_1$[ ]$([0$-$9][$:$]\backslash1[$:$][$:$])^*$)$\any^*$}
\]
The regex contains a lookahead with repetitions and backreferences thus violating the condition (2) of \ltp{}.  \tool{} immediately detects the violation and replaces the repetitions and backreferences with holes.  As result, \tool{} reaches the template \texttt{(?=$(\hole{})_1$[ ]$\hole{}$)$\any^*$} in one step.  Note that the template replaced quite large sub-expressions with holes and is of a large edit distance.  This substantially reduced the search space, and thus the short running time was achieved even with the large edit distance.

On the other hand, as shown in Figure~\ref{fig:scable_edit_REMEDY_o}, we observed less correlation between the running times and the edit distances for \tool{}-o.  Note that the scale of the edit distance axis is significantly wider than that of Figure~\ref{fig:scale_edit_REMEDY}.  The observation also coincides with our initial findings because \tool{}-o uses the optimization described in Section~\ref{subsec:optim} that can increase the edit distance in a small number of steps\tchanged{.}{, namely all sets of characters violating condition (1) of \ltp{} are immediately replaced by holes.}
\texcomment{
\tchanged{}{ For example, the points at edit distance $\sim$130 in Figure \ref{fig:scable_edit_REMEDY_o} are caused by the optimization regarding the condition (1) of \ltp{}. }
}

In summary, {\em the performance of \tool{} scales with the size of regexes}. Additionally, {\em there is a correlation between the running times and the edit distances of \tool{}. }
}


\subsection{RQ2: Quality}
\label{subsec:rq2}
\begin{figure}[t]
 \begin{minipage}[b]{0.49\linewidth}
  \centering
    \includegraphics[width=\hsize]{./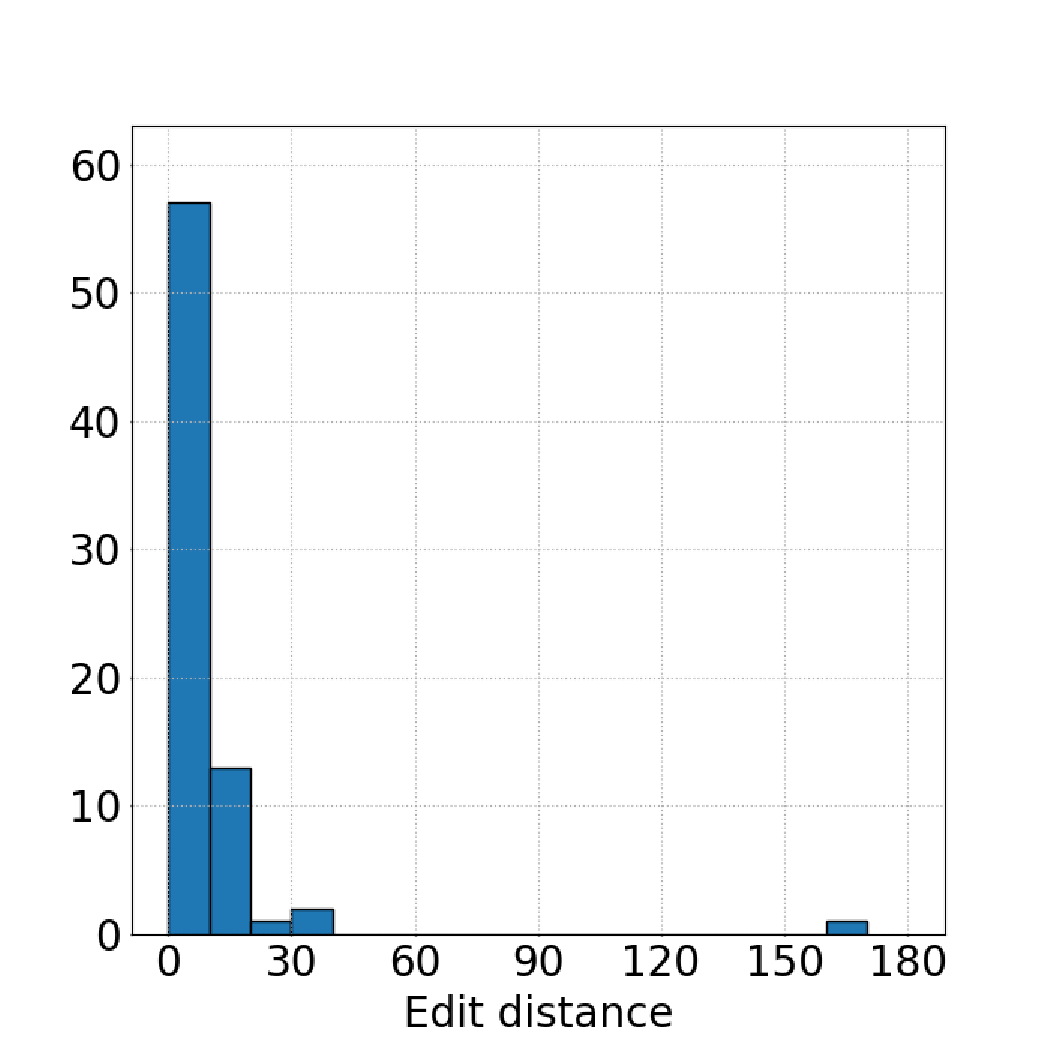}
     \subcaption{Edit distances.}
    \label{fig:histoedit}   
 \end{minipage}
 \begin{minipage}[b]{0.49\linewidth}
  \centering
  \includegraphics[width=\hsize]{./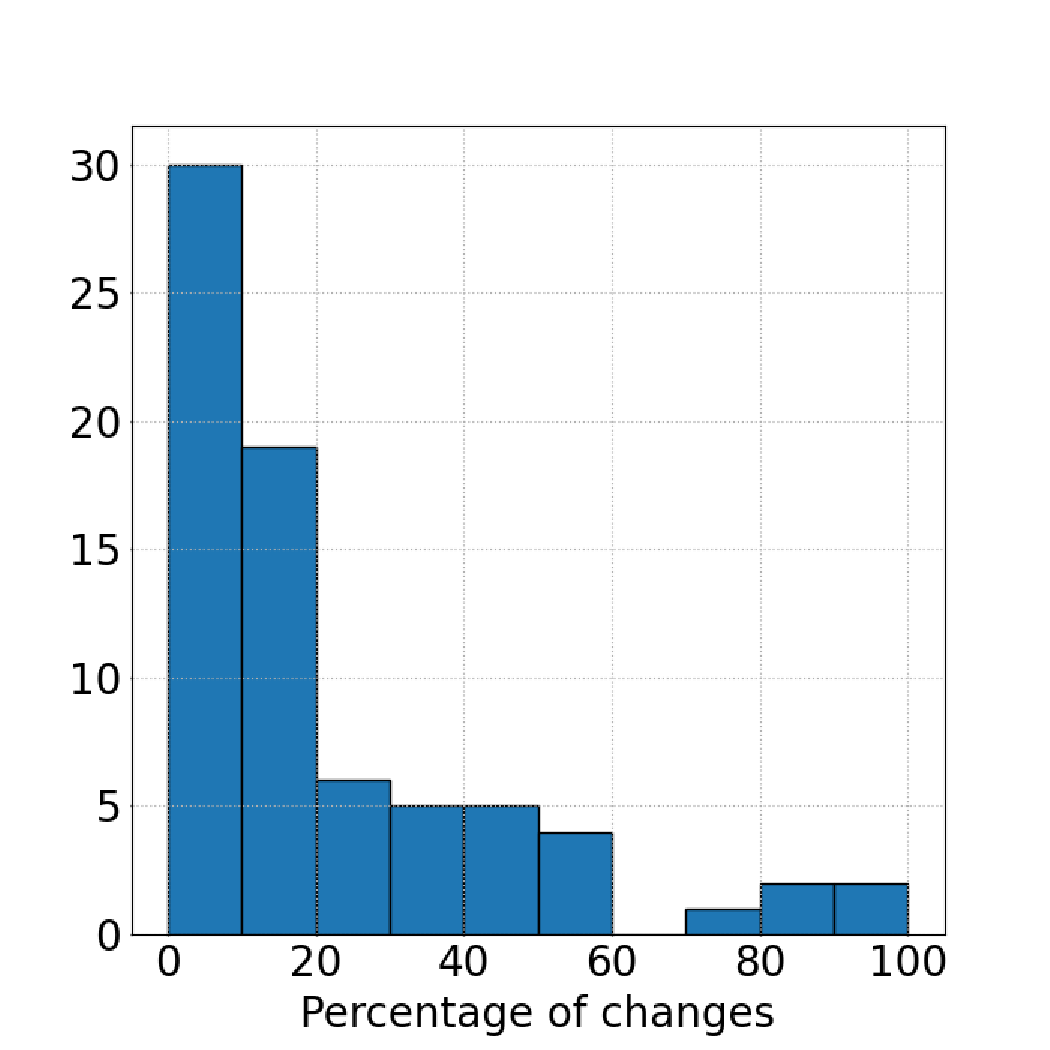}
  \subcaption{Percentages of changes.}\label{fig:histoeditquality}
 \end{minipage} 
 \caption{Histograms for repair quality.}\label{fig:histograms_of_edits}
\end{figure}

As mentioned by \cite{10.1145/3360565}, repairs that are similar to the original ones are often considered good in PBE because they are similar to what the user intended.
Therefore, to evaluate the quality of repaired regexes objectively, we measure the similarity to the original regex.
%
%
A large change indicates low quality as such repairs may be far from what the user intended.
Figure \ref{fig:histoedit} shows a histogram plotting the number of regexes against the edit distances to their repair results by \tool{}-h. Most of the regexes were repaired within the small edit distances, with about 81\% repaired within edit distance 12.

We also measure the ratio of changes, i.e., the size of the regex portion changed by its repair divided by the size of the entire regex.  
%
Figure \ref{fig:histoeditquality} shows a histogram plotting the number of regexes against their ratios of changes.
We observe that most repairs are close to the original regexes, with the average ratio of change being 24.3\%.

\tchanged{}{
We discuss some typical cases of repairs that we observed in our experiments.  For example, the data set contained vulnerable regexes that use positive lookaheads to assert an appearance of some keyword, e.g., \texttt{$\any^*$(?=[ ]$^*$[;])$\any^*$}. For this, \tool{} returns the repaired invulnerable regex \texttt{[\textasciicircum;]$^*$[;]$\any^*$}, which is semantically equivalent to the original, with the edit distance of 7.  We also refer to the XML example from Section~\ref{sec:example} as an exemplar repair case that we observed in our experiments.}
In summary, \tchanged{{\em \tool{} can find high-quality regexes}}{{\em \tool{} can produce repaired regexes that have high-similarity, and therefore of high-quality}}.  


\subsection{RQ3: Effect of the Optimization}
\begin{wrapfigure}{r}{45mm}
\vspace{-1em}
  \centering
    \includegraphics[width=\hsize]{./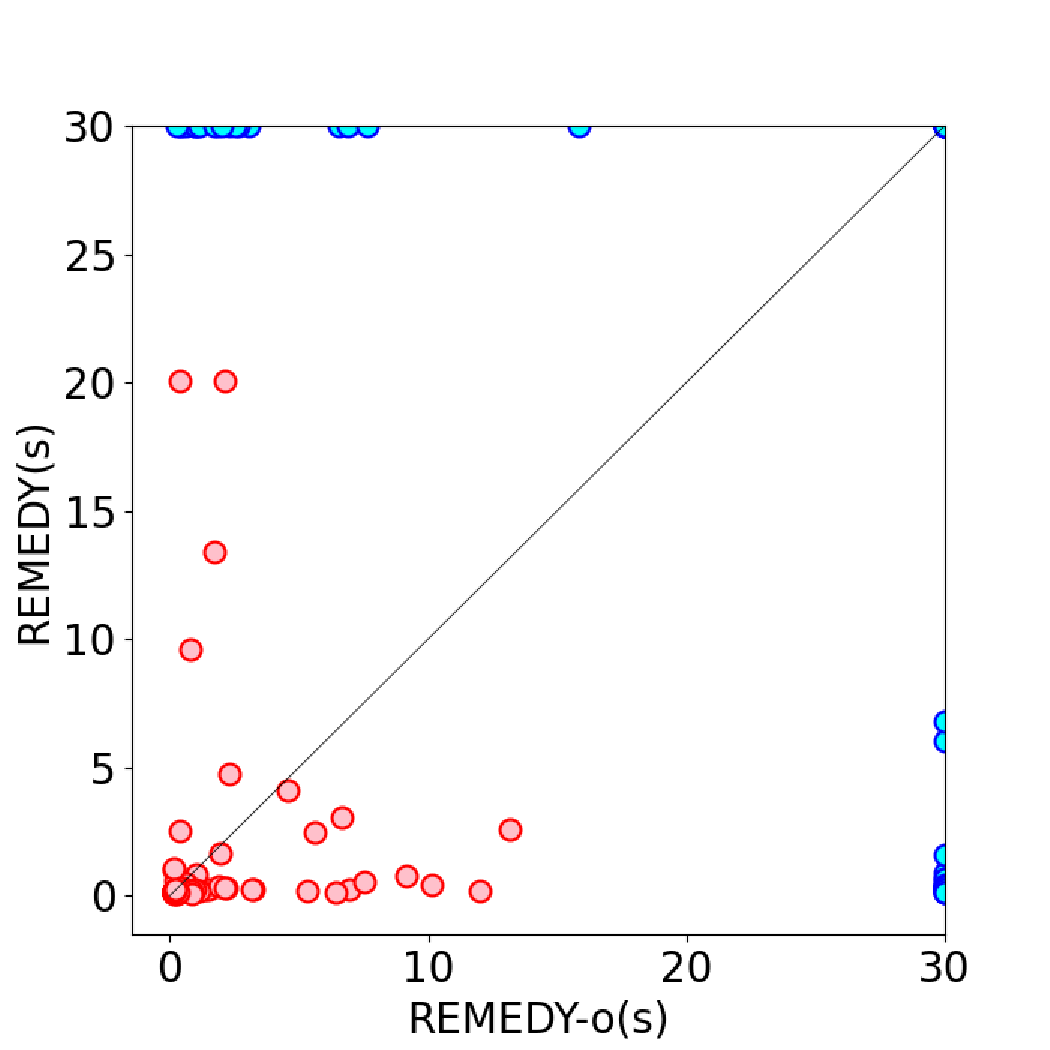}
     \caption{Optimization effect.}
    \label{fig:impact} 
\vspace{-0.5em}
\end{wrapfigure}
We evaluate the effectiveness of the optimization described in Section \ref{subsec:optim}.
The comparison of the running times of \tool{} and \tool{}-o are shown in Figure \ref{fig:impact}. 
We note that, in 23 cases, \tool{}-o solved the instance within the time limit while \tool{} could not, and
conversely in 19 cases, \tool{} solved the instance within the time limit while \tool{}-o could not.

We have observed that \tool{}-o often outperformed \tool{} for regexes that violate the \ltp{} condition at many places.
For example, for the regex 
\[\small
\texttt{(?=[\textasciicircum,])$\any^*$,(?=[\textasciicircum,])$\any^*$,(?=[\textasciicircum,])$\any^*$,(?=[\textasciicircum,])$\any^*$,$\any^+$}
\] 
the desired template is one in which the first four any character, i.e., $\any$, is replaced with a hole.
\tool{} reaches such a template only after trying $2^4-1$ many other templates, whereas
\tool{}-o reaches it immediately.  Conversely, in cases where \tool{} performed better, we have observed that 
the repair benefits from templates that replace the non-\ltp{}-violating parts of the regex with holes.  Since the optimization prevents \textsf{addHoles} from replacing such parts with holes, it can negatively affect the performance in such cases.

In summary, {\em the optimization helps \tool{} to repair regexes that violate the \ltp{} condition at many parts, while it negatively affects cases where non-\ltp{}-violating parts should be repaired}.  Thus, {\em running both \tool{} and \tool{}-o, i.e., \tool{}-h, achieves better performance than running one of them alone}.


\subsection{Comparison to Other State-of-the-art Tools}
\label{subsec:compare}

\begin{table*}[t]
\caption{A comparison of current state-of-the-art tools. \Checkmark~and \xmark~indicate that the tool has and does not have the characteristic, respectively. {\bf Binary Alphabet} and {\bf Multiple Alphabet} indicate that the tool can synthesize a regex over binary and multiple alphabets, respectively. {\bf Correctness Guarantees} indicates that the tool guarantees that synthesized regexes are consistent with all examples.  {\bf Invulnerability Guarantees} indicates that the tool guarantees that synthesized regexes are not ReDoS vulnerable. {\bf Real-world Extensions} indicates that the tool can support real-world extensions.}
\label{tab:compare}
\centering
\begin{tabular}{l|c|c|c|c|c}
\toprule
{\bf Tool} & {\bf Binary Alphabet} & {\bf Multiple Alphabet} & {\bf Correctness Guarantees} & {\bf Invulnerability Guarantees} & {\bf Real-world Extensions} \\
\midrule
\tool{} & \Checkmark & \Checkmark & \Checkmark & \Checkmark & \Checkmark \\
\rowcolor[gray]{0.8}AlphaRegex~\cite{10.1145/3093335.2993244} & \Checkmark & \xmark & \Checkmark & \xmark & \xmark \\
RFixer~\cite{10.1145/3360565} & \Checkmark & \Checkmark & \Checkmark & \xmark & \xmark \\
\rowcolor[gray]{0.8}FlashRegex~\cite{FlashRegex} & \Checkmark & \Checkmark & \Checkmark & \xmark & \xmark \\
\bottomrule
\end{tabular}
\end{table*}

Table \ref{tab:compare} summarizes the characteristics of different PBE tools for generating regexes, and compare them with our tool~\tool{}.
The {\bf Invulnerability Guarantees} column shows that \tool{} is the only one to guarantee the invulnerability.
Indeed, for AlphaRegex and RFixer, AlphaRegex generates vulnerable regexes, e.g., some regexes shown in Table 3 of \cite{10.1145/3093335.2993244} are vulnerable, and RFixer often generates vulnerable regexes as reported by Li et al.~\cite{FlashRegex}.
For FlashRegex, we could not confirm whether FlashRegex actually generates vulnerable regexes or not because FlashRegex is not publicly available (only the dataset is available from the GitHub repository). Additionally, we have contacted the authors, but the implementation was not available.
However, FlashRegex claims to generate invulnerable regexes by only generating deterministic (i.e., 1-unambiguous) regexes, which unfortunately is insufficient for guaranteeing invulnerability as we have shown in Section~\ref{subsec:lineartime}.

The {\bf Real-world Extensions} column shows that \tool{} is the only one to support real-world extensions.
The other state-of-the-art tools, i.e., AlphaRegex, RFixer, and FlashRegex, only support pure regexes and supporting real-world extensions is out of scope for their work as mentioned in their papers.
Additionally, supporting them would require a substantial overhaul as described in Sections \ref{sec:regex}, \ref{sec:problem}, and \ref{sec:algo}.

In summary, {\em \tool{} improves the other state-of-the-art tools from a theoretical point of view, and is the only one to support all characteristics}.
We emphasize that all the other state-of-the-art tools can repair none of the regexes used in the evaluation.

Furthermore, we have compared \tool{} against the DFA-based approach of \cite{10.1145/3129416.3129440}.  Their approach is not PBE but claims to produce an invulnerable pure regex that is semantically equivalent to the given pure regex. 
We performed an experiment by using 100 pure regexes randomly selected from the data set of \cite{Davis:2018:IRE:3236024.3236027}, and compare the size of the regex repaired by the DFA-based approach and \tool{}.

We observed that (1) 100/100 of the regexes repaired by \tool{} are more concise than those of the DFA-based approach, and (2) the size of the regex repaired by the DFA-based approach is 37.3 times larger than that of \tool{} on average.
We observe that this is partly because \tool{} can use real-world extensions (even for repairing pure regexes), and also because the DFA-based approach ensures semantic equivalence, which is often undesirable (cf. Section \ref{sec:limit}).  Note that a semantics-preserving DFA conversion can generate exponentially large DFAs.
For example, for the regex \texttt{$\any^*\any^*$=}, which is vulnerable,
\tool{} returns the repaired invulnerable regex \texttt{$\any\any^*$(?<=[=])}.  On the other hand, the one produced by the DFA-based approach is
\[
\texttt{[\textasciicircum=]$^*$[=]([=]|[\textasciicircum=][\textasciicircum=]$^*$[=])$^*$}
\]
As another example, for the vulnerable regex 
\[
\texttt{<span[\textasciicircum>]$^*$}\texttt{font}\texttt{-}\texttt{s}\texttt{t}\texttt{y}\texttt{l}\texttt{e}\texttt{:italic[\textasciicircum>]$^*$>}
\]
\tool{} returns the repaired invulnerable regex 
\[
\begin{array}{l}
\texttt{<span([\textasciicircum".1-8B-Y$\backslash$[$\backslash\backslash\backslash$]\textasciicircum b-dfh-y]$^*$)}\\
\hspace{5em} \texttt{font}\texttt{-}\texttt{s}\texttt{t}\texttt{y}\texttt{l}\texttt{e}\texttt{:}\texttt{i}\texttt{t}\texttt{a\-}\texttt{l}\texttt{i}\texttt{c}\texttt{([\textasciicircum>]$^*$)>}
\end{array}
\]
while the one returned by the DFA-based approach is of size 73,433,094.
In summary, compared to the DFA-based approach, {\em \tool{} can find simpler and more understandable regexes}.

\texcomment{
\tchanged{}{
Finally, we compare \tool{} with global ReDoS avoidance mechanisms.
Some regex engines incorporate the functionalities to limit the number of backtrackings such as the {\tt pcre.backtrack\_limit} option of PHP or set a timeout limit such as the {\tt Regex.MatchTimeout} property of .NET to defend against ReDoS threats. 
However, the functionalities suffer from the problem that determining a proper limit is non-trivial~\cite{revealear}, which sometimes lead to program bugs or fail to avoid ReDoS, and are not always an option.

In summary, {\em \tool{} offers a complementary solution to ReDoS by repairing vulnerable regexes to secure ones that do not need externally-forced backtracking limits, and therefore are free of these bugs.} Additionally, {\em \tool{} offers a solution that is language/library independent.}

}
}

\subsection{Availability}
Our tool is available in \cite{remedy}.

\if0
表を書くこと

Technique   | binary | multi | correctness guaranteee | invulnerable guarantee | real-world | 
-----
Remedy      |  o     |  o    |  o                     | o                      | o
AlphaRegex[]|  o     |  x    |  o                     | x                      | x
RFixer[]    |  o     |  o    |  o                     | x                      | x
FlashRegex[]|  o     |  o    |  o                     | x※                     | x
Regex Golf[]|  o     |  o    |  o                     | x                      | o

それぞれのツールの説明。
表の説明(Liさんのを参考にすると良い)
Liさんらによって報告されている通り、AlhaRegexとRF

他のツールはそもそも今回我々が扱ったケースは１つも解くことができない。
pureな範囲に限定しても、保証しているn
Liさんが指摘している通り既存ツールはそもそもvulなのを生成しうる。
またLiさんのじしんのものも保証している訳ではない
Regex Golfは遺伝的アルゴリズムを利用したものでreal-worldの拡張も扱うことができるが、
does not provide

\begin{table}
    \centering
    \begin{tabular}{rccc}
        \toprule
        Thing & Value & Value & Value\\
        \midrule
        A & 1 & 2 & 3\\
        B & 1 & 2 & 3\\
        C & 1 & 2 & 3\\
        \specialrule{2.5pt}{1pt}{1pt}
        D & 1 & 2 & 3\\
        E & 1 & 2 & 3\\
        \bottomrule
    \end{tabular}
\end{table}

\fi

\if0
Qalityを評価している論文
1. Automated Clustering and Program Repair for Introductory Programming Assignments (PLDI'18)
2. Symbolic Reasoning for Automatic Signal Placement (PLDI'18)
   > We also assessed the quality of the code generated by EXPRESSO by manually inspecting the synthesized explicit-signal monitors.
3.

---

https://medium.com/bugbountywriteup/introduction-987fdc4c7b0 これ使うと良い？
VuRLE: Automatic Vulnerability Detection and Repair by Learning from Examples ESORICS'17 が参考になる
既存との比較 (生成するregexのうちevilなものの数、速度？、最適化?)

テストケース生成にはRexを使うと良い？
もしくはSDL Regex Fuzzerが良いかも(Shin'yaさんの本に乗ってた)

overfittingしてないことを確かめてqualityとする？

速度

githubで公開されてるwebページのregexを検査
そのなかでevilなものを見つけて修正
\fi

\section{Limitations and Future Work}
\label{sec:limit}

We discuss some limitations of our approach and directions for future work.
The first limitation is that we do not consider {\em extraction} of captured strings.  This is a common limitation in regex repair and synthesis and many other recent works also do not support extraction~\cite{10.1145/3360565,10.1145/3093335.2993244,DBLP:journals/corr/abs-1908-03316,FlashRegex}.  

Extraction is especially problematic for real-world regexes as which string is captured in a lookahead is regex engine dependent.\footnote{It can even cause differences in the matching results in the rare cases where strings captured in lookaheads are backreferenced.  For example, matching $\texttt{(?=(a$^*$)$_1^+$)$\backslash$1a}$ with $\texttt{a}$ succeeds in Python's {\em re} and PCRE, but
fails in ECMAScript~\cite{10.1145/3338906.3338909}.}  A recent work has proposed an approach to cope with the issue in the context of symbolic execution~\cite{10.1145/3314221.3314645} that involves executing an actual regex engine.  But such an approach would be less ideal for repairs where we want to generate a regex that is correct and invulnerable in all contexts.  We leave as future work to investigate the support for extraction.  
It is important to note that our formal definition of vulnerability considers all possible captures that can happen in a lookahead, and thus our approach is regex-engine-{\em in}dependently sound with respect to invulnerability.

The second limitation is the lack of support for {\em semantic equivalence}.  As in other PBE methods, we consider the use case where the given regex is incorrect or only partly built.  As argued by others~\cite{FlashRegex,Davis:2018:IRE:3236024.3236027,Shen:2018:RCR:3238147.3238159}, often, semantic equivalence is too strong to use in practice and PBE is better at reflecting users' intentions.  But in future work, we would like to also support the case where the user is interested in only repairing vulnerability (e.g., because the regex is built correct by using some PBE method).  However, whether a regex can be repaired to be invulnerable while preserving its semantics in general is an open problem.  At least for real-world regexes, there are some reasons to doubt the possibility: semantic equivalence of real-world regexes is undecidable~\cite{backreferenceisundeci} and regexes with backreferences are not determinizable in general~\cite{SCHMID20161}.                            

\if0
我々のキャプチャのsemanticsは実世界の正規表現のそれと一致しないかもしれない。
Specifically, lookaheadsの中にcaptureがある場合、実世界のそれはもっとも最初にマッチしたもののみを返すが、我々のそれはマッチしたもの全てを返す。
このようにしているのは実世界の正規表現エンジンの実装の違いに依存した結果の差異を補完するためである。
事実、この論文で報告されている通り、実世界のそれは実装依存で振る舞いが代わり、キャプチャされるものも変わる。
例えばこのような場合はXXX
そのため、いずれの実装であってもvulnerableなregexを見逃さないため、我々はこのような設計にしている。
\fi

\section{Related work}
\label{sec:related}

As remarked in Section~\ref{sec:intro}, there has been substantial work on PBE methods for synthesizing and repairing regexes~\cite{Alquezar94incrementalgrammatical,10.1145/3093335.2993244, 6994453, 7374717, 10.1145/3360565, DBLP:journals/corr/abs-1908-03316, FlashRegex}.
However, the existing methods do not support the real-world features such as lookarounds and backreferences.  Furthermore, with the
exception of \cite{FlashRegex} discussed below, the existing methods are not designed with resilience to ReDoS in mind and may generate vulnerable 
regexes.

A recent work by Li et al.~\cite{FlashRegex} proposes a PBE regex synthesis and repair method that addresses vulnerability.  Their method guarantees that the generated regex is deterministic (i.e., $1$-unambiguous)~\cite{BRUGGEMANNKLEIN1998182,10.1007/3-540-57273-2_45}.  However, as we have shown in Section~\ref{subsec:lineartime}, $1$-unambiguity is insufficient for invulnerability. Therefore, their method does not guarantee the invulnerability of the returned regexes.  Also, their method only synthesizes and repairs pure regexes and does not support the real-world extensions.   By contrast, our work supports real-world regexes and also formally guarantees the invulnerability of the synthesized regexes.

While not PBE, the work by van der Merwe et al.~\cite{10.1145/3129416.3129440} proposes a technique based on DFA conversion and insertion of positive lookaheads to convert a vulnerable regex into an invulnerable one.  However, they only consider the fragment with the positive lookahead extension.  Also, as discussed in Section~\ref{subsec:compare}, the DFA-based approach can produce complex regexes that are hard to understand.  In a similar vein, Cody-Kenny et al.~\cite{10.1145/3071178.3071196} proposes a genetic-programming based method to convert a regex into one with more efficient matching.  However, their method only supports pure regexes and does not guarantee invulnerability.

While our work concerns {\em repairing} vulnerability, there has been considerable work on the related problem of {\em detecting} vulnerability~\cite{10.1007/978-3-662-54580-5_1, Shen:2018:RCR:3238147.3238159,10.1007/978-3-319-40946-7_27, 10.1007/978-3-642-38631-2_11, SatoshiSugiyama2014}.  It is worth noting that while some (namely \cite{10.1007/978-3-319-40946-7_27,SatoshiSugiyama2014,10.1007/978-3-662-54580-5_1}) proposes to detect vulnerability formally rather than experimentally, no prior work on formal vulnerability detection supports the real-world extensions.  Whether a sound-and-complete detection of vulnerability for real-world regexes is possible is an open question.

Another related work is a recent work by Davis et al.~\cite{DavisSP21} that proposes a regex engine optimization to eliminate super-linear behavior of real-world regex matching at run time.
Finally, a recent work by Loring et al.~\cite{10.1145/3314221.3314645} presents a dynamic symbolic execution method for real-world regexes that addresses the regex-engine-dependent capturing issue mentioned in Section~\ref{sec:limit}.


\section{Conclusion}

We have presented a novel PBE regex repair method that guarantees the invulnerability of synthesized regexes and supports real-world regexes containing extended features of lookarounds, capturing groups, and backreferences.  For this, we have defined
a novel formal semantics of backtracking matching algorithm for real-world regexes and a formal definition of its time complexity.
With them, we have defined the {\em first formal definition of ReDoS vulnerability for real-world regexes}.  Additionally, we have presented a novel condition called {\em real-world strong 1-unambiguity} (\ltp{}) which we proved to be sound for guaranteeing ReDoS invulnerability of real-world regexes, formalized the \ltp{} repair problem and proved its NP-hardness.  We have presented an algorithm for solving the \ltp{} repair problem and experimentally evaluated its implementation, \tool{}, on a real-world data set.  The evaluation have shown that \tool{} can repair vulnerable real-world regexes successfully and efficiently.

To the best of our knowledge, we are the first to tackle the ReDoS vulnerabilities for real-world regexes and the challenge of repairing them, whose theoretical properties are substantially different from that of pure regexes which are tackled by prior works~\cite{Alquezar94incrementalgrammatical,10.1145/3093335.2993244, 6994453, 7374717, 10.1145/3360565, DBLP:journals/corr/abs-1908-03316, FlashRegex,10.1145/3129416.3129440,10.1145/3071178.3071196} that only considered pure regexes and/or did not concern ReDoS vulnerability.

\if0
As reported in several works~\cite{FlashRegex,Davis:2018:IRE:3236024.3236027,Shen:2018:RCR:3238147.3238159}, from practical point of view, repair methods that transform a vulnerable regex into an invulnerable and semantically equivalent one is too strong to use in practice and programming-by-example method is one promising direction.
However, from theoretical point of view, we are interested in methods that guarantee both invulnerability and semantic equivalence in future work.
In fact, whether such a method even exist is an open question, and there are some reasons to doubt its existence: 
semantic equivalence for real-world regexes is undecidable~\cite{backreferenceisundeci} and regular expressions with backreferences are known not to be determinizable in general~\cite{SCHMID20161} (at least with just the backreference extension).
\fi

\section*{Acknowledgements}
We thank the anonymous reviewers for their useful comments.
This work was supported by JSPS KAKENHI Grant Numbers 17H01720, 18K19787, 	20H04162, and 20K20625.

\bibliographystyle{IEEEtran}
\bibliography{main}


\appendices

\section{Full Rules of the Formal Semantics}
\begin{figure*}[t]\footnotesize
\begin{tabular}{cc}
    \begin{minipage}{.5\linewidth}
\infrule[Set of characters]
{ p < |w| \andalso w[p] \in C }
{([C], w, p, \Gamma)  \leadsto{}\ \{ (p+1, \Gamma) \} }
\andalso

\infrule[Set of characters Failure]
{ p \geq |w| \vee w[p] \notin C }
{([C], w, p, \Gamma)  \leadsto{}\ \emptyset }
\andalso

\infrule[Empty String]
{ }
{(\epsilon, w, p, \Gamma)  \leadsto{}\ \{ (p, \Gamma) \} }
\andalso

\infrule[Concatenation]
{(r_1, w, p, \Gamma) \leadsto{} \mathcal{N} \andalso \forall (p_i, \Gamma_i) \in \mathcal{N},\ (r_2, w, p_i, \Gamma_i) \leadsto{} \mathcal{N}_i}
{(r_1 r_2, w, p, \Gamma) \leadsto{} \bigcup_{ 0 \leq i < |\mathcal{N}| } \mathcal{N}_i }
\andalso

\infrule[Union]
{ (r_1, w, p, \Gamma) \leadsto{} \mathcal{N} \andalso (r_2, w, p, \Gamma) \leadsto{} \mathcal{N}' }
{(r_1 | r_2, w, p, \Gamma) \leadsto{} \mathcal{N} \cup \mathcal{N}'}
\andalso



\infrule[Repetition]
{ (r, w, p, \Gamma) \leadsto{} \mathcal{N} \\\andalso \forall (p_i, \Gamma_i) \in (\mathcal{N} \backslash \{ (p,\Gamma) \} ), \ (r^*, w, p_i, \Gamma_i) \leadsto{} \mathcal{N}_i }
{(r^*, w, p, \Gamma) \leadsto{} \{ (p, \Gamma) \} \cup \bigcup_{ 0 \leq i < |(\mathcal{N}\backslash \{(p,\Gamma)\})|} \mathcal{N}_i }

\end{minipage}
\begin{minipage}{.5\linewidth}

\infrule[Capturing group]
{(r, w, p, \Gamma) \leadsto{} \mathcal{N}}
{((r)_j, w, p, \Gamma) \leadsto{} \{ (p_i, \Gamma_i[ j \mapsto w[p..p_i) ]) \mid (p_i,\Gamma_i) \in \mathcal{N} \} }
\andalso 

\infrule[Backreference]
{\Gamma(i) \neq \bot \andalso (\Gamma(i),w,p,\Gamma) \leadsto{} \mathcal{N} }
{(\backslash i, w, p, \Gamma) \leadsto{} \mathcal{N} }
\andalso

\infrule[Backreference Failure]
{ \Gamma(i) = \bot }
{(\backslash i, w, p, \Gamma) \leadsto{} \emptyset}
\andalso


\infrule[Positive lookahead]
{(r, w, p, \Gamma) \leadsto{} \mathcal{N}  }
{(\text{(?=}r\text{)}, w, p, \Gamma) \leadsto{} \{ (p,\Gamma') \mid (\_,\Gamma') \in \mathcal{N} \} }
\andalso

\infrule[Negative lookahead]
{(r, w, p, \Gamma) \leadsto{} \mathcal{N} \andalso{} \mathcal{N}' = \mathit{ite}(\mathcal{N} \neq \emptyset, \emptyset, \{(p,\Gamma)\})}
{(\text{(?!}r\text{)}, w, p, \Gamma) \leadsto{} \mathcal{N}' }
\andalso

\infrule[Positive lookbehind]
{(x, w[p-|x|..p), 0, \Gamma) \leadsto{} \mathcal{N} \andalso \mathcal{N}' = \mathit{ite}(\mathcal{N} \neq \emptyset, \{(p,\Gamma)\}, \emptyset)}
{(\text{(?\textless=}x\text{)}, w, p, \Gamma) \leadsto{} \mathcal{N}' }
\andalso

\infrule[Negative lookbehind]
{(x, w[p-|x|..p), 0, \Gamma) \leadsto{} \mathcal{N} \andalso \mathcal{N}' = \mathit{ite}(\mathcal{N} \neq \emptyset, \emptyset, \{(p,\Gamma)\})}
{(\text{(?\textless!}x\text{)}, w, p, \Gamma) \leadsto{} \mathcal{N}' }

\end{minipage}
\end{tabular}
\caption{Rules of the matching relation $\leadsto$}
\label{fig:fullsemanticsv} 
\end{figure*}

The full rules for deriving the matching relation $\leadsto$ is shown in Figure~\ref{fig:fullsemanticsv}.
We describe the rules for the pure regex features which were not explained in Section~\ref{sec:regex}.  In the two rules for a set of characters, the regex $[C]$ tries to match the string $w$ at the position $p$ with the function capturing $\Gamma$. If the $p$-th character $w[p]$ is in the set of character $C$, then the matching succeeds returning the matching result $(p+1, \Gamma)$ (\textsc{Set of characters}).  Otherwise, the character $w[p]$ does not match or the position is at the end of the string, and $\emptyset$ is returned as the matching result indicating the match failure (\textsc{Set of characters Failure}).
%
The rules (\textsc{Empty String}), (\textsc{Concatenation}), (\textsc{Union}) and (\textsc{Repetition}) are self explanatory.  Note that we avoid self looping in (\textsc{Repetition}) by not repeating the match from the same position.

\section{Full rules for generating consistency-with-example constraints}
\begin{figure*}[t]\footnotesize
\begin{tabular}{cc}
    \begin{minipage}{.5\linewidth}

\infrule[Set of characters]
{ p < |w| \andalso{} w[p] \in C }
{ ([C], w, p, \Gamma, \phi) \encode{}  (\{ (p+1,\Gamma, \phi) \}, \emptyset) }
\andalso

\infrule[Concatenation]
{(\sstate_1, w, p, \Gamma, \phi) \encode{} (\mathcal{S}, \mathcal{F}) \andalso\\ \forall (p_i, \Gamma_i, \phi_{i}) \in \mathcal{S}.\ (\sstate_2, w, p_i, \Gamma_i, \phi_{i}) \encode{} (\mathcal{S}_i, \mathcal{F}_i)}
{(\sstate_1 \sstate_2, w, p, \Gamma, \phi) \encode{} (\bigcup_{0 \leq i < |\mathcal{S}|}\mathcal{S}_i, \mathcal{F} \cup \bigcup_{0 \leq i < |\mathcal{S}|} \mathcal{F}_i) }
\andalso

\infrule[Union]
{ (\sstate_1, w, p, \Gamma, \phi) \encode{} (\mathcal{S}_1, \mathcal{F}_1) \andalso (\sstate_2, w, p, \Gamma, \phi) \encode{} (\mathcal{S}_2, \mathcal{F}_2) }
{(\sstate_1 | \sstate_2, w, p, \Gamma, \phi) \encode{} (\mathcal{S}_1 \cup \mathcal{S}_2, \mathcal{F}_1 \cup \mathcal{F}_2) }
\andalso

\infrule[Repetition]
{ (\sstate, w, p, \Gamma, \phi) \encode{} (\mathcal{S}, \mathcal{F}) \andalso\\ \forall (p_i, \Gamma_i, \phi_{i}) \in (\mathcal{S} \backslash \{ (p,\Gamma, \_) \} ).\ (t^*, w, p_i, \Gamma_i, \phi_{i}) \encode{} (\mathcal{S}_i, \mathcal{F}_i) }
{(\sstate^*, w, p, \Gamma, \phi) \encode{} ( \{ (p, \Gamma, \phi) \} \cup \bigcup_{0 \leq i < |\mathcal{S}|} \mathcal{S}_i, \emptyset ) }
\andalso

\infrule[Capturing group]
{(\sstate, w, p, \Gamma, \phi) \encode{} (\mathcal{S}, \mathcal{F})}
{((\sstate)_i, w, p, \Gamma, \phi) \encode{} (\bigcup_{ (p_i, \Gamma_i, \phi_{ci}) \in \mathcal{S} } (p_i, \Gamma_i[ i \mapsto w[p..p_i) ], \phi_{ci}), \mathcal{F}) }
\andalso

\infrule[Backreference]
{ \text{Let $x$ = }\Gamma(i) \andalso x = w[p..p+|x|) }
{(\backslash i, w, p, \Gamma, \phi) \encode{} ( \{ (p+|x|, \Gamma, \phi) \} , \emptyset) }

\end{minipage}
\begin{minipage}{.5\linewidth}


\infrule[Positive lookahead]
{(\sstate, w, p, \Gamma, \phi) \encode{} (\mathcal{S}, \mathcal{F}) }
{(\text{(?=}\sstate\text{)}, w, p, \Gamma, \phi) \encode{} ( \{(p,\Gamma',\phi') \mid (\_,\Gamma', \phi') \in \mathcal{S}\}, \mathcal{F}) }
\andalso

\infrule[Negative lookahead]
{(\sstate, w, p, \Gamma, \phi) \encode{} (\mathcal{S}, \mathcal{F})}
{(\text{(?!}\sstate\text{)}, w, p, \Gamma, \phi) \encode{}\\ ( \{ (p,\Gamma,\phi') \mid (\bot,\bot,\phi') \in \mathcal{F} \}, \{ (\bot,\bot,\phi') \mid (\_,\_,\phi') \in \mathcal{S} \} ) } 
\andalso

\infrule[Positive lookbehind]
{(x, w[p-|x|,p), 0, \Gamma, \phi) \encode{} (\mathcal{S}, \mathcal{F})}
{(\text{(?\textless=}x\text{)}, w, p, \Gamma, \phi) \encode{} ( \{  (p,\Gamma, \phi') \mid (p', \Gamma', \phi')  \in \mathcal{S} \}, \mathcal{F}) }
\andalso

\infrule[Negative lookbehind]
{(x, w[p-|x|,p), 0, \Gamma, \phi) \encode (\mathcal{S}, \mathcal{F}) }
{(\text{(?\textless!}x\text{)}, w, p, \Gamma, \phi) \encode{}\\ (\{ (p, \Gamma, \phi') \mid (\bot, \bot, \phi') \in \mathcal{F} \},  \{ (\bot,\bot,\phi') \mid (\_,\_,\phi') \in \mathcal{S} \} ) }
\andalso

\infrule[Hole]
{\text{$\hole$ is the $i$-th hole}}
{(\hole, w, p, \Gamma, \phi) \encode (\{ (p+1, \Gamma, \phi \land v_i^{w[p]}) \}, \{ (\bot, \bot, \phi \land \lnot v_i^{w[p]}) \}) }

\end{minipage}
\end{tabular}
\caption{Rules for generating consistency-with-examples constraints.}
\label{tab:fullencode} 
\end{figure*}

The full rules for generating the consistency-with-examples constraints is shown in Figure~\ref{tab:fullencode}.
The cases where the matching fails, that is, $(r, w, p, \Gamma, \phi) \encode (\emptyset, \{(\bot,\bot,\phi)\})$, are omitted.

\section{The proof of Theorem~\ref{theo:nphardness}}

We first review \textsc{ExactCover}.
\begin{definition}[Exact Cover]
\normalfont
Given a finite set $\mathcal{U}$ and $\mathcal{S} \subset \mathcal{P}(\mathcal{U})$,
\textsc{ExactCover} is the problem of deciding if there exists $\mathcal{S'} \subseteq \mathcal{S}$ such that for every $i \in \mathcal{U}$, there is a unique $S \in \mathcal{S'}$ such that $i \in S$.
\end{definition}

\begin{proof}
We give a reduction from the exact cover to the repair problem.  Let  $\mathcal{S} = \{ S_1, S_2, ..., S_k \}$.
We create the following (decision version of) \ltp{} repair problem:
\begin{itemize}
\item The alphabet $\Sigma = \mathcal{U}$;
\item The set of positive examples $P = \mathcal{U}$;
\item The set of negative examples $N = \emptyset$;
\item The distance bound is $2k$; and
\item The pre-repair expression $r_1 = r_{11}r_{12}$ where $r_{11}$ and $r_{12}$ are as defined below:
\[
\setlength\arraycolsep{2pt}
\begin{array}{rcl}
r_{11} & = & \epsilon (\mbox{?=}[S_1])^{2k} (\epsilon)_1 [S_1] (\epsilon)_2 \\
       & \mid & \epsilon (\mbox{?=}[S_2])^{2k} (\epsilon)_3 [S_2] (\epsilon)_4 \\ 
       & \mid & ... \\
       & \mid & \epsilon (\mbox{?=}[S_k])^{2k} (\epsilon)_{2k-1} [S_k] (\epsilon)_{2k} \\
r_{12} & = & ((?!\backslash1)|(\mbox{\mbox{?=}}\backslash1\backslash2))^{2k}\\&&((?!\backslash3)|(\mbox{?=}\backslash3\backslash4))^{2k} ... ((?!\backslash2k-1)|(\mbox{?=}\backslash2k-1\backslash2k))^{2k}.
\end{array}
\]
\end{itemize}
Here, $r^{2k}$ is the expression obtained by concatenating $r$ $2k$ times.

It is easy to see that this is a polynomial time reduction since the construction of $r_1$ can be done in time cubic in the size of the input \textsc{ExactCover} instance.
Also, note that the above is a valid \ltp{} repair problem instance because $P = \mathcal{U} \subseteq L(r_1)$ and $L(r_1) \cap N = \emptyset$.
We show that reduction is correct, that is, the input \textsc{ExactCover} instance has a solution iff there exists $r_2$ satisfying conditions (1)-(3) of Definition~\ref{def:ltprepair} and $\distfunc{}(r_1, r_2) \leq 2k$.  First, we show the only-if direction, let $\mathcal{S'} \subset \mathcal{S}$ be a solution to the \textsc{ExactCover} instance.  The repaired expression $r_2 = r_{21} r_{22}$ where $r_{22} = r_{12}$, and 
$r_{21}$ is $r_{11}$ but with each $i$-th head $\epsilon$ in the union replaced by $[\emptyset]$ iff 
$S_i \notin \mathcal{S'}$.  Note that $\distfunc{}(r_1, r_2) = 2|\mathcal{S}\setminus\mathcal{S'}| \leq 2k$.  Also, $r_2$ satisfies the \ltp{} condition because for every $a \in \mathcal{U}$, there exists only one $S_i \in \mathcal{S'}$ such that $a \in S_i$, i.e., on any input string starting with $a$, we deterministically move to the $i$-th choice in the union (and there are no branches after that point).  Also, $r_2$ correctly classifies the examples.  To see this, consider an arbitrary $a \in P = \mathcal{U}$.  Then, $a$ is included in some $S_i \in \mathcal{S'}$.  Therefore, the matching passes the $r_{21}$ part with successful captures at indexes $2i-1$ and $2i$, and passes the $r_{22}$ part because the negative lookahead $(?!\backslash j)$ succeeds for all $j \neq i$ and the positive lookahead $(\mbox{?=}\backslash2i-1\backslash2i)$ succeeds.  Thus, $r_2$ is a correct repair.

We show the if direction.  First, note that any valid repair of $r_1$ must preserve the $k$ union choices of $r_{11}$ because deleting any union choice would already exceed the cost $2k$.  From this, it is not hard to see that the only possible change is to change the head $\epsilon$ in the union choices in $r_{11}$.  For instance, it is useless to change $[S_i]$ to some $r$ where $L(r)$ contains elements not in $S_i$ because of the $2k$ many $(\mbox{?=}[S_i])$ preceding it.  Note that changing $(\mbox{?=}[S_i])^{2k}$ would exceed the cost.  Nor, can $[S_i]$ be changed to some $r$ where $L(r)$ does not contain an element of $S_i$ because of the capturing group $(\epsilon)_{2i}$ and $(\epsilon)_{2i-1}$ before and after $[S_i]$ and the check done in $r_{12}$.  Note that changing any of the check in $r_{12}$ would again exceed the cost.  This also shows that the capturing groups $(\epsilon)_{2i}$ and $(\epsilon)_{2i-1}$ cannot be changed. Therefore, the only meaningful change that can be done is to change some of the head $\epsilon$ in $r_{11}$ to some $r$.  Note that for any $r$ chosen here, by the \ltp{} property, $r_2$ will not accept $\{ a \mid aw \in L(r) \}$ as any input $a \in \{ a \mid aw \in L(r) \}$ would direct the match algorithm deterministically to this choice but the match would fail when it proceeds to $[S_i]$.  Therefore, the only change that can be done is to change it to some $r$ such that $L(r) = \emptyset$.  Then, from a successful repair $r_2$, we obtain the solution $\mathcal{S'}$ to the \textsc{ExactCover} instance  where $S_i \notin \mathcal{S'}$ iff the $i$-th head $\epsilon$ in $r_{21}$ is changed to some $r$ such that $L(r) = \emptyset$.
\end{proof}

\if0
Now, we explain how to construct the regex $r_1$ in the repair problem from the input of \textsc{ExactCover} in detail.
The basic idea of the construction is that for each $\mathcal{S}_i \in \mathcal{S}$, we take a union of them.
That is, if $\mathcal{S} = \{ \{1,2\}, \{3\}, \{2,3\} \}$, then we construct $r_1$ as $[12]|[3]|[23]$.
Then, we remain $[\mathcal{S}_i]$ unchanged if we choose $\mathcal{S}_i$ as a solution of \textsc{ExactCover}, otherwise we change $[\mathcal{S}_i]$ to a failure $\emptyset$ if we does not choose it as the solution.
Thus, on the above example, if we choose $\{1,2\}$ and $\{3\}$ as a solution of \textsc{ExactCover}, then the solution is $[12]|[3]|\emptyset$.
However, the construction is not enough because the repair problem allows us to change the regex more flexible and it may enable us to reduce the cost of the repair.
For example, $\mathcal{S} = \{ \{1,2,3,4,5\}, \{1,2\}, \{3,4\}, \{1,5\} \}$ is a concrete counterexample.
If we follow only the basic idea, the regex $r_1 = [12345] | [12] | [34] | [15]$.
In this case, the solution of the repair problem is $\emptyset|[12]|[34]|[5]$ because it holds the \ltp{}, is consistent with all examples $P = \{1,2,3,4,5\}$ and $N=\emptyset$, and has a minimum cost 4, while the solution of \textsc{ExactCover} is to choose only the set $\{1,2,3,4,5\}$, i.e., the solution of the repair must be $[12345]|\emptyset|\emptyset|\emptyset$.
Notice that the repair problem does not choose the regex $[12345]|\emptyset|\emptyset|\emptyset$ as the solution because the repair does not have a minimum cost, i.e., it changes three sets of characters, i.e., $[12]$, $[34]$, and $[15]$, and thus the cost is 6.

Therefore, to overcome this situation, we improve the basic idea such that changing $[\mathcal{S}_i]$ to a failure $\emptyset$ is at a low cost, otherwise, it takes a high cost.
More precisely, we focus on the fact that if there exists a solution of \textsc{ExactCover}, then we can get a solution of the repair problem at most $2k$ costs only by remaining $[\mathcal{S}_i]$ unchanged or changing it to a failure.
Thus, we modify the basic idea such that if we change $[\mathcal{S}_i]$ to the other regex, except for a failure, then it takes at least $2k$ costs to change the language.
To this end, we apply the following two modifications to the construction of the regex $r_1$:
\begin{enumerate}
\item We modify $r_1$ to impose $2k$ costs if we change $[\mathcal{S}_i]$ to the other regex $r'$ such that $\mathcal{S}_i \subset L(r')$.
More specifically, for each $\mathcal{S}_i \in \mathcal{S}$, we change $[\mathcal{S}_i]$ to $(\text{?=}[\mathcal{S}_i])^{2k} [\mathcal{S}_i]$.
This modification impose $2k$ costs to the change because if we change $[\mathcal{S}_i]$ to the other regex $r'$ such that $\mathcal{S}_i \subseteq L(r')$, it does not change the language because the positive lookaheads $(\text{?=}[\mathcal{S}_i])^{2k}$ reject all alphabets $a \in L(r') \backslash \mathcal{S}_i$.
Hence, to change the language, we also have to change the positive lookaheads and the cost of the change is $2k$.

\item We modify $r_1$, i.e., now $r_1 = (\text{?=}[\mathcal{S}_i])^{2k} [\mathcal{S}_i]$, to impose $2k$ costs if we change $r_1$ to the other regex $r'$ such that $L(r') \subset \mathcal{S}_i$.
To achieve this, we focus on the fact that we can recognize the change by checking the behavior of the matching of $(\text{?=}[\mathcal{S}_i])^{2k}$ and $r'$.
If we do not change $[\mathcal{S}_i]$, i.e., $r_1 = (\text{?=}[\mathcal{S}_i])^{2k} [\mathcal{S}_i]$, then
$(\text{?=}[\mathcal{S}_i])^{2k}$ and $[\mathcal{S}_i]$ succeed on the input string $a \in \mathcal{S}_i$, and 
$(\text{?=}[\mathcal{S}_i])^{2k}$ fails on the input string $a \notin \mathcal{S}_i$.
On the other hand, if we change $[\mathcal{S}_i]$ to the other regex $r'$ such that $L(r') \subset \mathcal{S}_i$, i.e., $r_1 = (\text{?=}[\mathcal{S}_i])^{2k} r'$, then
$(\text{?=}[\mathcal{S}_i])^{2k}$ succeeds but $r'$ fails on the input string $a \in \mathcal{S}_i \backslash L(r')$.
Then, we exploit the difference using positive lookahead, capturing groups, and backreferences.
More specifically, for each $\mathcal{S}_i \in \mathcal{S}$, we change $[\mathcal{S}_i]$ to $(\text{?=}(\epsilon)_{2i-1})$ $[\mathcal{S}_i]$ $(\text{?=}(\epsilon)_{2i})$ and add $((\text{?!}\backslash2i-1)|(\text{?=}\backslash2i-1\backslash2i))^{2k}$ at the end of $r_1$.
This modification allows us to recognize the difference of the behavior, and if we change $[\mathcal{S}_i]$ to $r'$, then the matching fails on the input string $a \in \mathcal{S}_i \backslash L(r')$ due to the regex $((\text{?!}\backslash2i-1)|(\text{?=}\backslash2i-1\backslash2i))^{2k}$.
Note that a backreference fails if there is no capturing group that corresponds to the backreference or the capturing groups does not succeed.
Thus, to accept the input string $a$, we have to remove $((\text{?!}\backslash2i-1)|(\text{?=}\backslash2i-1\backslash2i))^{2k}$, but it takes $2k$ costs.
Finally, we add $\epsilon$ to the front of $r_1$ to enable us to eliminate the language of $[\mathcal{S}_i]$ from $r_1$ with 2 costs because if we change $[\mathcal{S}_i]$ in $r_1$ to $\emptyset$, then it require $2k$ costs.
Consequently, the regex that corresponds $\mathcal{S}_i \in \mathcal{S}$ is $\epsilon (\text{?=}[\mathcal{S}_i]^{2k}) (\text{?=}(\epsilon)_{2i-1}) [\mathcal{S}_i] (\text{?=}(\epsilon)_{2i}) ((\text{?!}\backslash2i-1)|(\text{?=}\backslash2i-1\backslash2i))^{2k}$.
\end{enumerate}

The solution for the repair problem is a regex $r_2$ $=$ ( $\mathcal{C}_1$ (?=$[\mathcal{S}_1]$)$^{2k}$ (?=$(\epsilon)_1$) $[\mathcal{S}_1]$ (?=$(\epsilon)_2$) | $\mathcal{C}_2$ (?=$[\mathcal{S}_2]$)$^{2k}$ (?=$(\epsilon)_3$) $[\mathcal{S}_2]$ (?=$(\epsilon)_4$) | ... | $\mathcal{C}_k$ (?=$[\mathcal{S}_k]$)$^{2k}$ (?=$(\epsilon)_{2k-1}$) $[\mathcal{S}_k]$ (?=$(\epsilon)_{2k}$) ) ( ((?!$\backslash1$)|(?=$\backslash1\backslash2$))$^{2k}$((?!$\backslash3$)|(?=$\backslash3\backslash4$))$^{2k}$ ... ((?!$\backslash2k-1$)|(?=$\backslash2k-1\backslash2k$))$^{2k}$)
 iff the exact cover has a solution $\mathcal{S'} = \{\mathcal{S}_{i_1}, \mathcal{S}_{i_2}, ..., \mathcal{S}_{i_j}\}$.
Here, we assume that $\mathcal{C}_i$ is $\epsilon$ if $\mathcal{S}_i \in \mathcal{S'}$, otherwise $\mathcal{C}_i$ is a failure $\emptyset$.
Additionally, if there are two or more solutions for the exact cover, then we chose a solution $\mathcal{S'}$ that has the largest size.


First, we show that $r_{2}$ is a valid solution, i.e., $r_2$ is a regex that holds an \ltp{}, is consistent with examples, and has a minimal cost, for the repair problem.
$r_{2}$ holds an \ltp{} because for every distinct $\mathcal{S}_{i_j}$, $\mathcal{S}_{i_{j'}}$ $\in \mathcal{S'}$, $L(\mathcal{S}_{i_j}) \cap L(\mathcal{S}_{i_{j'}}) = \emptyset$.
In addition, $r_2$ is consistent with examples because, for a set of positive example $P = \mathcal{U}$, $\mathcal{U} \in L(r_{2})$ from the assumption that $\mathcal{S'} = \{\mathcal{S}_{i_1}, \mathcal{S}_{i_2}, ..., \mathcal{S}_{i_k}\}$ is a solution of the exact cover. 
We do not need to consider the set of negative examples $N$ because $N = \emptyset$.
The distance $D(r, r_{2})$ is the minimum.
If the distance is not a minimum, it implies that we can remain more $\epsilon$ without changing them to a failure $\emptyset$.
However, it is impossible because we chose the solution $\mathcal{S}'$ that has the largest size from the assumption.
Hence, $r_{2}$ is a valid solution.

Second, we show that there is no solution except for the above one.
(1) We assume that the exact cover has a solution and there is another solution that is not $r_2$.
We already show that $r_2$ has a minimum cost, and the minimum cost can be obtained by replacing only sets of characters.
That is, the cost of the solution is $2(k-j)$.
To obtain another solution, we have to change the lookaheads (?=$\mathcal{S}_i$)$^{2k}$ or ((?!$\backslash i$)|(?=$\backslash i \backslash i+1$))$^{2k}$ to the other expressions.
However, it takes at least the cost $4k \geq 2(k-j)$.
Thus, there is no such solution.
(2) We assume that the exact cover does not have a solution but the repair problem has a solution $r_2$. 
This is contradiction because we can construct a solution for the exact cover from $r_2$.

Consequently, we can solve the exact cover by solving the repair problem.
This means that the repair problem is at least as hard as the exact cover.
Note that the construction of the input for the repair problem from the exact cover and checking of the form of $r_2$ runs in polynomial time.
\fi
\if0
\begin{proof}
We give a reduction from the exact cover with $k$ sets to the repair problem.
We construct the following input instance for the repair problem:
we set $\Sigma = \mathcal{U} \cup \{\alpha\}$, where $\alpha \notin \mathcal{U}$, and a set $\mathcal{S}_i \in \mathcal{S}$ to a set of character $[\mathcal{S}_i]$.
Then, we set a regex $r _1= [\alpha_1] | [\alpha_2] | \cdots | [\alpha_k]$, a finite set of positive examples $P = \mathcal{U}$, and a finite set of negative examples is an empty set $\emptyset$.

The solution for the repair problem is $r_{2} = \mathcal{S}_{j_1}| \mathcal{S}_{j_2} | \cdots | \mathcal{S}_{j_k}$ iff the exact cover with $k$ sets has a solution $\mathcal{S'} = \{\mathcal{S}_{j_1}, \mathcal{S}_{j_2}, \cdots, \mathcal{S}_{j_k}\}$.
Note that the order of $\mathcal{S}_{j_i}$ in the solution for the repair problem does not matter.

First, we show that $r_{2}$ is a valid solution, i.e., $r_2$ is a regex that holds an \ltp{}, is consistent with examples, and has a minimal cost, for the repair problem.
$r_{2}$ holds an \ltp{} because for every distinct $\mathcal{S}_{j_i}$, $\mathcal{S}_{j_{i'}}$ $\in \mathcal{S'}$, $L(\mathcal{S}_{j_i}) \cap L(\mathcal{S}_{j_{i'}}) = \emptyset$.
In addition, $r_2$ is consistent with examples because, for a set of positive example $P = \mathcal{U}$, $\mathcal{U} \in L(r_{2})$ from the assumption that $\mathcal{S'} = \{\mathcal{S}_{j_1}, \mathcal{S}_{j_2}, \cdots, \mathcal{S}_{j_k}\}$ is a solution of the exact cover with $k$ sets. 
We do not need to consider the set of negative examples $N$ because $N = \emptyset$.
The distance $D(r, r_{2})$ is the minimum distance because we have to replace every $[\alpha_i]$, this is because $\alpha_i \notin \mathcal{U}$, and the costs of the edit come from only the replacements.
Hence, $r_{2}$ is a valid solution.

Second, we show that there is no solution except for the above one.
(1) We assume that the exact cover with $k$ sets has a solution and there is another solution that is not $r_2$.
We already show that $r_2$ has a minimum cost, and the minimum cost can be obtained by replacing only sets of characters.
Thus, there is no such solution.
(2) We assume that the exact cover with $k$ sets does not have a solution but the repair problem has a solution $r_2$. 
This is contradiction because we can construct a solution for the exact cover with $k$ sets from $r_2$.

Consequently, we can solve the exact cover with $k$ sets by solving the repair problem.
This means that the repair problem is at least as hard as the exact cover with $k$ sets.
Note that the construction of the input for the repair problem from the exact cover with $k$ sets and checking of the form of $r_2$ runs in polynomial time.
\end{proof}
\fi

\section{Correctness of \ltp{}}
\label{subsec:proof}
In this section, we show that a regex that satisfies \ltp{} is invulnerable.
Before we go on with the main proof, we show that lookaheads that satisfy \ltp{} runs in constant time to eliminate lookaheads from the later arguments.
\begin{theorem}
\label{theo:starfree}
A regex that does not contain repetitions, unions, and backreferences runs in constant time.
\end{theorem}
\begin{proof}
We prove that, for such a regex $r$, the size of $\mathcal{N}$ where $(r,w,0,\emptyset)\leadsto \mathcal{N}$ is constant.
This proof is by induction on the structure of the regex. 
\if0
\begin{description}
\item[Base case] $\empty$
	\begin{itemize}
	\item When $r$ is $[C]$, $\mathcal{N} = \{ (p+1, \Gamma) \}$ or a failure. Thus, the size of both $\mathcal{N}$ is one, i.e., constant.
	\end{itemize}	
	\item[Inductive case] $\empty$
	\begin{itemize}
	\item When $r$ is $r_1r_2$, we assume that $(r_1, w, p, \Gamma) \leadsto{} \mathcal{N}$.
	Due to the induction hypothesis, the size of $\mathcal{N}$ is constant.
	Then, $\forall (p', \Gamma') \in \mathcal{N}$, $(r_2, w, p', \Gamma') \leadsto{} \mathcal{N'}$ and the size of $\mathcal{N'}$ is also constant.
	Thus, the size of $\bigcup_{0\leq i< |\mathcal{N}|}\mathcal{N}_i$ is constant.
	\item When $r$ is $r_1 | r_2$, the size of both $\mathcal{N}$ is constant by the induction hypothesis.
	\end{itemize}
	The other cases, i.e., ($r$)$_i$,  (?=$r$), (?!$r$), (?\textless=$x$), and (?\textless!$x$) are also proved by the induction hypothesis.
\end{description}
\fi
\end{proof}
By the definition of \ltp{}, lookaheads that satisfy \ltp{} do not contain repetitions and backreferences.
Hence, lookaheads in a regex that satisfies \ltp{} also run in constant time.
For this, lookarounds, lookbehinds, and empty strings run in constant time.  Also, they consume no characters.
Thus, in what follows, without loss of generality, we assume that a regex does not contain empty strings, lookaheads, and lookbehinds.

We map a derivation tree to a {\em directed tree}.
\begin{definition}[Directed Tree]
\normalfont
A {\em directed graph} is a tuple $G = (V,E)$.
Here, $V$ is a finite set of vertices and $E$ is a finite set of directed edges.
A vertex $v = (\freshi{}, p, \Gamma) \in V$ consists of a unique index $\freshi{}$ and a matching result $(p, \Gamma)$.
A directed edge ({\em edge} for short) $e = (v_1, v_2) \in E$ consists of two vertices $v_1$ (often called {\em tail}) and $v_2$ (often called {\em head}).  A {\em directed tree} is a directed graph that is of a tree shape (i.e., has no cycles and $|E| = |V|-1$.).

\end{definition}
We define the size of a directed tree $G = (V,E)$ as the size of $E$, i.e., $|G| = |E|$.
We use the notation $\indig(v) = |\{ v' \mid (v', v) \in E \}|$, $\outdig(v) = |\{ v' \mid (v, v') \in E \}|$, \leaf{}$(G) = \{ v \mid v \in V \land \outdig(v) = 0 \}$, \groot{}$(G) = v$ such that $v \in V \land \indig(v) = 0$, $\tail{}(e) = v$ and $\head{}(e) =v'$ for an edge $e=(v,v')$, and $E(v)$, where $v \in V$, for a set of heads, i.e., $E(v) = \{ v' \mid (v,v') \in E \}$.
For a tuple $t = (t_1,\cdots,t_n)$, we write $\#_i(t)$ for $t_i$, where $1 \leq i \leq n$.

We define a construction $\togform{}$ from a derivation tree $A$ to the directed tree $G$ as follows: 
Here, $\fresh{}()$ returns a fresh identifier, and, $a := b$ means that $a$ is replaced with $b$.  For a directed tree $G = (V,E)$ and $G' = (V',E')$ such that $V \cap V' = \emptyset$ and $v \in \leaf(G)$, we write $G[v \mapsto G']$ for the graph $(V \cup V'\setminus \{v\}, E \cup E' \cup \{(v',\groot(G'))\} \setminus \{ (v',v) \})$ where $v'$ is the unique vertex such that $(v',v) \in E$.  I.e., $G[v \mapsto G']$ is the graph obtained by replacing the leaf $v$ of $G$ by (the root of) $G'$.
Additionally, we assume that $G_i = (V_i, E_i)$.
\begin{itemize}
    \item Case (\textsc{Set of characters}). $G = (\{v_1, v_2\},\{ (v_1,v_2) \})$, where $v_1 = (\fresh(), p, \Gamma)$ and $v_2 = (\fresh(), p+1, \Gamma)$.
    \item Case (\textsc{Set of characters Failure}). $G = (\{v_1, v_2\},\{ (v_1,v_2) \})$, where $v_1 = (\fresh(), p, \Gamma)$ and $v_2 = (\fresh(), \emptyset{}, \Gamma)$.    
    \item Case (\textsc{Concatenation}). Let $G_1 = \togform{}((r_1, w, p, \Gamma) \leadsto{} \mathcal{N})$. For all $(p_i, \Gamma_i) \in \mathcal{N}$, there exists a vertex $(\_, p_i, \Gamma_i) \in \leaf{}(V)$.
    For all $(\_, p_i, \Gamma_i) \in \leaf{}(V)$, let $G_{2i} = \togform{}((r_2,w,p_i,\Gamma_i) \leadsto{} \mathcal{N}_i)$, $G_1 := G_1[(\_, p_i, \Gamma_i) \mapsto{} G_{2i}]$.   $G = (V_1 \cup \{ v \}, E_1 \cup \{ (v,\groot(G_1)) \}$ where $v = (\fresh(),p,\Gamma)$.
    
    \item Case (\textsc{Union}). Let $G_1 = \togform{}((r_1, w, p, \Gamma) \leadsto{} \mathcal{N})$ and $G_2 = \togform{}((r_2, w, p, \Gamma) \leadsto{} \mathcal{N'})$. $G = ( V_1 \cup V_2 \cup \{ v \}, E_1 \cup E_2 \cup \{ (v, \groot{}(G_1)), (v, \groot{}(G_2)) \} )$, where $v = (\fresh(), p, \Gamma)$.
    
    \item Case (\textsc{Repetition}). 
    Let $G_1 = \togform{}((r,w,p,\Gamma)\leadsto{}\mathcal{N})$.
    For all $(p_i, \Gamma_i) \in \mathcal{N}$, there exists a vertex $(\_, p_i, \Gamma_i) \in \leaf{}(V_1)$.
    For all $(\_, p_i, \Gamma_i) \in leaf{}(V_1)$, let $G_{2i} = \togform{}((r^*,w,p_i,\Gamma_i)\leadsto{}\mathcal{N}_i)$, $G_1 := G_1[(\_, p_i, \Gamma_i) \mapsto{} G_{2i}]$.
    $G = (V_1 \cup \{ v_1, v_2 \}, E_1 \cup \{ (v_1,v_2), (v_1,\groot{}(G_1)) \} )$, where $v_1 = (\fresh(), p, \Gamma)$ and $v_2 = (\fresh(), p, \Gamma)$.
    
    \item Case (\textsc{Capturing group}).
    Let $G_1 = \togform{}((r,w,p,\Gamma) \leadsto{} \mathcal{N})$.
    $G = ( V_1 \cup \{ v \}, E_1 \cup \{ (v, \groot{}(G_1)) \} )$.
    
    \item Case (\textsc{Backreference}).
    Let $G_1 = \togform{}((\Gamma(i),w,p,\Gamma)\leadsto{}\mathcal{N})$.
    $G = (V_1 \cup \{ v \} , E_1 \cup \{ (v,\groot{}(G_1)) \})$ where $v = (\fresh(),p,\Gamma)$.
    \item Case (\textsc{Backreference Failure}). $G = (\{v_1, v_2\},\{ (v_1,v_2) \})$, where $v_1 = (\fresh(), p, \Gamma)$ and $v_2 = (\fresh(), \emptyset{}, \Gamma)$.    
\end{itemize}

\begin{lemma}
\label{lem:directediseq}
Given a derivation tree $(r,w,0,\emptyset{}) \leadsto{} \mathcal{N}$.
Let $A$ be the derivation tree and $G = \togform{}(A)$ be the directed tree.
The size of the derivation tree $A$ is greater than or equal to the size of the directed tree $G$.
\end{lemma}
\begin{proof}
The proof is by induction on the structure of $A$.
\end{proof}


\begin{definition}[Main and Sub Branch]
\normalfont
Let $G = (V,E)$ be a directed tree.
For each vertex $v \in V$, we say an edge $e \in E(v)$ is a {\em main branch} of $v$ if $\forall  e' \in E(v) \setminus \{ e \}, \#_2(\head{}(e')) < \#_2(\head{}(e))$.
Otherwise, we say the edge is {\em sub branch} of $v$. 
\end{definition}

\begin{definition}[Main Path]
\normalfont
We say a sequence of main branches $\path{}_m = e_1 e_2 \cdots e_n$ is a {\em main path} if $\head{}(e_i) = \tail{}(e_{i+1})$ for $1 \leq i < n$, $\tail{}(e_1)$ is a root, i.e., $\indig{}(\tail{}(e_1)) = 0$, and, for every $e \in E(\head{}(e_{n}))$, $e$ is a sub branch. 
\end{definition}

By the definition of the main path, there is at most one main path in a directed tree.
\begin{lemma}
\label{lem:mainisconstant}
Given a directed tree $G = (V,E)$, which is obtained by $\togform{}((r,w,0,\emptyset) \leadsto{} \mathcal{N})$.
If $G$ has a main path $\path{}_m$, then the length $|\path{}_m|$ is $O(|w|)$.
\end{lemma}
\begin{proof}
By induction on the structure of $r$.
The only interesting case is when $r$ is a repetition, say, $r = {r'}^*$.
We show that ${r'}^*$ consumes at most $O(|w|)$ characters during the whole matching.
Let $n$ be the number of iterations of ${r'}^*$ on $w[p_1..|w|)$ and $({r'}^*, w, p_i, \Gamma_i) \leadsto{} \mathcal{N}_i$ be the $i$-th iteration, where $1 \leq i \leq n$.
Let $e_1 e_2 \cdots e_n$ be the main path.
For $1 \leq i < n$, $\#_2(\tail{}(e_i)) <\#_2(\head{}(e_{i}))$ because if $\tail{}(e_i) = \head{}(e_{i})$, then it means that $r'$ accepts an empty string and so it violates \ltp{} because there are two or more paths to the first alphabet in $r'$ or the next expressions.
Hence, $\#_2(\tail{}(e_i)) < \#_2(\head{}(e_{i}))$ and ${r'}^*$ consumes at most $O(|w|)$ characters.
\end{proof}

\begin{definition}[$\epsilon$ Subtree]
\normalfont
Let $G = (V,E)$.
We say a subtree $G_{\epsilon} = (V_{\epsilon}, E_{\epsilon})$, where $V_{\epsilon} \subseteq V$ and $E_{\epsilon} \subseteq E$, is an {\em $\epsilon$ subtree} if every $e \in E_{\epsilon}$ is a sub branch, 
$|E_\epsilon{}(\groot{}(G_\epsilon))| = 1$, 
$\#_2(\tail{}(e)) = \#_2(\head{}(e))$ or $\#_2(\head{}(e)) = \emptyset{}$ for every $e \in \{ (v,v') \in E_{\epsilon} \mid v \neq \groot(G_\epsilon) \}$, and, for every $v_1, v_2 \in V_{\epsilon}$, where $v_1 \not= v_2$, there exists a sequence of $ E_{\epsilon}$ edges $e_1 e_2 \cdots e_n$ such that $\tail{}(e_1) = v_1$, $\head{}(e_n) = v_2$, $\head{}(e_i) = \tail{}(e_{i+1})$ for $1 \leq i < n$.
\end{definition}

\begin{lemma}
\label{lem:esubtreeisconstant}
Given an $\epsilon$ subtree $G_\epsilon$ in $\togform{}((r,w,p,\Gamma) \leadsto{} \mathcal{N})$.
The size of the $\epsilon$ subtree $|G_\epsilon|$ is constant.
\end{lemma}
\begin{proof}
Suppose that $|G_\epsilon|$ is not constant. 
Then, $r$ contains a repetition ${r'}^*$ and the repetition ${r'}^*$ iterates at least twice because, if not, then the size is $O(|r|)$, i.e., constant.
By the definition of $\epsilon$ subtrees, the repetition ${r'}^*$ does not consume any character and this means that ${r'}$ accepts an empty string.
However, it means that the repetition ${r'}^*$ violates the \ltp{} condition because there are two or more paths to the first character in ${r'}$ or the next expression.
Thus, ${r'}^*$ iterates at most once.
Hence, $|G_\epsilon|$ is constant.
\end{proof}

\begin{lemma}
\label{lem:numofesubtreeisw}
Let $G = \togform{}((r,w,p,\Gamma)\leadsto{}\mathcal{N})$.
Then the number of $\epsilon$ subtrees in $G$ is $O(|w|)$.
\end{lemma}
\begin{proof}
We show that the number of $\epsilon$ subtrees is $O(|\path{}_m|)$, i.e., $O(|w|)$.
The proof is by induction on structure of derivation trees.
Here, we only focuses on the case (\textsc{Repetition}).
In the case of (\textsc{Repetition}), 
let $m$ be the number of iterations of $({r'}^*,w,p',\Gamma') \leadsto{} \mathcal{N}'$.
For the $i$-th iteration, let $(r', w, p_i, \Gamma_i) \leadsto{} \mathcal{N}_i'$, where $1 \leq i \leq m$, $p_1 = p'$, and $\Gamma_1 = \Gamma'$.
Then, by inductive hypothesis, each derivation tree $(r', w, p_i, \Gamma_i) \leadsto{} \mathcal{N}_i'$ satisfies the assertion.
Hence, the assertion holds.
\end{proof}


\begin{lemma}
\label{lem:belongstomore}
Let $G = (V,E) = \togform{}((r, w, 0, \emptyset{}) \leadsto{} \mathcal{N})$.
For all edges $e \in E$, $e$ belongs to either a main path or an $\epsilon$ subtree.
\end{lemma}
\begin{proof}
The proof is by induction on the structure of derivation trees.
\if0
Basis:
Case (\textsc{Set of characters}). 明らか. main brach が１つあるだけなので
Case (\textsc{Set of characters Failure}).明らか

Induction:
Case (\textsc{Concatenation}).仮定より明らか
Case (\textsc{Union}).仮定より明らか
Case (\textsc{Repetition}).展開すると結局仮定より明らか
Case (\textsc{Capturing Group}).仮定より明らか
Case (\textsc{Backreference}).仮定より明らか
Case (\textsc{Backreference Failure}).1つなので明らか
\fi 
\end{proof}

\begin{theorem}
\label{theo:main}
Given a directed tree $G = (V,E)$.
The size of the directed tree $|G|$ is $O(|w|)$.
\end{theorem}
\begin{proof}
By Lemma \ref{lem:belongstomore}, $G$ consists of a main path and $\epsilon$ subtrees.
By Lemma \ref{lem:mainisconstant}, the size of the main path is $O(|w|)$.
By Lemma \ref{lem:esubtreeisconstant}, the size of the $\epsilon$ subtrees is $O(1)$, and by Lemma \ref{lem:numofesubtreeisw}, the number of $\epsilon$ subtrees is $O(|w|)$.
Hence the size of $G$ is $O(|w| \times 1 + 1 \times |w|) = O(|w|)$.
\end{proof}

Finally, we are now ready to proof Theorem \ref{theo:numstlin}.

\begin{proof}{(Proof of Theorem \ref{theo:numstlin})}
Immediate from Lemma \ref{lem:directediseq} and Theorem \ref{theo:main}.
\end{proof}

\section{Insufficiency of Deterministic Regexes}
\label{appendix:insufficiency_of_flashregex}
The details of why \cite{FlashRegex} is insufficient for guaranteeing unambiguity is as follows. The idea of \cite{FlashRegex} for repairing ReDoS-vulnerability is to synthesize so-called ``deterministic'' (also called 1-unambiguous \cite{BRUGGEMANNKLEIN1998182,10.1007/3-540-57273-2_45}) regexes. The definition of deterministic regex is as follows:
\begin{definition}[Deterministic, Definition 2.1 in \cite{BRUGGEMANNKLEIN1998182,10.1007/3-540-57273-2_45}]
\normalfont
A regex $E$ is {\em deterministic} (or {\em 1-unambiguous}) iff, for all words $u,v,w \in \Pi^*$ and all symbols $x,y \in \Pi$, \[
uxv, uyw \in L(E') \land x \not= y \Rightarrow x^\natural \not= y^\natural.
\]
\end{definition}
Here, $\Pi$ is the subscripted alphabet $\{ a_i \mid a \in \Sigma, i \in \mathbb{N}\}$, $x^\natural$ is the character obtained by dropping the subscript of $x \in \Pi$ (e.g., $(a_1)^\natural = a$), and $E'$ is $E$ but with each characters in $E$ annotated with distinct subscripts (e.g., if $E = ab(a|b)c$, then $E$’$ = a_1 b_1 (a_2 | b_2) c_1$). 

Intuitively, a regex $E$ \tchanged{}{is} deterministic iff for any $w \in L(E)$, there is a unique subscripted word $v \in L(E$’$)$ such that $v^\natural = w$. Thus, the vulnerable regex $E = (a^*)^*$ satisfies the definition because its subscripted regex $E$' is $(a_1^*)^*$ and $L(E$’$) = \{\epsilon, a_1, a_1 a_1, ...\}$ (i.e., there are no $uxv, uyw \in L(E$’$)$ such that $x \neq y$). Hence, $(a^*)^*$ is a deterministic regex while it is vulnerable as we have shown in Section \ref{subsec:formalsemantics}. Consequently, synthesizing deterministic regexes is insufficient for guaranteeing ReDoS invulnerability.


\end{document}